\documentclass[times,sort&compress,3p]{elsarticle}
\journal{Journal of Multivariate Analysis}
\usepackage[labelfont=bf]{caption}

\usepackage[utf8]{inputenc}

\usepackage{amsmath,amsfonts,amssymb,amsthm,booktabs,color,epsfig,graphicx,url}

\usepackage[colorlinks,citecolor=blue,urlcolor=blue,pdfauthor=author]{hyperref}
\usepackage{graphicx}
\usepackage{subcaption}

\usepackage[linesnumbered,ruled,vlined]{algorithm2e}

\PassOptionsToPackage{unicode}{hyperref}
\PassOptionsToPackage{naturalnames}{hyperref}

\usepackage{tikz}
\usepackage{pgfplots}
\usepgfplotslibrary{fillbetween}
\tikzset{>=latex}
\pgfplotsset{compat=1.16}
\usepackage{tkz-euclide}
\usepackage{xcolor}
\usetikzlibrary{calligraphy, calc, fit}
\usepackage{ifthen}
\usepackage{tabularx}
\usetikzlibrary{calc,positioning,matrix, decorations.pathreplacing}

\pgfdeclarelayer{bg}
\pgfsetlayers{bg,main}

\theoremstyle{plain}
\newtheorem{theorem}{Theorem}

\newtheorem{proposition}{Proposition}
\newtheorem{lemma}{Lemma}
\newtheorem{corollary}{Corollary}

\theoremstyle{definition}
\newtheorem{definition}{Definition}

\newtheorem{example}{Example}

\newcommand{\R}{{\mathbb{R}}}
\newcommand{\N}{{\mathbb{N}}}

\newcommand{\esp}{{\mathbb{E}}}
\newcommand{\risk}{{\mathcal{R}}}
\newcommand{\prob}{{\mathbb{P}}}
\newcommand{\pkg}[1]{\texttt{#1}}

\begin{document}

\begin{frontmatter}

\title{Functional linear regression with truncated signatures}

\author[1]{Adeline Fermanian \corref{mycorrespondingauthor}}

\address[1]{Sorbonne Université, CNRS, Laboratoire de Probabilités, Statistique et Modélisation, LPSM, F-75005 Paris, France \\
MINES ParisTech, PSL Research University, CBIO-Centre for Computational Biology, F-75006 Paris, France \\
Institut Curie, PSL Research University, F-75005 Paris, France \\
INSERM, U900, F-75005 Paris, France}

\cortext[mycorrespondingauthor]{Corresponding author. Email address: \url{adeline.fermanian@sorbonne-universite.fr}}

\begin{abstract}
We place ourselves in a functional regression setting and propose a novel methodology for regressing a real output on vector-valued functional covariates. This methodology is based on the notion of signature, which is a representation of a function as an infinite series of its iterated integrals. The signature depends crucially on a truncation parameter for which an estimator is provided, together with theoretical guarantees. An empirical study on both simulated and real-world datasets shows that the resulting methodology is competitive with traditional functional linear models, in particular when the functional covariates take their values in a high dimensional space.
\end{abstract}

\begin{keyword} 
Functional data analysis \sep
Linear regression \sep
Signatures.
\MSC[2020] Primary 62R10 \sep
Secondary 60L10
\end{keyword}

\end{frontmatter}

\section{Introduction}
In a classical regression setting, a real output $Y$ is described by a finite number of predictors. A typical example would be to model the price of a house as a linear function of several characteristics such as surface area, number of rooms, location, and so on. These predictors are typically encoded as a vector in $\R^p$, $p \in \N^\ast$. However, some applications do not fall within this setting. For example, in medicine, a classical task consists of predicting the state of a patient (for example, ill or not) from the recording of several physiological variables over some time. The input data is then a function of time and not a vector. Similarly, sound recognition or stock market prediction tasks both consist of learning from time series, possibly multidimensional. Then, a natural idea is to extend the linear model to this more general setting, where one wants to predict from a functional input, of the form $X:[0,1] \to \R^d$, $d\geq 1$.

This casts our problem into the field of functional data analysis and more specifically within the framework of functional linear regression \citep{ramsay1991some,marx1999generalized}. This rich domain has undergone considerable developments in recent decades, as illustrated by the monographs of \citet{ramsay2005functional} and \citet{ferraty2006nonparametric}, and the review by \citet{morris2015functional}. One of the core principles of functional data analysis is to represent input functions on a set of basis functions, for example, splines, wavelets, or the Fourier basis. Another approach also consists in extracting relevant handcrafted features, depending on the field of application. For example, \cite{benzeghiba2007automatic} and \cite{turaga2008machine} provide overviews of learning methods specific to speech and human action recognition, respectively.

In this article, we build on the work of \citet{levin2013learning} and explore a novel approach to linear functional regression, called the signature linear model. Its main strength is that it is naturally adapted to vector-valued functions, which is not the case with most of the methods previously mentioned. Its principle is to represent a function by its signature, defined as an infinite series of its iterated integrals. Signatures date back from the 60s when \citet{chen1958integration} showed that a smooth path can be faithfully represented by its iterated integrals and it has been at the center of rough path theory in the 90s \citep{lyons2007differential,friz2010multidimensional}. Rough path theory has seen extraordinary developments in recent times, and, in particular, has gained attention from the machine learning community. Indeed, signatures combined with (deep) learning algorithms have been successfully applied in various fields, such as characters recognition \citep{yang2015chinese,yang2016deepwriterid,lai2017online,liu2017ps}, human action recognition \citep{li2017lpsnet,yang2017leveraging}, speech emotion recognition \citep{wang2019path}, medicine \citep{arribas2018signature,moore2019using,morrill2019sepsis,morrill2020utilization}, or finance \citep{arribas2020sig}. We refer the reader to \citet{chevyrev2016primer} for an introduction to signatures in machine learning, and to \citet{fermanian2021embedding} for a more recent overview.

We stress again that the main advantage of the signature approach is that it can handle multidimensional input functions, that is, functions $X:[0,1] \to \R^d$ where $d\geq 2$, whereas traditional methods were designed for real-valued functions. Many modern datasets come in this form with a large dimension $d$. Moreover, the signature method requires little assumptions on the regularity of $X$ and encodes nonlinear geometric information, that is, gives rise to interpretable regression coefficients. Finally, it is theoretically grounded by good approximation properties: any continuous function can be approximated arbitrarily well by a linear function of the truncated signature \citep{kiraly2016kernels}.

Since any continuous function of $X$ can be approximated by a linear function on its truncated signature, the estimation of a regression function boils down to the estimation of the coefficients in this scalar product. The truncation order of the signature is therefore a crucial parameter as it controls the complexity of the model. Thus, in our quest for a linear model on the signature, one of the main purposes of our article will be to estimate this parameter. With an estimator of the truncation order at hand, the methodology is complete and the signature linear model can be applied to both simulated and real-world data, demonstrating its good performance for practical applications. 

To summarize, our document is organized as follows. First, in Section \ref{sec:mathematical_framework}, we set the mathematical framework of functional regression and recall the definition of the signature and its main properties. Then, in Section \ref{sec:the_expected_sig_model}, we introduce our model, called `signature linear model', and define estimators of its parameters. Their rates of convergence are given in Section \ref{sec:theoretical_results}. Finally, Section \ref{sec:numerical_methodology} is devoted to the practical implementation of the signature linear model. We conclude by demonstrating its performance on both simulated and real-world datasets in Section \ref{sec:experimental_results}.

For the sake of clarity, the proofs of the mathematical results are postponed to Section \ref{sec:proofs}. The code is completely reproducible and available at \url{https://github.com/afermanian/signature-regression}.

\section{Mathematical framework} \label{sec:mathematical_framework}

\subsection{Functional linear regression} \label{sec:functional_linear_regression}

We place ourselves in a functional linear regression setting with scalar responses: we are given a dataset $D_n =  \{(X_1,Y_1),\dots,(X_n,Y_n)\}$, where the pairs $(X_i, Y_i)$ are independent and identically distributed copies of a random couple $(X,Y)$, where $X$ is a (random) function, $X: [0,1] \to \R^d$, $d \geq 1$, and $Y$ a real random variable. Our goal is to approximate the regression function $f(X)=\esp[Y|X]$ by a parametrized linear function $f_\theta$ and to build an estimator of $\theta$.

In the univariate case, that is when $d=1$, the classical functional linear model \citep{frank1993statistical,hastie1993statistical} writes
\begin{equation} \label{eq:functional_linear_model}
Y = \alpha + \int_0^1 X(t) \beta(t)dt + \varepsilon,
\end{equation}
where $\alpha \in \R$, $\beta:[0,1] \to \R$ and $\varepsilon$ is a random noise. The functional coefficients $\beta$ and the functional covariates $X_i$ are then expanded on basis functions:
\begin{equation} \label{eq:basis_expansion}
 \beta(t) = \sum_{k=1}^K b_k \phi_k(t), \qquad X_i (t) = \sum_{k=1}^K c_{ik} \phi_k(t),
 \end{equation}
where $\phi_1, \dots, \phi_K$ are a set of real-valued basis functions (for example the monomials $1, t, t^2, \dots, t^K$ or the Fourier basis). Equation \eqref{eq:functional_linear_model} can then be rewritten in terms of the $c_{ik}$s and $b_k$s, which brings the problem back to the well-known multivariate linear regression setting. Different approaches can then be used in terms of choice of basis functions and regularization \citep[see][Chapter 15]{ramsay2005functional}. Note that another common approach is functional principal components regression \citep{cardot1999functional,brunel2016non}. The idea is to perform a functional principal components analysis (fPCA) on $X$, which gives a representation of $X$ as a sum of $K$ orthonormal principal components, and to use these as basis functions $\phi_k$s.

In both cases, the functional nature of the problem is dealt with by projecting the functions $X$ on a smaller linear space, spanned by basis functions. This basis expansion is not straightforward to extend to the vector-valued case, that is when $d >1$, the common approach being to expand each coordinate of $X$ independently. This amounts to assuming that there are no interactions between coordinates, which is a strong assumption and not an efficient representation when the coordinates are highly correlated. Moreover, to our knowledge, the only theoretical results in the vector-valued case are found in the domain of longitudinal data analysis \citep{greven2011longitudinal,park2015longitudinal}. In this case, the different coordinates are assumed to be repeated measurements of a quantity of interest on a patient and each coordinate is given a parametric model, in the same spirit as ANOVA models. These parametric models do not apply in the general case when the coordinates may correspond to different quantities such as the evolution of different stocks or the $x$-$y$-$z$ coordinates of a pen trajectory.

The signature approach removes the need to make such assumptions: the focus moves from finding a functional model for $X$ to finding a basis for functions of $X$. In other words, instead of using a basis of functions, we use a basis of functions of functions. In a regression setting, this shift of perspective is particularly adequate since the object of interest is the regression function $f(X)$ and not $X$ itself. The whole approach is based on the signature transformation, which takes as input a function $X$ and outputs an infinite vector of coefficients known to characterize $X$ under some smoothness assumptions. In particular, there are no assumptions on the structure of dependence in the different coordinates of $X$. In other words, the signature is naturally adapted to the vector-valued case.

Before we delve into the signature linear model, we gently introduce the notion of signature and review some of its important properties. 


\subsection{The signature of a path}
\label{sec:sig_def_prop}

We give here a brief presentation of signatures but the reader is referred to \citet{lyons2007differential} or \citet{friz2010multidimensional} for a more involved mathematical treatment with proofs. To follow the vocabulary from rough path theory, we will often call the functional covariate $X: [0,1] \to \R^d$ a path. Our basic assumption is that $X$ is of bounded variation, i.e., it has finite length.
\begin{definition}
		Let $X : [0,1] \rightarrow \R^d, \, t\mapsto (X^1_t, \dots , X^d_t)^\top$. The total variation of $X$  is defined by
		\[\|X \|_{TV} =  \underset{{\mathcal I}}{\textnormal{sup}} \sum_{(t_0,\dots,t_k) \in {\mathcal I}} \|X_{t_i} - X_{t_{i-1}} \| ,\]
		where the supremum is taken over all finite subdivisions of $[0,1]$, and $\| \cdot \|$ denotes the Euclidean norm on $\R^d$. The set of paths of bounded variation is then defined by
		\begin{equation*}
			BV(\R^d) = \big\{ X: [0,1] \to \R^d \, | \, \|X \|_{TV} < \infty \big\}.
		\end{equation*}
\end{definition}
We recall that $BV(\R^d)$ endowed with the norm $\|X\|_{BV(\R^d)} = \| X\|_{TV} + \sup_{t \in [0,1]} \| X_t\| $ is a Banach space. We stress that the basis functions traditionnaly used in functional data analysis are of bounded variation. The assumption that $X \in BV(\R^d)$ is therefore much less restrictive than assuming an expansion such as \eqref{eq:basis_expansion}. This assumption allows to define Riemann-Stieljes integrals along paths, which puts us in a position to define the signature. 
	
	\begin{definition}
		Let $X\in BV(\R^d)$ and $I=(i_1,\dots , i_k) \subset \{1,\dots d\}^k$, $k \geq 1$, be a multi-index of length $k$. The signature coefficient of $X$ along the index $I$ on $[0,1]$ is defined by
		\begin{equation}
		\label{eq:def_signature_S^I}
		S^I(X) =  \idotsint\limits_{0 \leq u_1<\dots <u_k \leq 1} dX^{i_1}_{u_1} \dots dX^{i_k}_{u_k}.
		\end{equation}
		$S^I(X)$ is then said to be a signature coefficient of order $k$.
	\end{definition}
	The signature of $X$ is the sequence containing all signature coefficients, i.e.,
	\begin{equation*}
	S(X)=\big(1,S^{(1)}(X),\dots,S^{(d)}(X),S^{(1,1)}(X),S^{(1,2)}(X),\dots, S^{(i_1,\dots, i_k)} (X),\dots  \big) .
	\end{equation*}
	The signature of $X$ truncated at order $m$, denoted by $S^m(X)$, is the sequence containing all signature coefficients of order lower than or equal to $m$, that is
	\begin{equation*}
	S^m(X)=\big(1,S^{(1)}(X),S^{(2)}(X),\dots, S^{\overbrace{(d,\dots ,d)}^{\textnormal{length } m}}(X) \big) .
	\end{equation*}

	Note that the assumption that $X \in BV(\R^d)$ may be relaxed: the signature may still be defined when the Riemann-Stieljes integrals are not well-defined. For example, the signature of the Brownian motion may be defined via Stratonovitch integrals \citep{le2013stratonovich}. Integrating paths that are not of bounded variation is actually one of the motivations behind the definition of the signature in rough path theory.

\colorlet{lightgray}{gray!10}
\definecolor{colorp1}{HTML}{0173b2}
\definecolor{colorp2}{HTML}{de8f05}
\definecolor{colorp3}{HTML}{029e73}

\global\def\plotrange{0, 1, 4, 8, 11, 15, 18, 19}
\def\myarr {
    (0.131579, 0.5)
    (0.2631579, 1.228486295)
    (1.05263158, 2.37243038 - 0.3)
    (2.10526316, 2.232905675 - 0.3)
    (2.89473684 - 0.1, 1.597253615)
    (3.94736842 - 0.3, 1.09415549 + 0.2)
    (4.7368421 - 0.5, 1.6927394599999999)
    (5.0 -0.3, 2.5)
}

\global\def\xs{{0.131579, 0.2631579, 0.52631578, 0.78947368, 1.05263158, 1.31578948, 1.57894736, 1.84210526, 2.10526316, 2.36842106, 2.63157894, 2.89473684 - 0.1, 3.15789474, 3.42105264, 3.68421052, 3.94736842 - 0.3, 4.21052632, 4.47368422, 4.7368421 - 0.5, 5.0 - 0.3}}

\global\def\ys{{0.5, 1.228486295, 1.7697186200000001, 2.14419923, 2.37243038 - 0.3, 2.47491434 - 0.3, 2.47215338, 2.384649725, 2.232905675 - 0.3, 2.037423455, 1.81870535, 1.597253615, 1.39357049, 1.2281582599999998, 1.121519165, 1.09415549 + 0.2, 1.166569475, 1.359263375, 1.6927394599999999, 2.5}}

\global\def\xlen{19}

\begin{figure*}[ht]
    \centering
        \begin{tikzpicture}[scale=1.1, every node/.style={scale=1.1}]
            \draw[xshift=0cm, name path=one] plot coordinates {
                \myarr
            };
            \draw[xshift=0cm, name path=two] plot coordinates {
                (\xs[0], \ys[0]) (\xs[0], \ys[\xlen]) (\xs[\xlen], \ys[\xlen])
            };
            \tikzfillbetween[
                of=one and two, split, on layer=bg
            ] {pattern=north west lines, colorp1!30};
            \draw[xshift=0cm, name path=one] plot coordinates {
                \myarr
            };
            \draw[xshift=0cm, name path=two] plot coordinates {
                (\xs[0], \ys[0]) (\xs[\xlen], \ys[0]) (\xs[\xlen], \ys[\xlen])
            };
            \tikzfillbetween[
                of=one and two, split, on layer=bg
            ] {pattern=north west lines, colorp2!30};
            \draw[step=0.5, draw=black!30] (0, 0) grid (5, 3) rectangle (0, 0);
            \foreach \i in \plotrange {
                \node at (\xs[\i], \ys[\i])[circle, draw=black, fill=colorp3, inner sep=1.5pt] {};
            }
            \draw[black, ->] (0, 0) -- (0, 3.03);
            \draw[black, ->] (0, 0) -- (5.03, 0);
            \node at (5.3, 0) {\small $X^i$};
            \node at (0, 3.3) {\small $X^j$};
            \pgfmathsetmacro{\xprev}{\xs[0] + 0.05}
            \pgfmathsetmacro{\yprev}{\ys[0]}
            \foreach \i in \plotrange {
                \pgfmathsetmacro{\x}{\xs[\i]}
                \pgfmathsetmacro{\y}{\ys[\i]}
                \node at (\x, \y)[circle, draw=black, fill=colorp3, inner sep=1.5pt] {};
                \draw[colorp3, thick, cap=round] (\xprev, \yprev) -- (\x, \y);
                \global\let\yprev=\y
                \global\let\xprev=\x
            }
            \node at (2.5, 2.0) [right=0.1mm] {\footnotesize $S^{(i, j)}(X)$};
            \node at (1.2, 1.1) [right=0.1mm] {\footnotesize $S^{(j, i)}(X)$};
            
            \draw[black!60, line width=0.3mm, dashed, ->] (\xs[0] + 0.05, \ys[0]) -- (\xs[\xlen] - 0.01, \ys[0]);
            
            \draw[black!60, line width=0.3mm, dashed, ->] (\xs[\xlen], \ys[0]) -- (\xs[\xlen], \ys[\xlen] - 0.1);
            
            \node at (2.5, \ys[0]) [below] {\footnotesize $S^{(i)}(X)$};
            \node at (\xs[\xlen], 1.5)  [right] {\footnotesize $S^{(j)}(X)$};

        \end{tikzpicture}
    
    \caption{Geometric interpretation of the signature coefficients. The terms $S^{(i)}(X)$ and $S^{(j)}(X)$ are the increments of the coordinates $i$ and $j$ respectively. The terms  $S^{(i, j)}$ and $S^{(j, i)}$ correspond to the areas of the blue and orange regions respectively.}
    \label{fig:geometric_signature}
\end{figure*}
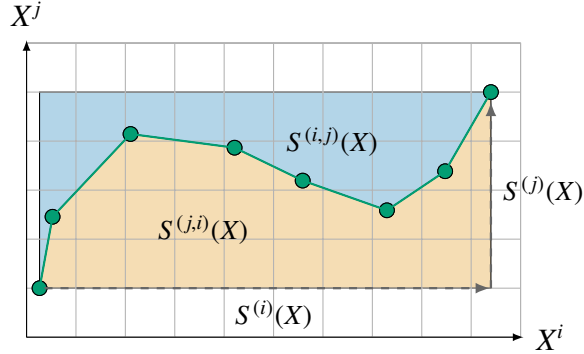

	A crucial feature of the signature is that it encodes the geometric properties of the path, as shown in Fig.~\ref{fig:geometric_signature}. Indeed, coefficients of order 1 correspond to the increments of the path in each coordinate and the coefficients of order 2 correspond to areas outlined by the path. For higher orders of truncation, the signature contains information about the joint evolution of tuples of coordinates. Moreover, it is clear from its definition as an integral that the signature is independent of the time parametrization \citep[][Proposition 7.10]{friz2010multidimensional} and that it is invariant by translation. Therefore, the signature looks at functions as purely geometric objects, without any information about sampling frequency, speed, or travel time, hence the terminology of `paths'. 

	Note that the definition can be extended to paths defined on any interval $[s,t] \subset \R$ by changing the integration bounds in \eqref{eq:def_signature_S^I}. Moreover, it is clear that there are $d^k$ signature coefficients of order $k$. The signature truncated at order $m$ is therefore a vector of dimension $s_d(m)$, where
	\[s_d(m)=\sum_{k=0}^{m} d^k = \frac{d^{m+1}-1}{d-1} \quad \text{if} \quad  d \geq 2, \quad s_d(m)=m+1 \quad \text{if} \quad d=1.\]
	Thus, provided $d \geq 2$, the size of $S^m(X)$ increases exponentially with $m$ and polynomially with $d$---some typical values are presented in Table \ref{tab:truncation_dimensions}. 
	
	\begin{table}[ht]
		\centering
		\caption{Typical values of $s_d(m)$, the size of the signature of a path $X \in BV(\R^d)$ truncated at order $m$.}
		\label{tab:truncation_dimensions}
		\begin{tabular}{lccc}
			& $d=2$  & $ d=3$   &$ d=6$ \\
			\hline
			$m=1$& 2 & 3 & 6 \\
			$m=2$& 6 & 12 & 42 \\
			$m=5$ & 62 &363 & 9330 \\
			$m=7$ &254 & 3279 & 335922\\
			\hline
		\end{tabular}
	\end{table}
	The set of coefficients of order $k$ can be seen as an element of the $k$th tensor product of $\R^d$ with itself, denoted by $(\R^d)^{\otimes k}$. For example, the $d$ coefficients of order $1$ can be written as a vector, and the $d^2$ coefficients of order $2$ as a matrix, i.e.,
	\[\begin{pmatrix}S^{(1)} (X) \\ \vdots \\ S^{(d)} (X) \\ \end{pmatrix} \in \R^d,   \quad \begin{pmatrix}
	S^{(1,1)} (X)& \dots & S^{(1,d)} (X)\\
	\vdots  & \ddots & \vdots \\
	S^{(d,1)}(X) & \dots & S^{(d,d)} (X)\\
	\end{pmatrix} \in \R^{d \times d} \approx (\R^d)^{\otimes 2}.\]
	Similarly, coefficients of order 3 can be written as a tensor of order 3, and so on. Then, $S(X)$ can be seen as an element of the tensor algebra
	\[\R \oplus \R^d \oplus (\R^d)^{\otimes 2} \oplus \dots \oplus (\R^d)^{\otimes k} \oplus \cdots. \]
	Although not fundamental in the present paper, this structure of tensor algebra is the right space to understand properties of the signature \citep{lyons2007differential,friz2010multidimensional}.
	
	Let us give two examples of paths and their signatures.

	\begin{example}
	Let $X$ be a parametrized curve: for any $t \in [0,1]$, $X_t = (t, f(t))$, where $f: \R \to \R$ is a smooth function. Then,
	\begin{align*}
		S^{(1)}(X) = \int_0^1dX^1_t = \int_0^1dt = 1,\quad  S^{(2)}(X) = \int_0^1dX^2_t = \int_0^1f'(t)dt = f(1) - f(0),
	\end{align*}
	where $f'$ denotes the derivative of $f$. Similarly, the signature coefficient along $(1,2)$ is
		\begin{align*}
		S^{(1,2)}(X) &= \int_0^1 \int_0^t dX^1_u dX^2_t = \int_0^1 \Big(\int_0^t du \Big) f'(t)dt = \int_0^1 t f'(t)dt = f(1) - \int_0^1 f(t) dt.
	\end{align*}
	\end{example}

	\begin{example}
	Let $X$ be a $d$-dimensional linear path: \[X_t= \begin{pmatrix}
	X^1_t \\ \vdots \\ X^d_t
	\end{pmatrix} = \begin{pmatrix}
	a_1 + b_1t \\ \vdots \\ a_d + b_dt
	\end{pmatrix}.\] Then, for any index $I=(i_1,\dots,i_k) \subset \{1,\dots, d\}^k$, the signature coefficient along $I$ is
	\begin{equation}
	\label{eq:formula_sig_linear_path}
	S^{(i_1,\dots ,i_k)}(X)=  \idotsint\limits_{0 \leq u_1<\dots <u_k \leq 1} dX^{i_1}_{u_1} \cdots dX^{i_k}_{u_k} = \idotsint\limits_{0 \leq u_1<\dots <u_k \leq 1} b_{i_1} du_1 \cdots b_{i_k} du_k= \frac{b_{i_1} \dots b_{i_k}}{k!}.
	\end{equation}
	It is clear here that the signature is invariant by translation: $S(X)$ depends only on the slope of $X$ and not on the initial position $(a_1, \dots, a_d)^\top \in \R^d$.
	\end{example}

	We now recall a series of properties of the signature that motivate the definition of the signature linear model. The first important property provides a criterion for the uniqueness of signatures.
    \begin{proposition}
        \label{prop:uniqueness}
        Assume that $X \in BV(\R^d)$ contains at least one monotone coordinate, then $S(X)$ characterizes $X$ up to translations and reparametrizations.
    \end{proposition}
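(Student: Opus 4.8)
The plan is as follows. Since, as recalled above, the signature is invariant under translations and under reparametrizations, only the converse inclusion needs an argument: I would show that $S(X)$ determines $X$ up to these two operations. After relabelling coordinates, assume the monotone coordinate is $X^1$, and use translation invariance to normalize $X_0 = 0$. I would then reconstruct from $S(X)$ alone the trace of $X$ written as a function of the value of $X^1$; because $X^1$ is monotone, this data (together with the increment $X^1_1$) recovers the path up to the choice of a ``clock'' sweeping $X^1$ across its range, i.e.\ up to a reparametrization.

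First I would single out, for each $i\in\{2,\dots,d\}$ and each $k\ge 0$, the signature coefficient $S^{(1,\dots,1,i)}(X)$ with $k$ leading $1$'s. Using the one-dimensional identity
\begin{equation*}
\idotsint\limits_{0\le u_1<\dots<u_k<u} dX^1_{u_1}\cdots dX^1_{u_k} \;=\; \frac{(X^1_u)^k}{k!},
\end{equation*}
which follows from the shuffle relations applied to the single coordinate $X^1$ (or from a one-line induction), and then Fubini's theorem, one obtains
\begin{equation*}
S^{(\underbrace{1,\dots,1}_{k},\,i)}(X) \;=\; \frac{1}{k!}\int_0^1 \big(X^1_u\big)^k\, dX^i_u .
\end{equation*}
Hence $S(X)$ determines all the moments $\int_0^1 (X^1_u)^k\, dX^i_u$, $k\ge 0$, for every $i$.

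Next I would encode these moments measure-theoretically. Let $\mu^i$ be the push-forward of the finite signed measure $dX^i_u$ on $[0,1]$ under the monotone map $u\mapsto X^1_u$; it is a finite signed Borel measure carried by the compact interval $[\min_t X^1_t,\ \max_t X^1_t]$, and the display above says precisely that all of its moments are known. Since polynomials are dense in the continuous functions on a compact interval (Stone--Weierstrass) and a finite signed Borel measure on a compact interval is determined by its integrals against continuous functions (Riesz representation), each $\mu^i$ is uniquely recovered from $S(X)$. From the $\mu^i$ one then rebuilds every $X^i$ as a function of the level $x$ of $X^1$, essentially as a suitable version of $x\mapsto \mu^i\big((-\infty,x]\big)$ with the additive constant fixed by $X^i_0=0$; combined with the monotone sweep of $X^1$ from $0$ to $X^1_1$, this yields $X$ up to reparametrization, which is the claim.

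The hard part will be the measure-theoretic bookkeeping in this last reconstruction step, and in particular the behaviour on any sub-interval where $X^1$ is \emph{not strictly} monotone: there the increments of $X^2,\dots,X^d$ collapse onto a single level of $X^1$, contributing an atom to the $\mu^i$, and one must check that the remaining indeterminacy in the parametrization of $X$ on such a sub-interval is genuinely only a reparametrization; this is cleanest when the hypothesis is read as strict monotonicity (for instance when $X^1$ is a time coordinate). One should also verify that, although only the sub-family of coefficients $S^{(1,\dots,1,i)}$ enters the reconstruction, the procedure is well posed, so that two paths satisfying the hypothesis with the same signature are mapped to the same canonical representative and are therefore translation- and reparametrization-equivalent.
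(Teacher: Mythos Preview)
The paper does not actually supply a proof of this proposition: it is stated as a known fact from rough path theory, with \citet{hambly2010uniqueness} cited for the sharp (necessary and sufficient) criterion in terms of tree-like equivalence. So there is no argument in the paper to compare against. The usual route to the statement is to invoke Hambly--Lyons (the signature determines a bounded-variation path up to tree-like equivalence) and then observe that a path with a \emph{strictly} monotone coordinate admits no nontrivial tree-like excursions, so tree-like equivalence collapses to reparametrization.

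Your approach is genuinely different and more elementary: rather than going through the heavy Hambly--Lyons machinery, you isolate the sub-family $S^{(1,\dots,1,i)}$, recognise these as the moments of the pushforward of $dX^i$ under the monotone map $u\mapsto X^1_u$, and recover each $X^i$ as a function of the level of $X^1$ via Stone--Weierstrass and Riesz. For a \emph{strictly} monotone coordinate (in particular for the time-augmented paths the paper actually uses) this works and gives a clean self-contained proof; the computation $S^{(1,\dots,1,i)}(X)=\tfrac{1}{k!}\int_0^1 (X^1_u)^k\,dX^i_u$ and the moment-determinacy step are both correct.

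The reservation you raise about weakly monotone $X^1$ is not merely a technicality, though: it is a genuine obstruction to the statement as literally worded, not just to your method. If $X^1$ is constant on some sub-interval, the remaining coordinates may perform a tree-like excursion there (go out along a path and retrace it) without altering the full signature, and such an excursion is \emph{not} a reparametrization of the path that stays put. Thus for merely non-strict monotonicity one only gets uniqueness up to tree-like equivalence, which is exactly what Hambly--Lyons provides; the proposition should be read with ``monotone'' meaning ``strictly monotone'', consistently with the time-augmentation the paper uses immediately afterwards. Your proof then covers precisely the case that is needed.
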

    This is a sufficient condition, a necessary one has been derived by \citet{hambly2010uniqueness} and is based on the construction of an equivalence relation between paths, called tree-like equivalence. For any path $X \in BV(\R^d)$, the time-augmented path $\tilde{X}_t = (X_t,t)^\top \in BV(\R^{d+1})$ satisfies the assumption of Proposition \ref{prop:uniqueness}, which ensures signature uniqueness. Enriching the path with new dimensions is actually a classic part of the learning process when signatures are used, and is discussed by \citet{fermanian2021embedding} and \citet{morrill2020generalised}. We will always use this time-augmentation transformation before computing signatures.

    The next proposition states that the signature linearizes functions of $X$ and is the core motivation of the signature linear model. We refer the reader to \citet[]{levin2013learning}, Theorem 3.1, for a proof in a similar setting.
    \begin{proposition}
        \label{prop:linear_approx}
        Let $D \subset BV(\R^d)$ be a compact set of paths that such that, for any $X \in D$, $X_0=0$, and denote by $\tilde{X}= (X_t, t)^\top_{t \in [0,1]}$ the associated time-augmented path. Let $f : D \rightarrow \R$ be a continuous function. Then, for every $\varepsilon >0$, there exist $m^\ast \in \N$, $\beta^\ast \in \R^{s_d(m^\ast)}$, such that, for any $X\in D$,
        \[\big|f(X) - \langle \beta^\ast, S^{m^\ast}(\tilde{X}) \rangle \big| \leq \varepsilon,\]
        where $\langle \cdot,\cdot \rangle$ denotes the Euclidean scalar product on $\R^{s_d(m^\ast)}$.
    \end{proposition}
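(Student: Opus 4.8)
The plan is to read the statement as an instance of the Stone--Weierstrass theorem. Fix on $D$ a topology for which $D$ is compact and $f$ is continuous (e.g.\ the one induced by $\|\cdot\|_{BV(\R^d)}$), and introduce the family of real-valued functions on $D$
\begin{equation*}
\mathcal{A} = \big\{\, X \mapsto \langle \beta, S^m(\tilde{X})\rangle \ : \ m \in \N,\ \beta \in \R^{s_d(m)} \,\big\}.
\end{equation*}
The whole argument consists in checking that $\mathcal{A}$ is a subalgebra of $C(D,\R)$ that contains the constant functions and separates the points of $D$; the Stone--Weierstrass theorem then yields density of $\mathcal{A}$ in $C(D,\R)$ for the uniform norm, which is precisely the assertion once the definition of $\mathcal{A}$ is unfolded.

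First I would check $\mathcal{A}\subseteq C(D,\R)$: for each fixed $m$ the map $X\mapsto S^m(\tilde X)$ is continuous (indeed locally Lipschitz) for the $BV$ topology, which follows from the standard factorial bounds on iterated Riemann--Stieltjes integrals together with the continuity of $X\mapsto \tilde X$ from $BV(\R^d)$ into $BV(\R^{d+1})$; see \citet{friz2010multidimensional}. That $\mathcal{A}$ is a linear subspace is immediate (functionals of different truncation orders are compared after zero-padding the shorter coefficient vector), and it contains the constants since the order-zero coefficient of every signature equals $1$. The one genuinely structural point is stability under products, which is the shuffle identity: for multi-indices $I,J$,
\begin{equation*}
S^I(\tilde{X})\, S^J(\tilde{X}) = \sum_{K} S^K(\tilde{X}),
\end{equation*}
where $K$ runs over all multi-indices obtained by interleaving $I$ and $J$. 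Hence a product of two elements of $\mathcal{A}$ is again a linear functional of a (higher-order) truncated signature, i.e.\ again in $\mathcal{A}$.

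It remains to prove that $\mathcal{A}$ separates points. Let $X\neq X'$ in $D$. Their time-augmented versions $\tilde X,\tilde X'$ have a strictly increasing last coordinate, so Proposition \ref{prop:uniqueness} applies: if $S(\tilde X)=S(\tilde X')$ then $\tilde X$ and $\tilde X'$ agree up to translation and reparametrization. But the last coordinate equals the identity $t\mapsto t$, which forces the reparametrization to be trivial, and $\tilde X_0=(0,0)=\tilde X'_0$ (using $X_0=X'_0=0$) forces the translation to be trivial; hence $\tilde X=\tilde X'$, contradicting $X\neq X'$. So $S(\tilde X)$ and $S(\tilde X')$ differ in some coordinate of some finite order $m$, and the corresponding coordinate functional lies in $\mathcal{A}$ and separates $X$ from $X'$. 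Applying Stone--Weierstrass, for $\varepsilon>0$ we obtain $m^\ast\in\N$ and $\beta^\ast\in\R^{s_d(m^\ast)}$ with $\sup_{X\in D}\big|f(X)-\langle\beta^\ast,S^{m^\ast}(\tilde X)\rangle\big|\leq\varepsilon$, which is the claim.

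The main obstacle is the shuffle identity invoked above: it is what upgrades the linear span of signature coefficients to an \emph{algebra}, and without it Stone--Weierstrass cannot be applied. Proving it amounts to an induction on $|I|+|J|$ using integration by parts, equivalently Fubini over the simplex $0\leq u_1<\dots<u_{|I|+|J|}\leq 1$; I would either carry out this short induction or cite it from \citet{lyons2007differential}. A secondary, bookkeeping-level concern is to pin down exactly which topology on $D$ is intended and to verify continuity of the truncated signature for it, but this is routine and holds for both the $BV$ and the $p$-variation topologies.
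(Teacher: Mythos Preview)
Your proposal is correct and follows precisely the approach indicated in the paper, which does not give a detailed proof but simply states that the proposition ``is a consequence of the Stone--Weierstrass theorem'' and refers to \citet{levin2013learning}, Theorem~3.1. Your sketch fleshes out exactly this argument---verifying that linear functionals on truncated signatures form a point-separating subalgebra of $C(D,\R)$ containing the constants (via the shuffle identity and Proposition~\ref{prop:uniqueness}), and then invoking Stone--Weierstrass---so there is nothing to add.
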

    This proposition is a consequence of the Stone-Weierstrass theorem. The classical Weierstrass approximation theorem states that every real-valued continuous function on a closed interval can be uniformly approximated by a polynomial function. Linear forms on the signature can, therefore, be thought of as the equivalent of polynomial functions for paths. The assumption that $X_0 = 0$ is due to the fact that signatures are invariant by translation: no information about the initial position of the path is contained in signatures.

    Finally, the following bound on the norm of the truncated signature allows to control the rate of decay of signature coefficients of high order---see \citet[][Lemma 5.1]{lyons2014rough} for a proof.

    \begin{proposition}
     \label{prop:exp_decay}
     Let $X:[0,1] \to \R^d$ be a path in $BV(\R^d)$. Then, for any $m \geq 0$, 
       \[\| S^m(X) \| \leq \sum_{k=0}^{m} \frac{ \| X\|_{TV}^k}{k!} \leq e^{\| X\|_{TV}}.\]
    \end{proposition}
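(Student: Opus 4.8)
The plan is to bound each homogeneous level of the signature separately and then recombine, using that the Euclidean norm on $\R^{s_d(m)}$ restricts on each tensor component $(\R^d)^{\otimes k}$ to the Hilbert--Schmidt norm, which is multiplicative under $\otimes$ and compatible with Riemann--Stieltjes integration against paths of bounded variation.

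First I would introduce the running signature: for $t\in[0,1]$ and $k\ge 1$, set
\[S^{(k)}_{0,t}(X)=\idotsint\limits_{0\le u_1<\dots<u_k\le t}dX_{u_1}\otimes\dots\otimes dX_{u_k}\in(\R^d)^{\otimes k},\]
with $S^{(0)}_{0,t}(X)=1$, so that the order-$k$ block of $S(X)$ is $S^{(k)}_{0,1}(X)$. Isolating the last (outermost) integration variable gives the recursion
\[S^{(k)}_{0,t}(X)=\int_0^t S^{(k-1)}_{0,s}(X)\otimes dX_s,\]
which is well defined since $X\in BV(\R^d)$ and $s\mapsto S^{(k-1)}_{0,s}(X)$ is continuous of bounded variation.

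Next I would pass to norms. Writing $V_t=\|X\|_{TV}$ restricted to $[0,t]$ for the (nondecreasing) total-variation function of $X$, with $V_1=\|X\|_{TV}$, and using $\|a\otimes b\|=\|a\|\,\|b\|$ together with the standard estimate $\big\|\int_0^t f_s\otimes dX_s\big\|\le\int_0^t\|f_s\|\,dV_s$, the recursion yields $\|S^{(k)}_{0,t}(X)\|\le\int_0^t\|S^{(k-1)}_{0,s}(X)\|\,dV_s$. I would then show by induction on $k$ that $\|S^{(k)}_{0,t}(X)\|\le V_t^k/k!$ for all $t$: the case $k=0$ is trivial, and assuming it at level $k-1$,
\[\|S^{(k)}_{0,t}(X)\|\le\int_0^t\frac{V_s^{k-1}}{(k-1)!}\,dV_s\le\frac{V_t^k}{k!},\]
the last inequality being the change of variables for the monotone integrator $V$. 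Evaluating at $t=1$ gives $\|S^{(k)}(X)\|\le\|X\|_{TV}^k/k!$. Finally, since $\|S^m(X)\|^2=\sum_{k=0}^m\|S^{(k)}(X)\|^2$ for the Euclidean norm and $\sqrt{\sum a_k^2}\le\sum a_k$ for nonnegative $a_k$,
\[\|S^m(X)\|\le\sum_{k=0}^m\|S^{(k)}(X)\|\le\sum_{k=0}^m\frac{\|X\|_{TV}^k}{k!}\le\sum_{k=0}^\infty\frac{\|X\|_{TV}^k}{k!}=e^{\|X\|_{TV}},\]
which is the claim.

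The step I expect to be the main obstacle is making the Riemann--Stieltjes manipulations rigorous when $X$ (hence $V$) is merely of bounded variation rather than smooth: precisely, justifying $\big\|\int f\otimes dX\big\|\le\int\|f\|\,dV$ and $\int_0^t V_s^{k-1}\,dV_s\le V_t^k/k$, which requires some care with left-limits and the total-variation measure. The cleanest way around this is either to reparametrize $X$ to constant speed — licit because the signature is invariant under reparametrization, so that $V_t=\|X\|_{TV}\,t$ and the integral becomes elementary — or to approximate $X$ by smooth paths and pass to the limit using continuity of the signature; everything else is a routine induction.
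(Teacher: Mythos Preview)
Your argument is correct and is in fact the classical proof of the factorial decay of signature levels. Note, however, that the paper does not give its own proof of this proposition: it simply states the result and refers the reader to \citet[Lemma 5.1]{lyons2014rough}. What you have written is essentially the argument found in that reference (the induction on $k$ via the recursion $S^{(k)}_{0,t}=\int_0^t S^{(k-1)}_{0,s}\otimes dX_s$, bounding by the total-variation integrator, and then summing over levels). Your discussion of the only delicate point --- the Riemann--Stieltjes inequalities $\big\|\int f\otimes dX\big\|\le\int\|f\|\,dV$ and $\int_0^t V_s^{k-1}\,dV_s\le V_t^k/k$ when $X$ is merely BV --- is accurate, and both workarounds you propose (reparametrize to arclength, or approximate by smooth paths and use continuity of the signature) are standard and sufficient.
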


\section{The signature linear model}
\label{sec:the_expected_sig_model}

\subsection{Presentation of the model}
\label{sec:def_model}

We are now in a position to present the signature linear model. Recall that our goal is to model the relationship between a real random variable $Y \in \R$ and a random input path $X \in BV(\R^d)$. Without loss of generality, we now assume that $d \geq 2$ and that $X$ has been augmented with time---in other words, one coordinate of $X$ is $t \mapsto t$. Proposition \ref{prop:linear_approx} then motivates the following model which was first introduced in a slightly different form by \cite{levin2013learning}: we assume that there exists $m\in \N$, $\beta_m^\ast \in \R^{s_d(m)}$, such that
\begin{equation}
\label{eq:model_def}
 \esp \left[Y | X \right] = \big\langle \beta_m^\ast, S^{m}(X) \big\rangle, \quad \text{Var}(Y | X) \leq \sigma^2 < \infty.
\end{equation}
We consider throughout the article the smallest $m^\ast \in \N$ such that there exists $\beta^\ast_{m^\ast} \in \R^{s_d(m^\ast)}$ satisfying 
\[\esp \left[Y | X \right] = \big\langle \beta_{m^\ast}^\ast, S^{m^\ast}(X) \big\rangle.\] 
In other words, we assume a regression model, where the regression function is a linear form on the signature. A few comments are in order. 

From an approximation point of view, this model is very general. Indeed, by Proposition \ref{prop:linear_approx}, the only requirements for model \eqref{eq:model_def} to be valid are the continuity of the regression function $f(X) = \esp[Y |X]$ and the fact that $S(X)$ must characterize the random path $X$. The latter is ensured by using a time augmentation, that is, considering $\tilde{X}_t= (X_t, t)$, and by fixing the initial value, for example $X_0=0$. Then, under the assumption that the data is in a compact set---which will be guaranteed later on by assumption $(H_K)$---, for any threshold $\varepsilon >0$, there exist $m^\ast \in \N$ and $\beta^\ast_{m^\ast} \in \R^{s_d(m^\ast)}$ such that
\[\big| \esp[Y|X] - \langle \beta^\ast_{m^\ast}, S^{m^\ast}(X) \rangle \big| \leq \varepsilon.\]
In other words, we know that (the first part of) model \eqref{eq:model_def} is true up to an error of $\varepsilon$. A striking fact is that no assumption that $\esp[Y|X]$ is linear in $X$ is needed, contrary to functional models of the form \eqref{eq:functional_linear_model}.

It is instructive to further compare this model to the functional model \eqref{eq:functional_linear_model}. Much fewer assumptions on $X$ are needed: it is only assumed to be of finite variation, whereas in \eqref{eq:functional_linear_model} it has to have a finite basis expansion. Moreover, our model is directly adapted to the vector-valued case. Finally, it depends directly on a finite vector $\beta^\ast_{m^\ast}$, whereas $\eqref{eq:functional_linear_model}$ is written in terms of a function $\beta$, which must itself be written on basis functions. Note that the choice of basis needs to be adapted to each particular application, whereas the signature linear model only depends on two parameters. In a nutshell, it is a more general model with fewer hyperparameters.

It can be noticed that, since the first term of signatures is always equal to 1, this regression model contains an intercept: when $m^\ast=0$, \eqref{eq:model_def} is a constant model. There are two unknown quantities in model \eqref{eq:model_def}: $m^\ast$ and $\beta^\ast_{m^\ast}$. The parameter $m^\ast$ is the truncation order of the signature of $X$ and controls the model size, whereas $\beta^\ast_{m^\ast}$ is the vector of regression coefficients, whose size $s_d(m^\ast)$ depends on $m^\ast$.


The signature truncation order $m^\ast$ is a key quantity in this model and influences the rest of the study. Indeed, it controls the number of coefficients and therefore the computational feasibility of the whole method. However, it is in general little discussed in the literature and small values are picked arbitrarily, regardless of the model used on top of signatures. For example, \cite{liu2017ps} consider values of $m$ up to 2, \cite{yang2015chinese} up to 3, \citet{arribas2018signature} and \citet{lai2017online} up to 4, \cite{yang2016deepwriterid} up to 5 , and \cite{yang2017leveraging} up to 8. Thus, one of our main objectives is to establish a rigorous procedure to estimate $m^\ast$, and, to this end, we define a consistent estimator of $m^\ast$. As we will see later, a simple estimator of $\beta^\ast_{m^\ast}$, and therefore of the regression function, is then also obtained.

\subsection{Estimating the truncation order}
\label{sec:def_estimators}

Let $D_n =  \{(X_1,Y_1),\dots,(X_n,Y_n)\}$ be a set of i.i.d.~observations drawn according to the law of $(X,Y)$. We use the approach of penalized empirical risk minimization. For the moment, let us fix a certain truncation order $m \in \N$, and let $\alpha>0$ denote a fixed positive number. Then, the ball in $\R^{s_d(m)}$ of radius $\alpha$ centered at $0$ is denoted by
\[B_{m,\alpha} =\big\{ \beta \in \R^{s_d(m)} \, | \, \left\| \beta \right\| \leq \alpha \big\},\]
where $\| \cdot \|$ stands for the Euclidean norm, whatever the dimension. By a slight abuse of notation,  the sequence $( B_{m,\alpha})_{m \in \N}$ can be seen as a nested sequence of balls, i.e., $B_{0,\alpha} \subset B_{1,\alpha} \subset \dots \subset B_{m,\alpha} \subset B_{m+1,\alpha} \subset \cdots .$ From now on, we will only consider coefficients within these balls. Therefore, we assume that the true coefficient $\beta^\ast_{m^\ast}$ lies within such a ball, i.e., we make the assumption
\begin{itemize}
\item[$(H_{\alpha})$] There exists $\alpha >0$ such that $\beta^\ast_{m^\ast} \in B_{m^\ast,\alpha}. $
\end{itemize}
On the one hand, for a fixed truncation order $m$, the theoretical risk is defined by ${\mathcal{R}}_{m}(\beta ) = \esp \big(Y - \big\langle \beta, S^m(X) \big\rangle \big)^2.$ Then, the minimal theoretical risk for a certain truncation order $m$, is defined by 
\[L(m) = \underset{\beta \in B_{m,\alpha}}{\inf} {\mathcal{R}}_{m}(\beta) = {\mathcal{R}}_{m}(\beta^\ast_{m}),\]
where $\beta^\ast_{m} \in  \text{argmin }_{\beta \in B_{m,\alpha}} \risk_{m}(\beta)$ (note that the existence of $\beta^\ast_m$ is ensured by convexity of the problem). Since the sets $(B_{m,\alpha})_{m \in \N}$ are nested, $L$ is a decreasing function of $m$. Its minimum is attained at $m=m^\ast$, and, provided $m \geq m^\ast$, $L(m)$ is then constant and equal to 
\[\risk(\beta^\ast_{m^\ast}) = \esp \big(Y - \big\langle \beta^\ast_{m^\ast}, S^{m^\ast}(X) \big\rangle \big)^2 = \esp \big( \textnormal{Var}(Y | X) \big) \leq \sigma^2.\] 
On the other hand, the empirical risk with signature truncated at order $m$ is defined by $\widehat{\risk}_{m,n}(\beta) = \frac{1}{n}\sum_{i=1}^{n} \big(Y_i -\big\langle \beta, S^m(X_i)\big\rangle \big)^2,$ where $\beta \in B_{m,\alpha}$. The minimum of $\widehat{\risk}_{m,n}$ over $B_{m,\alpha}$ is denoted by $\widehat{L}_n(m)$ and defined as
\begin{equation*}
\widehat{L}_n(m) = \underset{\beta \in B_{m,\alpha}}{\min} \widehat{\risk}_{m,n}(\beta)=\widehat{\risk}_{m,n}(\widehat{\beta}_m),
\end{equation*}
where $\widehat{\beta}_m$ denotes a point in $B_{m,\alpha}$ where the minimum is attained. Note that $\beta \mapsto \widehat{\risk}_{m,n}(\beta)$ is a convex function so $\widehat{\beta}_m$ exists. We point out that minimizing $\widehat{\risk}_{m,n}$ over $B_{m,\alpha}$ is equivalent to performing a Ridge regression with a certain regularization parameter which depends on $\alpha$. 

To summarize, for a fixed truncation order $m$, a Ridge regression gives the best parameter $\widehat{\beta}_m$ to model $Y$ as a linear form on the signature of $X$ truncated at order $m$. Recall that our goal is to find a truncation order $\widehat{m}$ close to the true one $m^\ast$. Since the $(B_{m,\alpha})_{m \in \N}$ are nested, the sequence $(\widehat{L}_n(m))_{m \in \N}$ decreases with $m$. Indeed, increasing $m$ makes the set of parameters larger and therefore decreases the empirical risk. An estimator of $m^\ast$ can then be defined by a trade-off between this decreasing empirical risk and an increasing function that penalizes the number of coefficients:
 \begin{equation*}
\label{eq:def_hatm}
\widehat{m} = \text{min} \Big( \underset{m \in \N}{\text{argmin}} \big(\widehat{L}_n(m) + \text{pen}_n(m)\big) \Big),
\end{equation*}
where $m \mapsto \text{pen}_n(m)$ is an increasing function of $m$ that will be defined in Theorem \ref{th:main_bound_hatm}. If the minimum is reached by several values, we set $\widehat{m}$ to the smallest one. The procedure is illustrated in Fig.~\ref{fig:toy_example_pen_curve} for a toy dataset which will be described in Section \ref{sec:toy_example}.

\begin{figure}[ht]
    \centering
    \includegraphics[width=.5\textwidth]{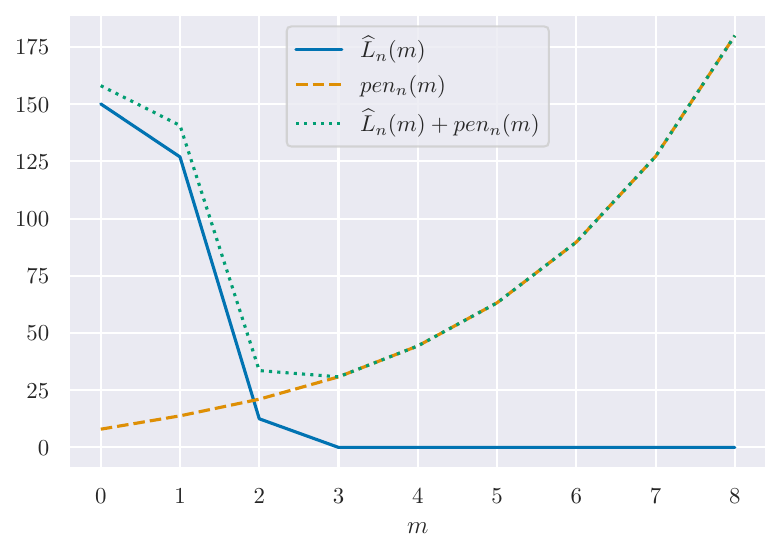} 
    \caption{The functions $m \mapsto \widehat{L}_n(m)$ (blue solid curve), $m \mapsto \text{pen}_n(m)$ (orange dashed curve) and $m \mapsto \widehat{L}_n(m) + \text{pen}_n(m)$ (green dotted curve) for a toy dataset. In this case, the value of $\widehat{m}$ is $\widehat{m}=3$.}
    \label{fig:toy_example_pen_curve}
\end{figure}

Now that we have an estimate of $m^\ast$, which is a key ingredient in establishing the whole process of the expected signature method, and before presenting the whole procedure, we justify the estimator by some theoretical results.

\section{Performance bounds}
\label{sec:theoretical_results}

In this section, we show that it is possible to calibrate a penalization that ensures exponential convergence of $\widehat{m}$ to $m^\ast$. In addition to $(H_{\alpha})$, we need the following assumption:
\begin{itemize}
	\item[$(H_K)$] there exists $K_Y > 0$ and $K_X > 0$ such that almost surely $|Y| \leq K_Y$ and $\|X \|_{TV} \leq K_X$.
\end{itemize}
The assumption $(H_K)$ says that the trajectories have a length uniformly bounded by $K_X$ and that the responses $Y$ live in a compact set. These assumptions are quite different from the ones in functional linear models of the form \eqref{eq:functional_linear_model}. Indeed, concerning the regularity of $X$, they typically assume that $X$ is in $L^2$ and that its coefficients $c_{ik}$ in the basis expansion \eqref{eq:basis_expansion} decrease sufficiently fast. We therefore trade an assumption that the functions have a nice basis decomposition for a compactness property, which seems a reasonable choice for practical applications. For example, any discrete-time time-series model observed over a finite horizon, such as ARIMA, satisfies $(H_K)$. Any continuously differentiable function with bounded derivative also satisfies $(H_K)$. Note also that $(H_K)$ does not depend strongly on the dimension $d$, whereas the assumptions of functional linear models become very stringent in this case; they typically assume an additive relationship between $Y$ and the different coordinates of $X$. We shall also use the constant $K$, defined by
\begin{equation}
\label{eq:def_K}
K=2 (K_Y + \alpha  e^{K_X}) e^{K_X}.
\end{equation}
The main result of the section is the following.

\begin{theorem}
	\label{th:main_bound_hatm}
	Let $K_{\textnormal{pen}} > 0$, $0 < \rho < \frac{1}{2}$, and
	\begin{equation}
	\label{eq:def_pen}
	\textnormal{pen}_n(m) = K_{\textnormal{pen}} n^{-\rho} \sqrt{s_d(m)}.
	\end{equation}
	Let $n_0$ be the smallest integer satisfying
	\begin{align}
	\label{eq:condition_n_0_thm}
	(n_0)^{\tilde{\rho}} \geq  & (432 K \alpha \sqrt{\pi} + K_{\textnormal{pen}})\Big( \frac{2\sqrt{s_d(m^\ast+1)}}{L(m^\ast -1) - \sigma^2}
	 + \frac{\sqrt{2s_d(m^\ast+1)}}{K_{\textnormal{pen}}\sqrt{d^{m^\ast+1}}} \Big) ,
	\end{align}
	where $\tilde{\rho}=\min(\rho, \frac{1}{2} - \rho)$. Then, under the assumptions $(H_{\alpha})$ and $(H_K)$, for any $n \geq n_0$,
	\begin{equation*}
	\label{eq:prob_hatm_neq_mast}
	\prob \left( \widehat{m} \neq m^\ast\right) \leq C_1 \exp \left(- C_2n^{1-2\rho}\right),
	\end{equation*}
	where the constants $C_1$ and $C_2$ are defined by 
	\begin{equation}
	\label{eq:def_C_1}
	C_1 =  74 \sum_{m>0} e^{-C_3s_d(m)} + 148 m^\ast, \quad C_3 =\frac{K_{\textnormal{pen}}^2 d^{m^\ast+1}}{128s_d(m^\ast+1)(72K^2\alpha^2 + K_Y^2)},
	\end{equation}
	and
	\begin{equation}
	\label{eq:def_C_2}
		C_2 = \frac{1}{16(1152 K^2 \alpha^2 + K_Y^2)} \min \Big(\frac{K_{\textnormal{pen}}^2  d^{m^\ast+1}}{8s_d(m^\ast+1)}, L(m^\ast-1) - \sigma^2\Big).
	\end{equation}
\end{theorem}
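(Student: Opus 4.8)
The plan is to bound separately the two ways $\widehat m$ can differ from $m^\ast$, namely underestimation $\{\widehat m<m^\ast\}$ and overestimation $\{\widehat m>m^\ast\}$, and to reduce both to fluctuations of the minimized empirical risk. Put
\[\phi_m:=\sup_{\beta\in B_{m,\alpha}}\big|\widehat{\risk}_{m,n}(\beta)-\risk_m(\beta)\big|,\]
so that $|\widehat L_n(m)-L(m)|\le\phi_m$ for every $m$. On $\{\widehat m=m\}$ the definition of $\widehat m$ gives $\widehat L_n(m)+\textnormal{pen}_n(m)\le\widehat L_n(m^\ast)+\textnormal{pen}_n(m^\ast)$. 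If $m<m^\ast$, then, using that $L$ is nonincreasing with $L(m)\ge L(m^\ast-1)$, that $L(m^\ast)\le\sigma^2$, and that $\widehat L_n(m)\ge L(m)-\phi_m$ while $\widehat L_n(m^\ast)\le L(m^\ast)+\phi_{m^\ast}$, one obtains $\phi_m+\phi_{m^\ast}\ge L(m^\ast-1)-\sigma^2-\textnormal{pen}_n(m^\ast)$. If $m>m^\ast$, then $L(m)=L(m^\ast)$ and $\textnormal{pen}_n$ is nondecreasing, and the same manipulations give $\phi_m+\phi_{m^\ast}\ge\textnormal{pen}_n(m)-\textnormal{pen}_n(m^\ast)$. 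The overestimation gap is smallest at $m=m^\ast+1$, where, since $s_d(m^\ast+1)-s_d(m^\ast)=d^{m^\ast+1}$,
\[\textnormal{pen}_n(m^\ast+1)-\textnormal{pen}_n(m^\ast)=K_{\textnormal{pen}}n^{-\rho}\big(\sqrt{s_d(m^\ast+1)}-\sqrt{s_d(m^\ast)}\big)\ \ge\ \tfrac12 K_{\textnormal{pen}}n^{-\rho}\,\frac{d^{m^\ast+1}}{\sqrt{s_d(m^\ast+1)}}\,;\]
this is the origin of the ratio $d^{m^\ast+1}/s_d(m^\ast+1)$ appearing in $C_2$ and $C_3$.

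\textbf{Concentration of $\phi_m$.} By Proposition~\ref{prop:exp_decay} and $(H_K)$ we have $\|S^m(X)\|\le e^{K_X}$ almost surely, and together with $(H_\alpha)$ each loss $(Y-\langle\beta,S^m(X)\rangle)^2$ is bounded by $(K_Y+\alpha e^{K_X})^2$ and is Lipschitz in $\beta$ over $B_{m,\alpha}$ with constant $K=2(K_Y+\alpha e^{K_X})e^{K_X}$. A bounded-differences (McDiarmid) inequality then gives sub-Gaussian concentration of $\phi_m$ around its mean, and $\esp\phi_m$ is bounded by $c\,K\alpha/\sqrt n$ either by expanding the square and controlling the empirical averages of $Y^2$, of $Y\,S^m(X)$ and of $S^m(X)S^m(X)^\top$, or by symmetrization and contraction (the Rademacher complexity of $\{\langle\beta,S^m(\cdot)\rangle:\beta\in B_{m,\alpha}\}$ is at most $\alpha e^{K_X}/\sqrt n$). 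Altogether one gets a deviation bound of the form $\prob(\phi_m\ge\varepsilon)\le 2\exp\!\big(-c\,n(\varepsilon-c'K\alpha/\sqrt n)^2\big)$ with constants depending only on $K,\alpha,K_Y$; the variance-type quantities $72K^2\alpha^2+K_Y^2$ and $1152K^2\alpha^2+K_Y^2$ in $C_2,C_3$ emerge here, and the $\sqrt\pi$ in \eqref{eq:condition_n_0_thm} comes from the Gaussian comparison used to bound $\esp\phi_m$. For the overestimation direction it is convenient to phrase this instead as a control of the risk decrease $\widehat L_n(m^\ast)-\widehat L_n(m)$, which by a projection / $\chi^2$-type argument concentrates at scale $(s_d(m)-s_d(m^\ast))/n$ with a sub-exponential tail whose variance proxy involves the number $d^{m^\ast+1}$ of coordinates added at the first level.

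\textbf{Assembling the bound.} For underestimation there are only the $m^\ast$ indices $m\in\{0,\dots,m^\ast-1\}$; choosing $n_0$ so that $\textnormal{pen}_n(m^\ast)\le\tfrac12(L(m^\ast-1)-\sigma^2)$ and $c'K\alpha/\sqrt n\le\tfrac18(L(m^\ast-1)-\sigma^2)$, the deterministic reduction forces $\max(\phi_m,\phi_{m^\ast})\gtrsim L(m^\ast-1)-\sigma^2$, so $\prob(\widehat m<m^\ast)\le 4m^\ast\exp\!\big(-c\,n(L(m^\ast-1)-\sigma^2)^2\big)$; since $n\ge n^{1-2\rho}$ this is absorbed into the $148m^\ast$ term of $C_1$ and the $L(m^\ast-1)-\sigma^2$ branch of the minimum defining $C_2$. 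For overestimation, take the level-$m$ threshold $t_m=\tfrac12(\textnormal{pen}_n(m)-\textnormal{pen}_n(m^\ast))$, the remaining half absorbing $\esp\phi_m+\esp\phi_{m^\ast}$ once $n\ge n_0$; then $\prob(\phi_m+\phi_{m^\ast}\ge\textnormal{pen}_n(m)-\textnormal{pen}_n(m^\ast))\le 4\exp(-c\,n t_m^2)$, and writing $\sqrt{s_d(m)}-\sqrt{s_d(m^\ast)}=\big(\sqrt{s_d(m^\ast+1)}-\sqrt{s_d(m^\ast)}\big)+\big(\sqrt{s_d(m)}-\sqrt{s_d(m^\ast+1)}\big)$ and using $(a+b)^2\ge a^2+b^2$ splits the exponent into a piece $C_2 n^{1-2\rho}$ and a piece proportional to $n^{1-2\rho}s_d(m)$. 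Summing the resulting geometric-type series $\sum_{m>m^\ast}\exp(-c\,n^{1-2\rho}s_d(m))$, which converges because $s_d(m)\sim d^m$, and factoring out $\exp(-C_2 n^{1-2\rho})$ yields $\prob(\widehat m>m^\ast)\le 74\,\exp(-C_2 n^{1-2\rho})\sum_{m>0}\exp(-C_3 s_d(m))$. Adding the two contributions gives the claimed bound, and condition \eqref{eq:condition_n_0_thm} is exactly the explicit form of the two lower bounds on $n$ invoked above (the first summand coming from the underestimation requirement $\textnormal{pen}_n(m^\ast)\le\tfrac12(L(m^\ast-1)-\sigma^2)$ and the dominance of $\esp\phi_{m^\ast}$, the second from the overestimation requirement $t_{m^\ast+1}\ge\esp\phi_{m^\ast+1}+\esp\phi_{m^\ast}$).

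\textbf{Main obstacle.} The crux is the uniform concentration inequality of the second step --- controlling $\sup_{\beta\in B_{m,\alpha}}|\widehat{\risk}_{m,n}(\beta)-\risk_m(\beta)|$ (equivalently, the risk decrease) with constants that do not deteriorate with $m$ --- together with the tuning of the thresholds $t_m$, which must simultaneously stay below the penalty increments $\textnormal{pen}_n(m)-\textnormal{pen}_n(m^\ast)$ (the binding case being $m=m^\ast+1$) and decay fast enough that the countable union over $m>m^\ast$ is summable. Reconciling these two demands is precisely what forces the shape $\textnormal{pen}_n(m)=K_{\textnormal{pen}}n^{-\rho}\sqrt{s_d(m)}$ in \eqref{eq:def_pen} and the explicit constants $C_1,C_2,C_3$.
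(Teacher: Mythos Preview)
Your overall architecture is the same as the paper's: split $\{\widehat m\neq m^\ast\}$ into under- and overestimation, reduce each to a deviation of $\phi_m=\sup_{\beta\in B_{m,\alpha}}|\widehat{\risk}_{m,n}(\beta)-\risk_m(\beta)|$ above a deterministic threshold, and then sum. The place where you genuinely diverge is in the concentration step.

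The paper does \emph{not} use McDiarmid plus symmetrization/contraction. Instead it shows (Lemma~\ref{lemma:R_n_subgaussian}) that $\beta\mapsto Z_{m,n}(\beta)$ has subgaussian increments for the metric $D(\beta,\gamma)=K\|\beta-\gamma\|/\sqrt n$ and then applies a generic-chaining tail inequality (van~Handel, Theorem~5.29). The Dudley entropy integral over the ball $B_{m,\alpha}$ is what produces the $108\sqrt{\pi}\,K\alpha\sqrt{s_d(m)/n}$ centering; combined with Hoeffding at the single point $\beta_0=0$ this gives Proposition~\ref{prop:unif_risk_bound}, and the additive structure ``$72K^2\alpha^2+K_Y^2$'' and ``$1152K^2\alpha^2+K_Y^2$'' in $C_2,C_3$ comes precisely from taking the worse of the chaining tail and the pointwise Hoeffding tail. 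The paper then uses the elementary product-to-sum trick $n^{1-2\rho}s_d(m)\ge\tfrac12\big(n^{1-2\rho}+s_d(m)\big)$ (both factors $\ge1$) to manufacture the summable factor $e^{-C_3 s_d(m)}$, rather than your $(a+b)^2\ge a^2+b^2$ decomposition.

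Your route is in fact \emph{sharper} at one point: via symmetrization and Ledoux--Talagrand contraction you get $\esp\phi_m\le cK\alpha/\sqrt n$ with no $\sqrt{s_d(m)}$ factor, whereas chaining over $B_{m,\alpha}\subset\R^{s_d(m)}$ pays that dimension in the entropy integral. That would actually allow a weaker $n_0$ condition than \eqref{eq:condition_n_0_thm}. On the other hand, the bounded-differences constant in McDiarmid involves $(K_Y+\alpha e^{K_X})^4$, so you will not land on the specific quantities $72K^2\alpha^2+K_Y^2$ and $1152K^2\alpha^2+K_Y^2$; you prove a theorem of the same form with different explicit constants. Two smaller remarks: your ``projection/$\chi^2$-type'' aside is unnecessary here (there is no Gaussian noise or fixed design to exploit), the plain bound on $\phi_m$ already suffices for the overestimation case; and in your splitting, the piece $(\sqrt{s_d(m)}-\sqrt{s_d(m^\ast+1)})^2$ vanishes at $m=m^\ast+1$, so to get exactly $\sum_{m>0}e^{-C_3 s_d(m)}$ you should instead factor $(\sqrt{s_d(m)}-\sqrt{s_d(m^\ast)})^2\ge c\,s_d(m)$ uniformly in $m>m^\ast$ (using $s_d(m^\ast)/s_d(m)\le s_d(m^\ast)/s_d(m^\ast+1)$), which is what the paper does.
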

This theorem provides a non-asymptotic bound on the convergence of $\widehat{m}$. It implies the almost sure convergence of $\widehat{m}$ to $m^\ast$. We can note that the penalty decreases slowly with $n$ (more slowly than a square-root) and increases with $m$ exponentially, i.e.,  as $d^{m/2}$. The penalty includes an arbitrary constant $K_{\textnormal{pen}}$. Its value that minimizes $n_0$ is 
\[K_{\textnormal{pen}}^\ast = \sqrt{ \frac{(L(m^\ast-1)-\sigma^2) 432 \sqrt{\pi} \alpha K}{d^{m^\ast+1}}},\]
and, in practice, it is calibrated with the slope heuristics method of \cite{birge2007minimal}, described in Section \ref{sec:numerical_methodology}. The proof of Theorem \ref{th:main_bound_hatm} is based on chaining tail inequalities that bound uniformly the tails of the risk. We refer the reader to Section \ref{sec:proofs} for a detailed proof. 

To give some insights into this estimator it is interesting to look at the behavior of the constants when different quantities vary.
\begin{itemize}
\item If the dimension of the path $d$ gets large, then $d^{m^\ast +1} \sim s_d(m^\ast+1)$ and the constants $C_1$ and $C_2$ stay of the same order (provided that the risk $L(m^\ast-1)$ stays constant). Therefore, the quality of the bound does not change in high dimensions. However, the constant $n_0$ increases at the rate of $\mathcal{O}(d^{m^\ast/2\tilde{\rho}})$: we neeed exponentially more data when $d$ grows.
\item If the true truncation parameter $m^\ast$ is large, that is, the regression function $\esp[Y|X]$ depends on higher-order terms of the signature, the same phenomenon is observed except that $C_1$ increases linearly: $C_2$ and $C_3$ stay of the same order, $C_1 \sim 148m^\ast$, and $n_0$ increases at the rate of $\mathcal{O}(d^{m^\ast/2\tilde{\rho}})$. It is not surprising: when $m^\ast$ increases, the size of the coefficient $\beta^\ast_{m^\ast}$ increases and therefore more data are needed to estimate it. 
\item If $\alpha$ increases, $n_0$ and $C_1$ increase while $C_2$ decreases. In other words, more data is needed and the quality of the estimator deteriorates. Indeed, when $\alpha$ gets larger, the parameter spaces $B_{m,\alpha}$ gets larger for any $m$ so estimation is harder.
\item The last quantity of interest is $L(m^\ast-1) - \sigma^2 \leq L(m^\ast-1) - L(m^\ast)$, which measures the difference of risk between a smaller model and the model truncated at $m^\ast$. By definition, it is a strictly positive quantity. When it gets close to zero, it means that a model truncated at $m^\ast-1$ is almost as good as a model truncated at $m^\ast$. We can see that when this difference decreases, $n_0$ increases and $C_2$ decreases: it is harder to find that a truncation order of $m^\ast$ is better than $m^\ast-1$, therefore the estimator $\widehat{m}$ deteriorates.
\end{itemize}

With an estimator of $\widehat{m}$ at hand, one can simply choose to estimate $\beta^\ast_{m^\ast}$ by $\widehat{\beta}_{\widehat{m}}$, which gives an estimator of the regression function in model \eqref{eq:model_def}. As a by-product of Theorem \ref{th:main_bound_hatm}, we then get the following bound.

\begin{corollary}
\label{cor:estimator_hat_beta}
Under the assumptions $(H_{\alpha})$ and $(H_K)$, for any $n \geq n_0$,
\begin{equation*}
\label{eq:l2_cvg_reg_function}
\esp \Big( \big\langle \hat{\beta}_{\widehat{m}},S^{\widehat{m}}(X) \big\rangle - \big\langle \beta^\ast_{m^\ast},S^{m^\ast}(X) \big\rangle  \Big)^2 \leq \frac{C_5}{\sqrt{n}} + C_6 e^{-C_2 n^{1-2\rho}},
\end{equation*}
where the constants $C_5$ and $C_6$ are defined by
\begin{equation*}
	C_5 = 36 K \alpha \sqrt{\pi}(m^\ast +1) \sqrt{s_d(m^\ast)}, \quad C_6 = 2664 K \alpha \sqrt{\pi} \sum_{m>m^\ast} \sqrt{s_d(m)} e^{-C_3 s_d(m)}+ 2 \alpha^2 e^{K_X}  C_1.
\end{equation*}
\end{corollary}
The proof is given in Section \ref{sec:proofs}. This rate of convergence in $\mathcal{O}(n^{-1/2})$ is similar to the ones usually obtained for functional linear models when $d=1$, except that much less assumptions are needed on the path $X$. Indeed, the rates obtained on the regression function usually depend on regularity assumptions on $X$ and $\beta$ in \eqref{eq:functional_linear_model}. For example, it can depend on the Fourier coefficients of $X$ \citep{hall2007methodology}, on the number of Lipschitz-continuous derivatives of $\beta$ \citep{cardot2003spline}, or on the periodicity of $X$ \citep{li2007rates}. We can note that when the true coefficient $m^\ast$ gets larger, prediction is more difficult and the bound increases. This is also the case when $K$ increases, which amounts to allowing larger values for $Y$ and $X$.

We stress that in both Theorem \ref{th:main_bound_hatm} and Corollary \ref{cor:estimator_hat_beta}, the constant $\alpha$ is assumed to be fixed. In practice, it is unknown and is typically selected via cross-validation. Taking this into account in the theoretical analysis would be an interesting extension for future work. We have now all the ingredients necessary to implement this signature linear model. Before looking at its performance on real-world datasets, we present in the next section the complete methodology from a computational point of view.

\section{Computational aspects}
\label{sec:numerical_methodology}

\subsection{The signature linear model algorithm}
\label{sec:computing_signature}

The first step towards practical application is to be able to compute signatures efficiently. Typically, the input data consists of arrays of sampled values of $X$. We choose to interpolate the sampled points linearly, and therefore our problem reduces to computing signatures of piecewise linear paths. To this end, equation \eqref{eq:formula_sig_linear_path} gives the signature of a linear path and Chen's theorem \citep{chen1958integration}, stated below, provides a formula to compute recursively the signature of a concatenation of paths. 

Let $X:[s,t] \rightarrow \R^d$ and $Y : [t,u] \rightarrow \R^d$ be two paths, $0 \leq s < t < u \leq 1$. The concatenation of $X$ and $Y$, denoted by $X \ast Y$, is defined as the path from $[s,u]$ to $\R^d$ such that, for any $v \in [s,u]$,
\[(X \ast Y)_v = \begin{cases}
    X_v, &\textnormal{ if } v \in [s,t], \\
    X_t + Y_{v} - Y_t, & \textnormal{ if } v \in [t,u].\\
    \end{cases} 
    \]
    
    \begin{proposition}[Chen]
        \label{prop:chen}
        Let $X:[s,t] \rightarrow \R^d$ and $Y : [t,u] \rightarrow \R^d$  be two paths with bounded variation. Then, for any multi-index $(i_1,\dots ,i_k) \subset \{1,\dots ,d\}^k$,
        \begin{equation}
        \label{eq:chen_identity}
        S^{(i_1,\dots,i_k)}(X \ast Y) =  \sum_{\ell=0}^{k} S^{(i_1,\dots, i_\ell)}(X) \cdot S^{(i_{\ell+1}, \dots, i_k)}(Y).
        \end{equation}
    \end{proposition}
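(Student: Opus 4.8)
The plan is to prove Chen's identity by slicing the domain of integration in the definition of the signature according to how many of the integration variables fall in the first interval $[s,t]$. Write $Z = X \ast Y : [s,u] \to \R^d$, so that by definition
\[
S^{(i_1,\dots,i_k)}(Z) = \idotsint\limits_{s < v_1 < \dots < v_k < u} dZ^{i_1}_{v_1} \cdots dZ^{i_k}_{v_k}.
\]
The open simplex $\{s < v_1 < \dots < v_k < u\}$ is partitioned, according to the number $\ell$ of coordinates that are at most $t$, into the sets $\Delta_\ell = \{ s < v_1 < \dots < v_\ell \le t < v_{\ell+1} < \dots < v_k < u \}$, $\ell = 0,\dots,k$. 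First I would split the integral over these $k+1$ pieces.

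On $\Delta_\ell$ the first $\ell$ variables lie in $[s,t]$, where $Z$ agrees with $X$, so $dZ^{i_j}_{v_j} = dX^{i_j}_{v_j}$ for $j \le \ell$; the remaining $k - \ell$ variables lie in $[t,u]$, where $Z_v = X_t + Y_v - Y_t$, so $dZ^{i_j}_{v_j} = dY^{i_j}_{v_j}$ for $j > \ell$. The integrand thus factorises into a block depending only on $(v_1,\dots,v_\ell)$ and a block depending only on $(v_{\ell+1},\dots,v_k)$, and these two blocks of variables range over independent simplices in $[s,t]$ and $[t,u]$ respectively. Fubini's theorem then gives
\[
\idotsint\limits_{\Delta_\ell} dZ^{i_1}_{v_1} \cdots dZ^{i_k}_{v_k} = S^{(i_1,\dots,i_\ell)}(X) \cdot S^{(i_{\ell+1},\dots,i_k)}(Y),
\]
with the convention that an empty index yields the value $1$ (this covers the extreme cases $\ell = 0$ and $\ell = k$). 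Summing over $\ell = 0,\dots,k$ reproduces exactly the right-hand side of \eqref{eq:chen_identity}.

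An equivalent and perhaps technically cleaner route is induction on the word length $k$, using the recursion $S^{(i_1,\dots,i_k)}(Z)_{[s,v]} = \int_s^v S^{(i_1,\dots,i_{k-1})}(Z)_{[s,r]}\, dZ^{i_k}_r$ for the partial signatures: split this integral at $t$, identify the part over $[s,t]$ with $S^{(i_1,\dots,i_k)}(X)$ directly, and on $[t,u]$ insert the induction hypothesis for the length-$(k-1)$ word together with $dZ^{i_k}_r = dY^{i_k}_r$, then pull the factors $S^{(i_1,\dots,i_\ell)}(X)$ (constant in $r$) out of the integral to recover the terms $\ell = 0,\dots,k-1$. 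The base case $k=0$ is trivial.

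The hard part here is not conceptual but a matter of careful bookkeeping at the junction time $t$: one must justify that the Riemann--Stieltjes integrals are additive across $t$, that the translation built into the definition of $X \ast Y$ leaves the differentials on $[t,u]$ equal to those of $Y$, and that the boundary slices between the $\Delta_\ell$ (the diagonals $v_j = v_{j+1}$ and the hyperplanes $v_j = t$) contribute nothing --- all consequences of $X, Y \in BV(\R^d)$. Everything else reduces to Fubini's theorem applied to the factorised integrand.
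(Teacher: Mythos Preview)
Your argument is correct and is the standard way to prove Chen's identity: partition the simplex according to how many integration variables fall before the junction time, observe that the integrand factorises on each piece because the differentials of $X\ast Y$ agree with those of $X$ on $[s,t]$ and with those of $Y$ on $[t,u]$, and apply Fubini. The inductive variant you sketch is equally valid and is in fact the form in which the result is usually stated and proved in the rough-path literature.

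The paper itself does not give a proof at all: it simply remarks that the identity ``is an immediate consequence of the linearity property of integrals'' and cites \cite[Theorem~2.9]{lyons2007differential}. Your write-up therefore supplies considerably more detail than the paper does. If anything, the only place to tighten is the justification that the boundary hyperplanes $\{v_j=t\}$ carry no mass for the Riemann--Stieltjes integrators; this follows because a single point has zero $dX^{i_j}$-measure for a BV integrator, and you already flag it.
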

    
    This proposition is an immediate consequence of the linearity property of integrals \citep[Theorem 2.9]{lyons2007differential}. Therefore, to compute the signature of a piecewise linear path, it is sufficient to iterate the following two steps:
    \begin{enumerate}
        \item Compute with \eqref{eq:formula_sig_linear_path} the signature of a linear section of the path;
        \item Concatenate it to the other pieces with Chen's formula \eqref{eq:chen_identity}.
    \end{enumerate}
    
    This procedure is implemented in the Python library \pkg{iisignature}   \citep{reizenstein2018iisignature}. Thus, for a sample consisting of $p$ points in $\R^d$, if we consider the path formed by their linear interpolation, the computation of the path signature truncated at level $m$ takes $\mathcal{O}(pd^m)$ operations. The complexity is therefore linear in the number of sampled points but exponential in the truncation order $m$, which emphasizes once more the importance of the choice of $\widehat{m}$.

    \begin{algorithm}[ht]
\caption{Pseudo-code for the signature linear model.}
\label{algo:procedure}
\KwData{$\{ (\mathbf{x}_1,Y_1), \dots, (\mathbf{x}_n,Y_n) \}$}
\KwResult{Estimators $\hat{m}$ and $\hat{\beta}_{\hat{m}}$}

Interpolate linearly the columns of $\mathbf{x}_i$ so as to have a set of continuous piecewise linear paths $X_i:[0,1] \to \R^d$, $1 \leq i \leq n$. Add a time dimension, i.e., consider the path $\widetilde{X}_{i}: [0,1] \to \R^{d+1}$, where $\widetilde{X}^j_i=X^j_i$ for $1 \leq j \leq d$, and $X^{d+1}_{i,t}=t$, $t \in [0,1]$.

Select the Ridge regularization parameter $\lambda$ by cross validation on the regression model with $\big\{ S^1(\widetilde{X}_1),\dots,S^1(\widetilde{X}_n) \big\}$ as predictors.

\For{$m=1,\dots, M$}{

	Compute signatures truncated at level $m$: $\big\{S^m(\widetilde{X}_1),\dots, S^m(\widetilde{X}_n) \big\}$.

	Fit a Ridge regression on the pairs $\big\{ (S^m(\widetilde{X}_1),Y_1),\dots, (S^m(\widetilde{X}_n,Y_n) \big\}$. Compute its squared loss $\widehat{L}_n(m)$. 
       
	Compute the penalization $\textnormal{pen}_n(m)=K_{\textnormal pen} \frac{\sqrt{s_d(m)}}{n^\rho}$. 

}

Choose $\widehat{m}=\underset{0 \leq m \leq M}{\textnormal{argmin }} \big( \widehat{L}_n(m) + \textnormal{pen}_n(m) \big)$. 

Compute $\widehat{\beta}_{\widehat{m}}$ by fitting a Ridge regression on $\big\{ (S^{\widehat{m}}(\widetilde{X}_1),Y_1),\dots, (S^{\widehat{m}}(\widetilde{X}_n,Y_n) \big\}$: $\widehat{\beta}_{\widehat{m}} = (\mathbf{S}^\top \mathbf{S} + \lambda \mathbf{I})^{-1} \mathbf{S}^\top \mathbf{Y},$
where $\mathbf{S} \in \R^{n \times s_d(\widehat{m})}$ is the matrix which rows are the signatures of the inputs $S^{\widehat{m}}(\widetilde{X}_i)^\top$, $\mathbf{I} \in \R^{s_d(\widehat{m}) \times s_d(\widehat{m})}$ is the identity matrix, and $\mathbf{Y} = (Y_1, \dots, Y_n)^\top \in \R^d$ is the vector of responses.

\end{algorithm}

In practice, we are given a dataset $\{ (\mathbf{x}_1,Y_1), \dots, (\mathbf{x}_n,Y_n) \}$, where, for any $1 \leq i \leq n$, $Y_i \in \R$ and $\mathbf{x}_i \in \R^{d\times p_i}$. The columns of the matrix $\mathbf{x}_i$ correspond to values of a process $X_i$ in $\R^d$ sampled at $p_i$ different times. We fix $M \in \N$ such that, for any $m \geq M$, the function $m \mapsto  \widehat{L}_n(m) + \textnormal{pen}_n(m)$ is strictly increasing and apply the procedure described in Algorithm \ref{algo:procedure}.

Note that in the first step of Algorithm \ref{algo:procedure} there exist other choices for the embedding of the matrix $\mathbf{x_i}$ into a continuous path $\widetilde{X}_i$ \citep{fermanian2021embedding}. The parameter $\rho$ is set to 0.4. The constant $K_{\textnormal{pen}}$ is calibrated with the so-called slope heuristics method, first proposed by \cite{birge2007minimal}. 

\subsection{A toy example}\label{sec:toy_example}

This section is devoted to illustrating the different steps of Algorithm \ref{algo:procedure} and the convergence of the estimator $\widehat{m}$ with simulated data. We first simulate a dataset $\{(\mathbf{x_1},Y_1), \dots, (\mathbf{x_n}, Y_n)\}$ following the signature model \eqref{eq:model_def}. 

\begin{figure}[ht]
\centering
    \begin{minipage}{.4\textwidth}
    	    \centering
    \includegraphics[width=\textwidth]{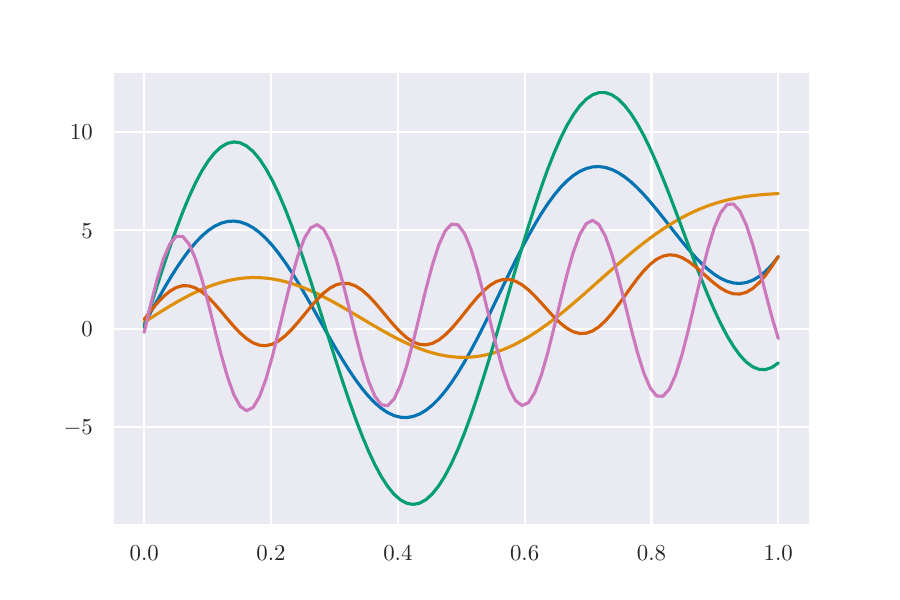} 
    \caption{One sample $X_i$ from model \eqref{eq:simulations_polysinus} with $d=5$.}
    \label{fig:samples_X_polysinus_independent}
    \end{minipage}
    	\begin{minipage}{.4\textwidth}
	    \centering
	    \includegraphics[width=\textwidth]{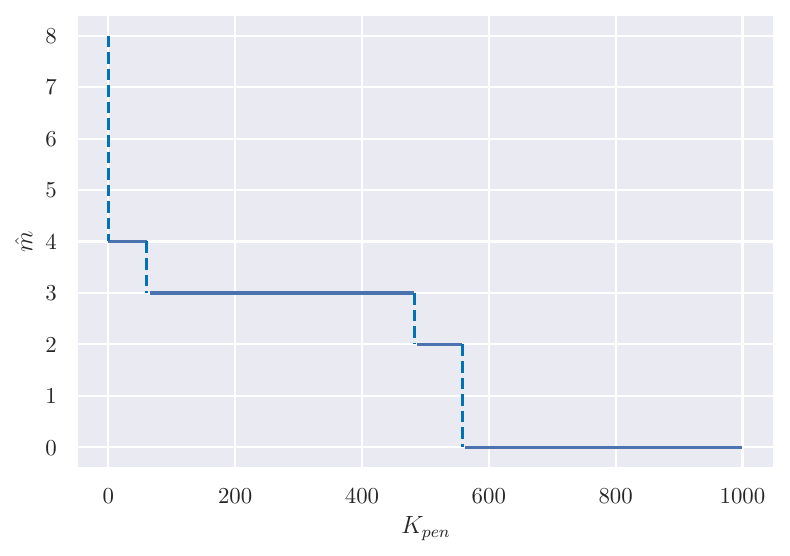} 
	    \caption{Selection of $K_{\textnormal pen}$ with the slope heuristics method.}
	    \label{fig:Kpen_selection_Y_sig_5}
    \end{minipage}
\end{figure}

For any $1 \leq i \leq n$, let $X_i: [0,1] \to \R^{d}$, $X_{i,t}= (X^1_{i,t},\dots,X^d_{i,t})$ be defined by
\begin{align} \label{eq:simulations_polysinus}
X^k_{i,t} =\alpha^k_{i,1} + 10\alpha^k_{i,2} \sin \Big(\frac{2 \pi t}{\alpha^k_{i,3}}\Big) + 10(t-\alpha^k_{i,4})^3, \quad 1 \leq k \leq d,
\end{align}
where the parameters $\alpha^k_{i, \ell}$, $1 \leq \ell \leq 4$ are sampled uniformly on $[0,1]$. Let $(t_0, \dots, t_{p-1})$ be a regular partition of $[0,1]$ of length $p$, the matrix of the path values \[\mathbf{x_i} = (x^k_{i,j})_{\substack{1 \leq k \leq d \\ 1 \leq j \leq p}} \in \R^{d \times p}\] is then a discretization of $X_i$ on $[0,1]$: $x^k_{i,j}=X^k_{i,t_j}$. It will cause no confusion to use the same notation $\mathbf{x_i}$ to denote the matrix of values of $X_i$ on the partition $(t_0, \dots, t_{p-1})$ and their piecewise linear interpolation. Fig.~\ref{fig:samples_X_polysinus_independent} shows one sample $\mathbf{x_i}$ with $p=100$ and $d=5$.

For any $m^\ast \in \N$, the output $Y_i$ is now defined as $Y_i=\langle \beta, S^{m^\ast}(\mathbf{x_i})\rangle + \varepsilon_i $, where $\varepsilon_i$ is a uniform random variable on $[-100,100]$ and $\beta$ is given by
\[ \beta_j=\frac{1}{1000}u_j, \quad 1\leq j \leq s_d(m^\ast), \]
where $u_j$ is sampled uniformly on $[0,1]$. Then, $m^\ast$ is estimated with the procedure described in Algorithm \ref{algo:procedure} for different sample sizes $n$. To select the constant $K_{\textnormal{pen}}$, we use the dimension jump method, that is we plot $\widehat{m}$ as a function of $K_{\textnormal{pen}}$, find the value of $K_{\textnormal{pen}}$ that corresponds to the first big jump of $\widehat{m}$ and fix $K_{\textnormal{pen}}$ to be equal to twice this value. For a recent account of the theory of slope heuristics, we refer the reader to the review by \cite{arlot2019minimal}. For example, for $m^\ast=5$,$d=2$, and $n=50$, plotting $\widehat{m}$ against $K_{\textnormal{pen}}$ yields Fig.~\ref{fig:Kpen_selection_Y_sig_5}. In this case, $K_{\textnormal{pen}}$ is selected at $100$.

\begin{figure}[ht]
 	\centering
	\includegraphics[width=.7\textwidth]{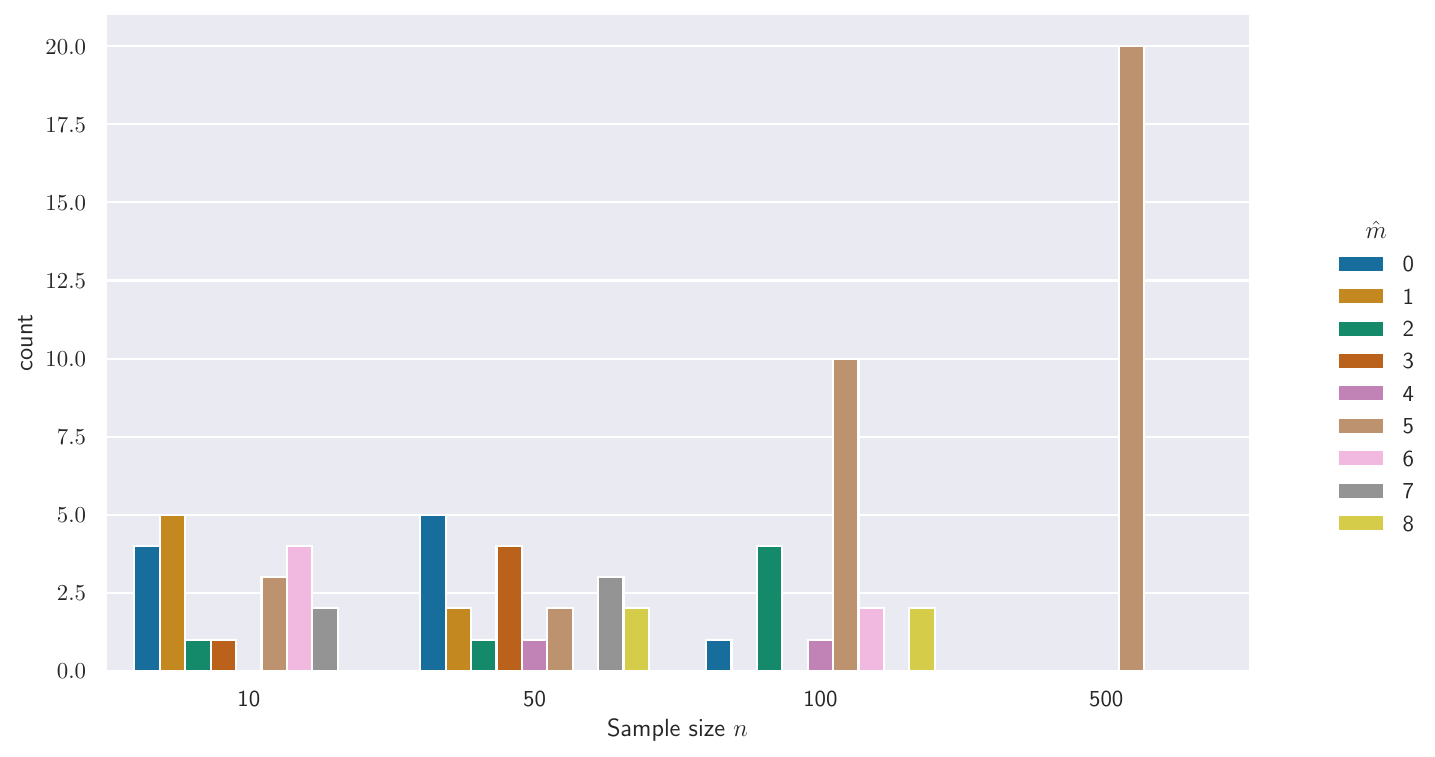} 
	\caption{Histogram of $\widehat{m}$ as a function of $n$ over 20 iterations. The functional predictors $X$ are simulated following \eqref{eq:simulations_polysinus} and the response $Y$ follows the linear model on signatures with $m^\ast = 5$. The hyperparameters are $\rho=0.4$ and $K_{\textnormal{pen}}=20$.}
	\label{fig:boxplots_mast_5}
\end{figure}

We fix $d=2$, $m^\ast=5$, and $K_{\textnormal{pen}} = 20$. For different sample sizes $n$, we iterate Algorithm \ref{algo:procedure} twenty times. In Fig.~\ref{fig:boxplots_mast_5}, a histogram of the values taken by $\widehat{m}$ is plotted against $n$. We can see that when $n$ increases, the estimator converges to the true value $m^\ast=5$. For $n=500$ we always pick $\widehat{m}=5$ over the twenty iterations.

\section{Experimental results} \label{sec:experimental_results}

Now that we have a complete procedure at hand, we demonstrate in this section its performance compared to canonical approaches in functional data analysis. We show in particular that it performs better in high dimensions, that is when $d$ is large.

We compare our model to the functional linear model with basis functions presented in Section \ref{sec:functional_linear_regression}, to functional principal component regression (fPCR), and to functional k-nearest neighbors regression. The first models are parametric linear models, while the k-nearest neighbors is nonlinear and nonparametric. Concerning the functional linear model, we consider two choices for the basis $\phi_1, \dots, \phi_K$, namely the B-Spline and Fourier basis \citep[see][]{ramsay2005functional}. Then, the approach consists in projecting the function $X: [0,1] \to \R^d$ onto the $\phi_i$s, coordinate by coordinate. The number $K$ of basis functions is selected via cross-validation (with a minimum of 4 and maximum of 14 for Fourier and B-Splines, and a minimum of 1 and a maximum of 6 for the fPCR). For the fPCR, we first smooth the functional covariates with 7 B-Splines. The number of neighbors is selected by cross-validation with a minimum of 1 and a maximum of 9. This procedure is implemented with the Python package \texttt{scikit-fda} \citep{ramos2019scikit}. In Subsections \ref{sec:simu_study_smooth_paths} and \ref{sec:simu_study_gaussian_processes}, since the focus is on the performance of the signature linear model and to simplify the computations, we select $\widehat{m}$ via cross-validation. For the real-world dataset of Subsection \ref{sec:air_quality}, it is estimated as described in the previous section.

\subsection{Smooth paths} \label{sec:simu_study_smooth_paths}

Our goal is to see the influence of the dimension $d$ on the quality of the different models: the signature linear model and the 3 linear functional models. To this end, we simulate some paths following model \eqref{eq:simulations_polysinus} and predict the average value of the path at the next time step. More precisely, let $(t_0, t_1, \dots, t_{p})$ be a partition of $[0,1]$ of length $p+1$, then we sample $X_i$ following \eqref{eq:simulations_polysinus} and let
\begin{align*}
\mathbf{x_i} = (X_{i, t_0}| \cdots| X_{i, t_{p-1}}) \in \R^{d \times p}, \quad Y_i = \frac{1}{d} \sum_{k=1}^d X^k_{i, t_{p}} + \varepsilon_i,
\end{align*}
where $\varepsilon_i$ are i.i.d uniform random variables on $[-1, 1]$. We let $d$ vary on a grid from $1$ to $11$, simulate some train and test data, and assess the performance of the model with the mean squared error (MSE) on the test set. We iterate the procedure 20 times, which gives, for each model (signature, Fourier, B-Spline, and fPCR), a boxplot of errors, shown in Fig.~\ref{fig:dimension_study_independent_sinus}. 

\begin{figure}[ht]
    \centering
    \includegraphics[width=.85\textwidth]{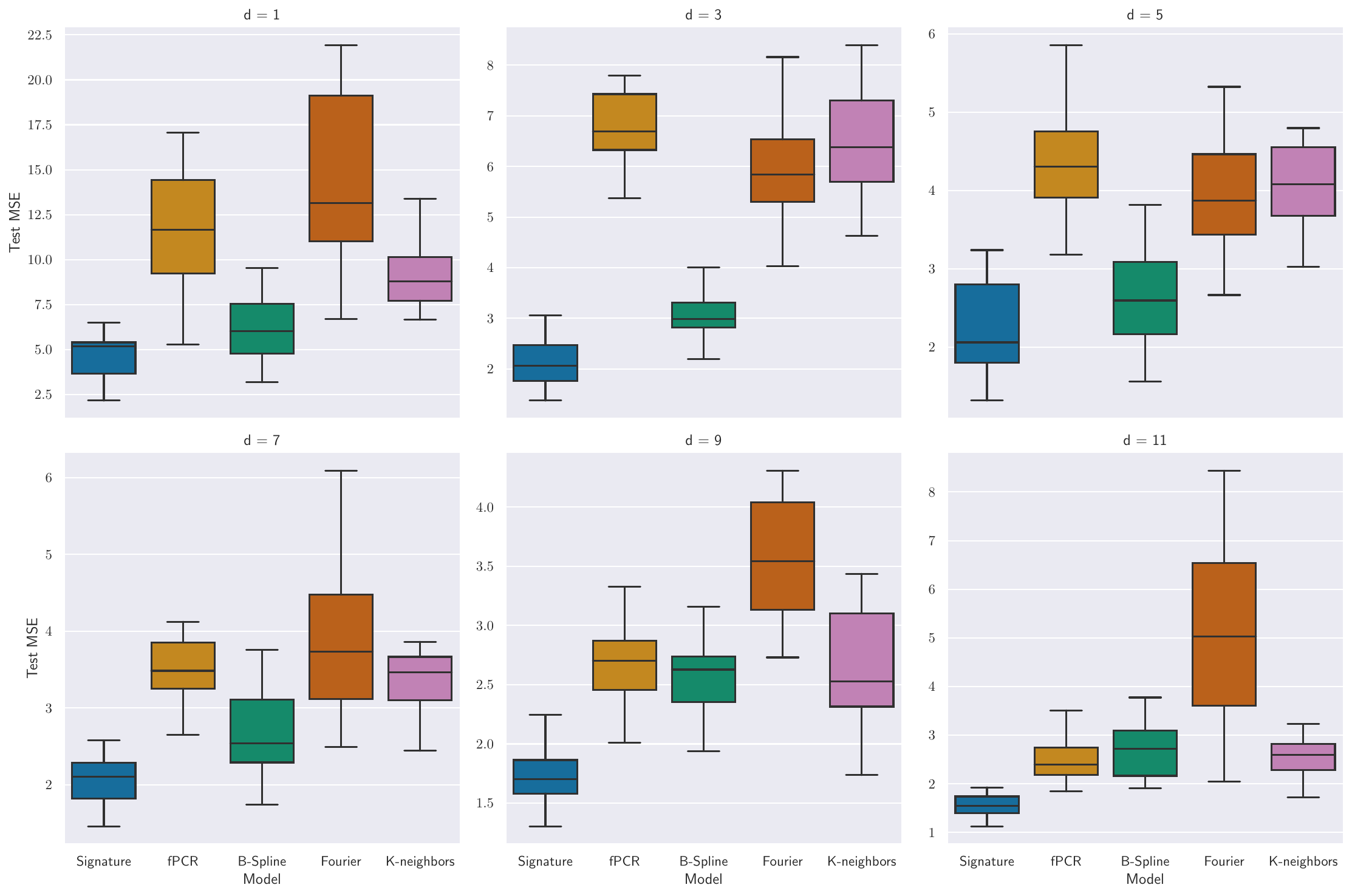} 
    \caption{Test MSE for different regression models when the inputs follow \eqref{eq:simulations_polysinus} and $Y$ is the mean response at the next time step.}
    \label{fig:dimension_study_independent_sinus}
\end{figure}

It is clear that when $d$ increases, the signature gets better relatively to the 4 other models. We can also note that the B-Spline basis performs best in low dimensions, which is not surprising since the data has a 3rd order polynomial term---see \eqref{eq:simulations_polysinus}. However, even though the B-Spline basis is particularly well-adapted to the data, it is outperformed by the signature linear model when the dimension becomes too large (starting from $d=7$).

\subsection{Gaussian processes} \label{sec:simu_study_gaussian_processes}

\begin{figure}[ht]
\centering
	\begin{minipage}{.4\textwidth}
	    \centering
	    \includegraphics[width=\textwidth]{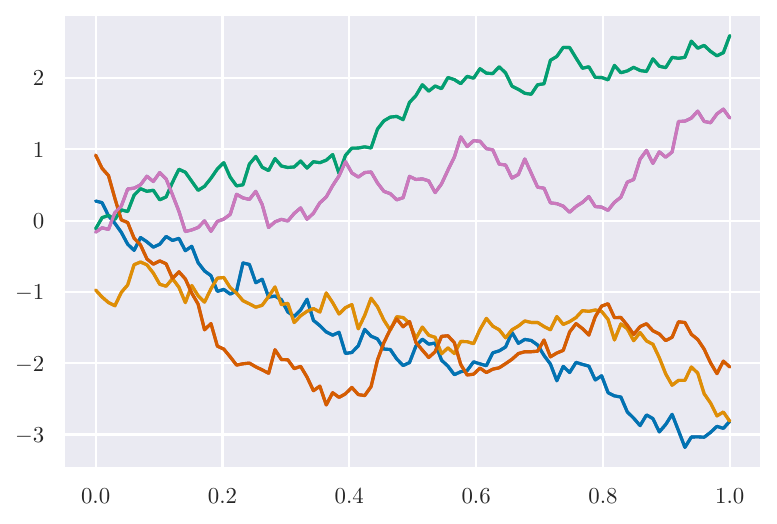} 
	    \caption{One sample $X$ from the Gaussian process model \eqref{eq:simulations_gaussian_processes} with $d=5$}
	    \label{fig:samples_gaussian_processes}
    \end{minipage}
    \begin{minipage}{.4\textwidth}
     \centering
	    \includegraphics[width=\textwidth]{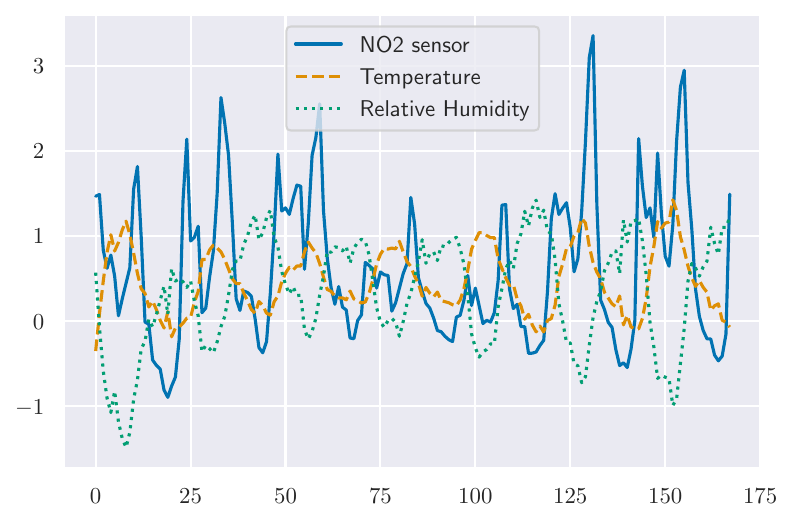} 
	    \caption{One sample from the Air Quality dataset.}
	    \label{fig:samples_air_quality}
    \end{minipage}
\end{figure}

We continue this simulation study with more complex paths: Gaussian processes. Let $d \geq 1$, $1 \leq i \leq n$, we define the path $X_i = (X^1_{i,t}, \dots, X^d_{i,t})_{t \in [0,1]}$ by 
\begin{equation} \label{eq:simulations_gaussian_processes}
X_{i,t}^k = \alpha_i^k t + \xi^k_{i,t}, \quad 1 \leq k \leq d, \quad t \in [0,1], 
\end{equation}
where $\alpha_i^k$ is sampled uniformly in $[-3,3]$ and $\xi^k_i$ is a Gaussian process with exponential covariance matrix (with length-scale 1). The response is the norm of the trend slope: $Y_i = \| \alpha_i\| + \varepsilon_i$, where $\varepsilon_i$ is uniformly sampled on $[-1,1]$. Fig.~\ref{fig:samples_gaussian_processes} shows a realization of $X_i$ with $d=5$.

\begin{figure}[ht]
    	\centering
	    \includegraphics[width=.85\textwidth]{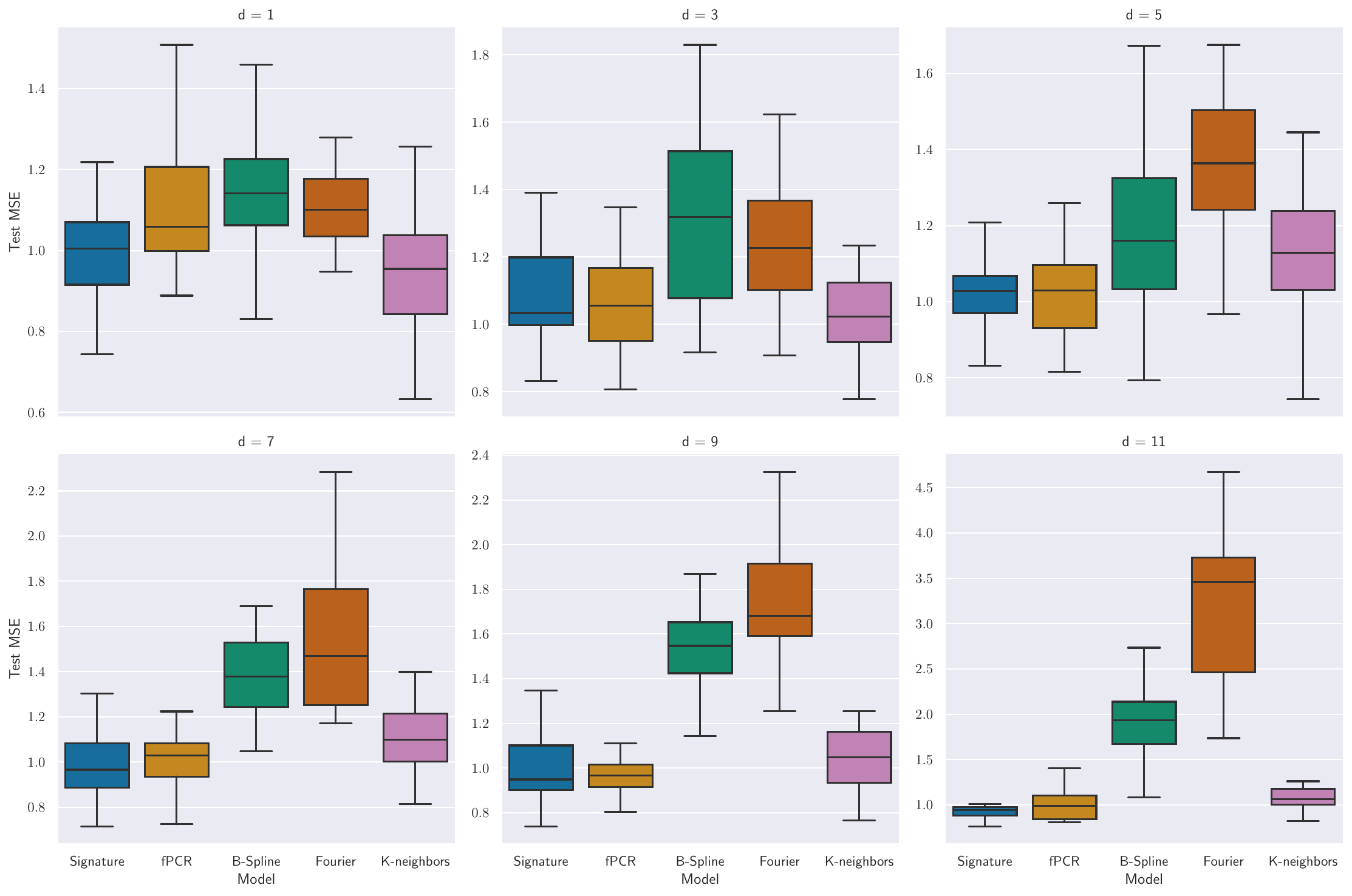} 
	    \caption{Test MSE for different regression models when the inputs are gaussian processes with a random linear trend, as defined by \eqref{eq:simulations_gaussian_processes}, and the response is the norm of the trend slope.}
	    \label{fig:dimension_study_gaussian_processes}
\end{figure}

We vary the dimension $d$ on the same grid as before and iterate the whole procedure 20 times, which gives the results in Fig.~\ref{fig:dimension_study_gaussian_processes}. We can see that for these more complicated paths, the signature is better than the 3 linear models even for $d=1$, but similar to the k-neighbors regression. The difference in performance with B-Spline and Fourier basis increases a lot with $d$, wheras the k-neighbors model is quite stable.

\subsection{Air quality dataset}\label{sec:air_quality}

We conclude this section with a study of the UCI ``Air Quality Data Set'' \citep{de2008field}. The data contains the hourly averaged response from 5 metal oxide chemical sensors recorded in a polluted area in Italy during a year (from March 2004 to February 2005). Ground truth concentrations are also included, together with temperature and humidity values. We restrict our analysis to the study of the concentration of nitrogen dioxide (N02), and more precisely to the prediction of the ground truth value of NO2 at the next hour. We consider two situations for the predictor function $X$: a univariate and a multivariate case. In the univariate case, we are given the values of the sensor recording the concentration of NO2 during the previous 7 days. In this case, the data is in dimension $d=1$ and sampled at $p=168$ values. In the multivariate case, we add the information of temperature and relative humidity to $X$, making it a path in dimension $d=3$. We show in Fig.~\ref{fig:samples_air_quality} one sample in the multivariate case (in the univariate case, $X$ consists only of the blue solid curve).

 \begin{figure}[ht]
        \centering
        \begin{subfigure}[b]{0.45\textwidth}
            \includegraphics[width=\textwidth]{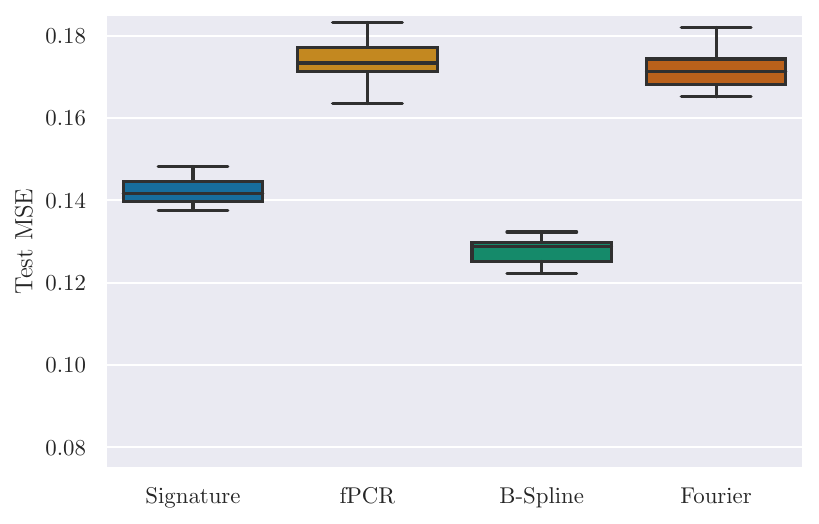}
            \caption{Univariate case.}
            \label{fig:results_air_quality_univariate}
        \end{subfigure}%
                \begin{subfigure}[b]{0.45\textwidth}
            \includegraphics[width=\textwidth]{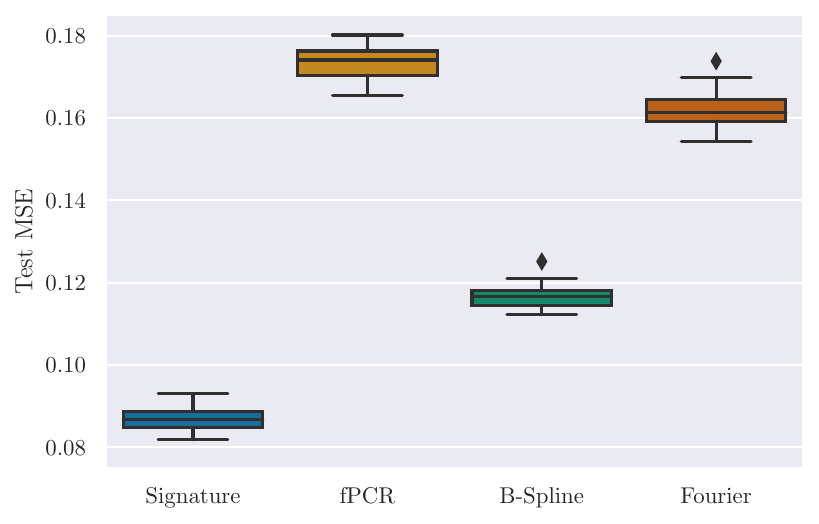}
            \caption{Multivariate case.}
            \label{fig:results_air_quality_multivariate}
        \end{subfigure}%
       \caption{Test MSE for different regression models for the Air Quality dataset.}
       \label{fig:results_air_quality}
\end{figure}

We perform 20 random train/test splits and show in Fig.~\ref{fig:results_air_quality} a boxplot of the test MSE for each model. We do not consider the k-neighbors regression due to its prohibitive running time for the sample size of this dataset (6156 training samples and 3033 test samples). Indeed, the other models take a few minutes to run while the k-neighbors regression takes two hours. We can see that in the univariate case, the B-Splines perform best. However, when more information is taken into account, that is, in the multivariate case, the signature model has the smallest error. The error of the three other models almost does not change when information about temperature and humidity is added, whereas the error of the signature linear model is divided by 2. We conclude that signatures can extract relevant information from multivariate time series. It should be noted that this type of data is increasingly common in modern applications, as the capabilities for recording and storing data are only getting better.

 \begin{figure}[ht]
        \centering
        \includegraphics[width=.7\textwidth]{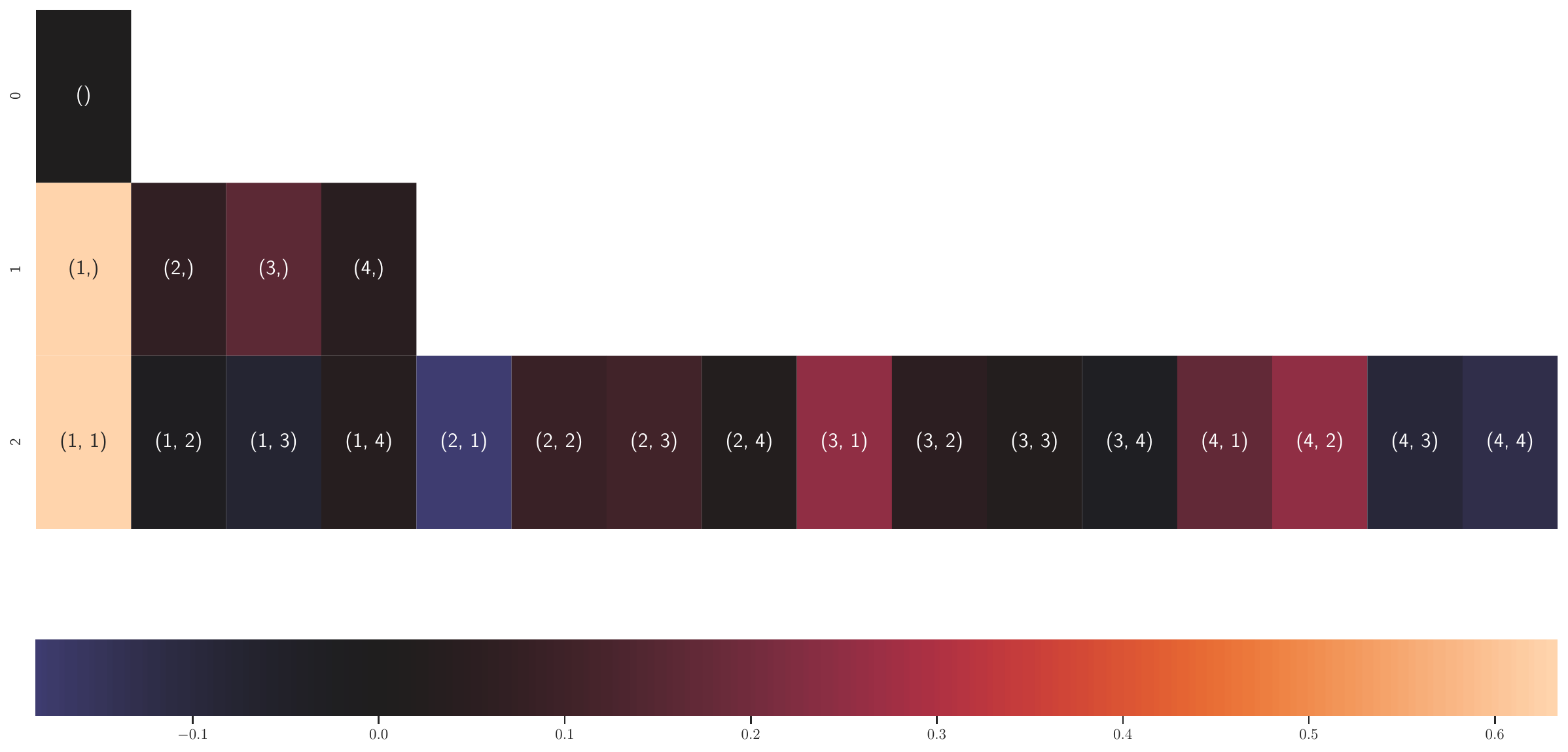}
        \caption{Heatmap of the first coefficients of $\widehat{\beta}_{\widehat{m}}$ for the Air quality dataset with a truncation order of 4. The vertical axis represents the order of the coefficients: on top the coefficient of order 0, then the 4 coefficients of order 1, then the 16 coefficients of order 2. The color corresponds to the value of the coefficient.}
        \label{fig:air_quality_coefficients}
\end{figure}

To conclude, we represent in Fig.~\ref{fig:air_quality_coefficients} the values of the regression vector $\widehat{\beta}_{\widehat{m}}$ to illustrate its interpretation. We observe that the two largest coefficients are the ones corresponding to $S^{(1)}(X)$ and $S^{(1, 1)}(X)$: they both correspond to the variation in NO2 concentration during the period (last value minus initial value). It is therefore not surprising that this is a key quantity to predict the concentration of NO2 at the next hour. We can also comment on the large absolute value of some coefficients of order 2, for example, the one corresponding to $S^{(2,1)}(X)$. The value $S^{(2,1)}(X)$ is the area under the curve (Temperature, NO2), as explained in Fig.~\ref{fig:geometric_signature}. The corresponding coefficient, therefore, contains information about the importance of the joint evolution of Temperature and NO2 to predict future concentration. For example, if it is positive, it means that a common increase in Temperature and NO2 will give rise to a larger concentration of NO2. In other words, there is an interaction between Temperature and NO2 concentration. A similar analysis can be done for the curve (Humidity, NO2), which corresponds to the coefficient (3,1). Finally, the large value of the coefficient corresponding to $S^{(4,1)}(X)$, which is equal to the area under the curve (Time, NO2) is also not surprising: it counts the total quantity of concentration of NO2 during the period. 

To conclude, the coefficients obtained with the signature linear model have a geometric interpretation, which is often valuable for practical applications. Contrary to the coefficients in traditional functional linear models, they are global measures of interaction between coordinates: there is no time-specific interpretation as there would be for $\beta(t)$ in \eqref{eq:functional_linear_model}. We refer to \citet{giusti2020iterated} for more details on the interpretation of signatures, in particular as a measure of causality between different coordinates.

\section{Conclusion and perspectives}

In this paper, we have provided a complete and ready-to-use methodology to implement the signature linear model. This led us to define a consistent estimator of the signature truncation order. We show on both simulated and real-world datasets that this model performs better than traditional functional linear models when the functional data is vector-valued, especially in high dimensions.
 
The signature is a flexible tool for summarizing multidimensional time series and can be used in various contexts. This study is just a first step towards understanding how it should be used in a statistical setting and there are a lot of potential extensions. First, we restricted our study to the setting of linear regression, however, signatures are just as relevant in classification or unsupervised learning settings. Moreover, the problem of the high dimension of the regression coefficient, due to its exponential dependence on $m$, is the major limitation of the signature linear model. In this article, we dealt with it by carefully choosing the truncation order. However, this is not the only option. For example, regularization approaches that induce a sparsity pattern on this coefficient, or the use of a related lower-dimensional object called the logsignature, are two interesting directions.

\section{Proofs} \label{sec:proofs}
\begin{flushleft}
\textbf{Proof of Theorem \ref{th:main_bound_hatm}}
\end{flushleft}

This section is devoted to the proof of Theorem \ref{th:main_bound_hatm}. We will use extensively results from \cite{van2014probability}. The next two lemmas first show that it is sufficient to obtain a uniform tail bound on the risk to control the convergence of $\widehat{m}$. 

\begin{lemma}
	\label{lemma:L_uniform_risk}
	For any $ m \in \N$,
	\[\big|\widehat{L}_n(m) - L(m) \big| \leq \underset{\beta \in B_{m,\alpha}}{\sup}  \big| \widehat{\risk}_{m,n}(\beta) - \risk_m(\beta) \big|.\]
\end{lemma}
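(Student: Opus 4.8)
The plan is to exploit the elementary fact that taking an infimum is a $1$-Lipschitz operation with respect to the uniform norm: if two functions are uniformly within $\delta$ of each other on a set, so are their infima over that set. Here the two functions are $\widehat{\risk}_{m,n}$ and $\risk_m$, the set is $B_{m,\alpha}$, and $\delta$ is precisely the right-hand side of the claimed inequality. Since both minima are attained (by $\widehat{\beta}_m$ and $\beta^\ast_m$ respectively, as recalled in Section \ref{sec:def_estimators}), I can make the argument fully explicit rather than passing through a density/approximation step.

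Concretely, write $\delta = \sup_{\beta \in B_{m,\alpha}} \big| \widehat{\risk}_{m,n}(\beta) - \risk_m(\beta) \big|$. First I would bound $\widehat{L}_n(m) - L(m)$ from above: by definition of $\widehat{\beta}_m$ as a minimizer and then comparing $\widehat{\risk}_{m,n}$ to $\risk_m$ at the point $\beta^\ast_m \in B_{m,\alpha}$,
\[
\widehat{L}_n(m) = \widehat{\risk}_{m,n}(\widehat{\beta}_m) \le \widehat{\risk}_{m,n}(\beta^\ast_m) \le \risk_m(\beta^\ast_m) + \delta = L(m) + \delta.
\]
Symmetrically, using that $\beta^\ast_m$ minimizes $\risk_m$ over $B_{m,\alpha}$ and evaluating at $\widehat{\beta}_m \in B_{m,\alpha}$,
\[
L(m) = \risk_m(\beta^\ast_m) \le \risk_m(\widehat{\beta}_m) \le \widehat{\risk}_{m,n}(\widehat{\beta}_m) + \delta = \widehat{L}_n(m) + \delta.
\]
Combining the two displays gives $\big| \widehat{L}_n(m) - L(m) \big| \le \delta$, which is exactly the statement.

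There is essentially no real obstacle here: the only things to be careful about are that both extrema are genuinely attained on $B_{m,\alpha}$ (guaranteed by convexity and compactness of $B_{m,\alpha}$, already noted when $\widehat{\beta}_m$ and $\beta^\ast_m$ were introduced) and that $\beta^\ast_m$ and $\widehat{\beta}_m$ both lie in $B_{m,\alpha}$ so that they are admissible test points for the other function — both are immediate. This lemma is purely deterministic and holds for every realization of the sample $D_n$; it is the bridge that reduces controlling the fluctuations of $\widehat{L}_n(m)$ to a uniform control of $\widehat{\risk}_{m,n} - \risk_m$ over the ball, which is where the probabilistic work (chaining tail inequalities) will subsequently be concentrated.
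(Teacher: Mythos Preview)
Your proof is correct and follows essentially the same approach as the paper: both directions are obtained by evaluating one risk at the other's minimizer and then bounding the pointwise discrepancy by the uniform one. The paper presents it via a telescoping identity rather than your chain of inequalities, but the content is identical.
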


\begin{proof}[\textbf{\upshape Proof:}]
	Introducing $\widehat{\risk}_{m,n}(\beta_m^\ast)$ yields
	\begin{align*}
	\widehat{L}_n(m) - L(m)&= \widehat{\risk}_{m,n}(\widehat{\beta}_{m}) -\risk_m(\beta^\ast_{m}) = \widehat{\risk}_{m,n}(\widehat{\beta}_{m}) -\widehat{\risk}_{m,n}(\beta^\ast_{m}) + \widehat{\risk}_{m,n}(\beta^\ast_{m})-\risk_m(\beta^\ast_{m}).
	\end{align*}
	Since  $\widehat{\beta}_{m}$ minimises $\widehat{\risk}_{m,n}$ over $B_{m,\alpha}$, $\widehat{\risk}_{m,n}(\widehat{\beta}_{m}) -\widehat{\risk}_{m,n}(\beta^\ast_{m}) \leq 0$, which gives
	\[\widehat{L}_n(m) - L(m)\leq \widehat{\risk}_{m,n}(\beta^\ast_{m})-\risk_m(\beta^\ast_{m}) \leq \underset{\beta \in B_{m,\alpha}}{\sup} | \widehat{\risk}_{m,n}(\beta) - \risk_m(\beta) |.\]
	In the same manner, $L(m)-\widehat{L}_n(m)  \leq \underset{\beta \in B_{m,\alpha}}{\sup} | \widehat{\risk}_{m,n}(\beta) - \risk_m(\beta) |,$ which proves the lemma.
\end{proof}

\begin{lemma}
	\label{lemma:bound_prob_k_geq_hatm}
	For any $m>m^\ast$, $\prob(\widehat{m}=m) \leq \prob \big(2 \sup_{\beta \in B_{m,\alpha}} |\widehat{\risk}_{m,n}(\beta) - \risk(\beta)| \geq \textnormal{pen}_n(m) -\textnormal{pen}_n(m^\ast) \big).$
\end{lemma}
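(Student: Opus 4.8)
The plan is to use the fact that $\widehat{m}$ is, by construction, a minimizer of $m' \mapsto \widehat{L}_n(m') + \textnormal{pen}_n(m')$, turn the resulting inequality into one involving only empirical and theoretical risks, and then invoke Lemma \ref{lemma:L_uniform_risk} to replace those risks by a uniform deviation over the parameter ball. First I would fix $m > m^\ast$ and argue on the event $\{\widehat m = m\}$. On this event the minimality of $\widehat m$ gives in particular $\widehat{L}_n(m) + \textnormal{pen}_n(m) \leq \widehat{L}_n(m^\ast) + \textnormal{pen}_n(m^\ast)$, that is,
\[\textnormal{pen}_n(m) - \textnormal{pen}_n(m^\ast) \leq \widehat{L}_n(m^\ast) - \widehat{L}_n(m).\]

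Next I would decompose the right-hand side by inserting the theoretical minimal risks:
\[\widehat{L}_n(m^\ast) - \widehat{L}_n(m) = \big(\widehat{L}_n(m^\ast) - L(m^\ast)\big) + \big(L(m^\ast) - L(m)\big) + \big(L(m) - \widehat{L}_n(m)\big).\]
Since $m > m^\ast$ and $L$ is constant on $\{m' \geq m^\ast\}$ (as recalled in Section \ref{sec:def_estimators}), the middle term is zero. Applying Lemma \ref{lemma:L_uniform_risk} at orders $m^\ast$ and $m$ bounds the first and third terms by $\sup_{\beta \in B_{m^\ast,\alpha}} |\widehat{\risk}_{m^\ast,n}(\beta) - \risk_{m^\ast}(\beta)|$ and $\sup_{\beta \in B_{m,\alpha}} |\widehat{\risk}_{m,n}(\beta) - \risk_m(\beta)|$ respectively.

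The only step that needs a genuine observation, rather than a routine estimate, is to make these two suprema comparable to a single supremum over the larger ball $B_{m,\alpha}$. Here I would use that any $\beta \in B_{m^\ast,\alpha} \subset \R^{s_d(m^\ast)}$, extended by zeros to $\bar\beta \in \R^{s_d(m)}$, belongs to $B_{m,\alpha}$ and satisfies $\langle \beta, S^{m^\ast}(X)\rangle = \langle \bar\beta, S^{m}(X)\rangle$ for every path $X$, because the extra signature coordinates are multiplied by $0$; hence $\widehat{\risk}_{m^\ast,n}(\beta) = \widehat{\risk}_{m,n}(\bar\beta)$ and $\risk_{m^\ast}(\beta) = \risk_m(\bar\beta)$, so the supremum over $B_{m^\ast,\alpha}$ is at most the supremum over $B_{m,\alpha}$. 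Combining the three steps yields, on $\{\widehat m = m\}$,
\[\textnormal{pen}_n(m) - \textnormal{pen}_n(m^\ast) \leq 2 \sup_{\beta \in B_{m,\alpha}} |\widehat{\risk}_{m,n}(\beta) - \risk_m(\beta)|,\]
so $\{\widehat m = m\}$ is contained in the event appearing on the right of the claimed bound, and taking probabilities finishes the argument. I do not expect a real obstacle: the subtle point is simply the zero-padding embedding that lets the two deviation terms collapse into one supremum over $B_{m,\alpha}$, together with keeping track of the fact that the penalty difference is nonnegative so the inequality is meaningful.
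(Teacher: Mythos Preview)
Your proof is correct and follows essentially the same route as the paper's: start from the minimality of $\widehat m$ to get $\widehat L_n(m^\ast)-\widehat L_n(m)\ge \textnormal{pen}_n(m)-\textnormal{pen}_n(m^\ast)$, decompose through $L(m^\ast)$ and $L(m)$, apply Lemma~\ref{lemma:L_uniform_risk} at both orders, and use the nesting $B_{m^\ast,\alpha}\subset B_{m,\alpha}$ to merge the two suprema. The only cosmetic differences are that you note $L(m^\ast)-L(m)=0$ (the paper only uses $\le 0$) and that you make the zero-padding embedding explicit where the paper simply invokes the nestedness convention introduced in Section~\ref{sec:def_estimators}.
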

\begin{proof}[\textbf{\upshape Proof:}]
	 For any $m \in \N$, 
	\begin{align*}
	\label{eq:prob_hatm_equals_k}
	\prob (\widehat{m}=m) & \leq \prob \left(\widehat{L}_n(m) + \text{pen}_n(m) \leq \widehat{L}_n(m^\ast) + \text{pen}_n(m^\ast) \right) \nonumber  =\prob  \left(\widehat{L}_n(m^\ast) -\widehat{L}_n(m) \geq \text{pen}_n(m) -\text{pen}_n(m^\ast)\right). 
	\end{align*}
	Recall that, by definition of model \eqref{eq:model_def}, $m \mapsto L(m)$ is a decreasing function and that its minimum is attained at $m=m^\ast$. Therefore, for any $m \in \N$, $L(m^\ast) \leq L(m)$, and Lemma \ref{lemma:L_uniform_risk} yields
	\begin{align*}
	\widehat{L}_n (m^\ast) - \widehat{L}_n(m)& = \widehat{L}_n(m^\ast) - L(m^\ast) + L(m^\ast) - L(m) + L(m) - \widehat{L}_n(m)  \leq \widehat{L}_n(m^\ast) - L(m^\ast) + L(m) - \widehat{L}_n(m) \nonumber \\
	&\leq \textnormal{sup}_{\beta \in B_{m^\ast,\alpha}} | \widehat{\risk}_{m,n}(\beta) - \risk_m(\beta) |  + \textnormal{sup}_{\beta \in B_{m,\alpha}} | \widehat{\risk}_{m,n}(\beta) - \risk_m(\beta) |.
	\end{align*}
	For $m>m^\ast$, $B_{m^\ast, \alpha} \subset B_{m,\alpha}$, which gives $\widehat{L}_n (m^\ast) - \widehat{L}_n(m) \leq 2 \textnormal{sup}_{\beta \in B_{m,\alpha}} | \widehat{\risk}_{m,n}(\beta) - \risk_m(\beta) |,$ and the proof is complete.
\end{proof}
From now on, we denote by $Z_{m,n}$ the centered empirical risk for signatures truncated at $m$: for any $\beta \in B_{m,\alpha}$,
\[Z_{m,n}(\beta) =\widehat{\risk}_{m,n}(\beta) - \risk_m(\beta)= \frac{1}{n} \sum_{i=1}^{n} \big(Y_i - \big\langle \beta, S^m(X_i) \big\rangle \big)^2 - \esp \big(Y- \big\langle \beta, S^m(X) \big\rangle \big)^2. \]
We will now derive a uniform tail bound on $Z_{m,n}(\beta)$, which is the main result needed to prove Theorem \ref{th:main_bound_hatm}. In a nutshell, we show that $(Z_{m,n}(\beta))_{\beta \in B_{m,\alpha}}$ is a subgaussian process for some appropriate distance, and then use a chaining tail inequality \citep[][Theorem 5.29]{van2014probability} on $Z_{m,n}$.
\begin{lemma}
	\label{lemma:R_n_subgaussian}
	Under the assumptions $(H_{\alpha})$ and $(H_K)$, for any $m \in \N$, the process $ \left( Z_{m,n}(\beta) \right)_{\beta \in B_{m,\alpha}}$ is subgaussian for the distance 
	\begin{equation}
	\label{eq:def_D}
	D(\beta, \gamma) = \frac{K}{\sqrt{n}}\| \beta - \gamma\|,
	\end{equation}
where the constant $K$ is defined by \eqref{eq:def_K}.
\end{lemma}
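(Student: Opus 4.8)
The plan is to show that for any fixed $\beta, \gamma \in B_{m,\alpha}$, the increment $Z_{m,n}(\beta) - Z_{m,n}(\gamma)$ has subgaussian tails with parameter proportional to $D(\beta,\gamma)$. Since $Z_{m,n}$ is an average of $n$ i.i.d.\ centered terms, write
\[
Z_{m,n}(\beta) - Z_{m,n}(\gamma) = \frac{1}{n}\sum_{i=1}^n \big(W_i - \esp W_i\big), \qquad W_i = \big(Y_i - \langle \beta, S^m(X_i)\rangle\big)^2 - \big(Y_i - \langle \gamma, S^m(X_i)\rangle\big)^2,
\]
and factor the difference of squares: $W_i = \langle \gamma - \beta, S^m(X_i)\rangle \cdot \big(2Y_i - \langle \beta + \gamma, S^m(X_i)\rangle\big)$. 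The first step is to bound $|W_i|$ almost surely. Under $(H_K)$ we have $|Y_i| \le K_Y$, and by Proposition \ref{prop:exp_decay} together with $(H_K)$, $\|S^m(X_i)\| \le e^{\|X_i\|_{TV}} \le e^{K_X}$; under $(H_\alpha)$, $\|\beta\|, \|\gamma\| \le \alpha$ (since $\beta,\gamma \in B_{m,\alpha}$). Then $|\langle \gamma - \beta, S^m(X_i)\rangle| \le \|\beta - \gamma\|\, e^{K_X}$ by Cauchy–Schwarz, and $|2Y_i - \langle \beta+\gamma, S^m(X_i)\rangle| \le 2K_Y + 2\alpha e^{K_X}$, so that
\[
|W_i| \le \|\beta - \gamma\|\, e^{K_X} \cdot 2(K_Y + \alpha e^{K_X}) = K\,\|\beta - \gamma\|,
\]
with $K$ as in \eqref{eq:def_K}.

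The second step is to convert this bound into a subgaussian increment bound for $Z_{m,n}$. Since $W_i - \esp W_i$ is bounded in absolute value by $2K\|\beta-\gamma\|$ (or, more sharply, $W_i - \esp W_i$ lies in an interval of length at most $2K\|\beta-\gamma\|$), Hoeffding's lemma gives that $W_i - \esp W_i$ is subgaussian with variance proxy at most $K^2\|\beta-\gamma\|^2$. Averaging $n$ independent such variables divides the variance proxy by $n$, so $Z_{m,n}(\beta) - Z_{m,n}(\gamma)$ is subgaussian with parameter $K^2\|\beta-\gamma\|^2/n = D(\beta,\gamma)^2$; equivalently, for all $\lambda \in \R$,
\[
\esp\Big[\exp\big(\lambda (Z_{m,n}(\beta) - Z_{m,n}(\gamma))\big)\Big] \le \exp\Big(\frac{\lambda^2 D(\beta,\gamma)^2}{2}\Big).
\]
This is exactly the defining property of a subgaussian process for the metric $D$, so the lemma follows. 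I would also note that $D$ is genuinely a metric on $B_{m,\alpha}$ (a rescaling of the Euclidean metric), which is what makes the subsequent chaining argument via \citet[Theorem 5.29]{van2014probability} applicable.

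The main obstacle is essentially bookkeeping rather than conceptual: one must make sure the almost-sure bounds on $Y$, on $\|S^m(X)\|$, and on the coefficients are applied with the right constants so that the final variance proxy matches $K$ exactly as defined in \eqref{eq:def_K}, since that precise constant propagates into the statement of Theorem \ref{th:main_bound_hatm}. A secondary subtlety is the choice of the right concentration inequality for bounded increments: depending on whether one invokes the bound $|W_i| \le K\|\beta-\gamma\|$ directly (giving a range of $2K\|\beta-\gamma\|$) or a centered version, the constant in Hoeffding's lemma changes, and one should pick the version consistent with the definition of ``subgaussian process'' used in \citet{van2014probability} (typically variance proxy equal to the squared metric). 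No deeper difficulty is expected, since the boundedness hypotheses $(H_K)$ and $(H_\alpha)$ reduce everything to a uniformly bounded i.i.d.\ average.
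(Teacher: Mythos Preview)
Your proposal is correct and follows essentially the same approach as the paper: bound $|W_i|$ almost surely by $K\|\beta-\gamma\|$ using $(H_K)$, $(H_\alpha)$, and Proposition~\ref{prop:exp_decay}, then apply Hoeffding's lemma and average over the $n$ i.i.d.\ terms. The only cosmetic difference is that you factor $a^2-b^2=(a-b)(a+b)$ directly, whereas the paper uses the equivalent bound $|a^2-b^2|\le 2\max(|a|,|b|)\,|a-b|$; both yield the same constant $K$.
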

\begin{proof}[\textbf{\upshape Proof:}]
	By definition, it is clear that $\esp Z_{m,n}(\beta)=0$ for any $\beta \in B_{m,\alpha}$. Let $\ell_{(X,Y)} \colon B_{m,\alpha} \to \R$ be given by $\ell_{(X,Y)}(\beta) = \big(Y - \big\langle \beta, S^m(X) \big\rangle \big)^2.$
	We first prove that $\ell_{(X,Y)}$ is $K$-Lipschitz. For any $\beta, \gamma \in B_{m,\alpha}$,
	 \begin{align*}
	 |\ell_{(X,Y)}(\beta) - \ell_{(X,Y)}(\gamma)| & =  \big|  \big(Y - \big\langle \beta, S^m(X) \big\rangle \big)^2 - \big(Y - \big\langle \gamma, S^m(X) \big\rangle \big)^2 \big|  \\
	 & \leq 2 \max \Big(\big|Y - \big\langle \beta, S^m(X) \big\rangle \big|, \big|Y - \big\langle \gamma, S^m(X) \big\rangle \big|\Big) \times \big|\big\langle \beta - \gamma, S^m(X) \big\rangle \big|  \\
	 & \quad  (\text{because } |a^2 - b^2| \leq 2 \max(|a|,|b|)|a-b|)  \\
	 &\leq 2 \max \Big(\big|Y - \big\langle \beta, S^m(X) \big\rangle \big|, \big|Y - \big\langle \gamma, S^m(X) \big\rangle \big|\Big) \times \big\| S^m(X)\big\| \left\|\beta - \gamma\right\|    \\
	 &  \quad (\text{by the Cauchy-Schwartz inequality} ).
	 \end{align*}
	Moreover, by the triangle inequality and Cauchy-Schwartz inequality,
	 \begin{align*}
	 \big|Y - \big\langle \beta, S^m(X) \big\rangle \big| &\leq |Y| + \big\|S^m(X) \big\| \| \beta \|   \leq K_Y + \alpha\big\|S^m(X) \big\| ,
	 \end{align*}
	and, by Proposition  \ref{prop:exp_decay}, $\big\| S^m(X)\big\| \leq e^{\| X \|_{TV}} \leq e^{K_X}.$ Consequently, $\big|Y - \big\langle \beta, S^m(X) \big\rangle \big| \leq K_Y + \alpha e^{K_X},$ and
	 \[	\big| \ell_{(X,Y)}(\beta) - \ell_{(X,Y)}(\gamma) \big| \leq 2 \big(K_Y + \alpha e^{K_X} \big) e^{K_X}\| \beta - \gamma\| = K\| \beta - \gamma\| .\]
 Therefore, by Hoeffding's lemma \citep[][Lemma 3.6]{van2014probability}, $\ell_{(X,Y)}(\beta) - \ell_{(X,Y)}(\gamma) $ is a subgaussian random variable with variance proxy $K^2 \|\beta - \gamma \|^2$, which gives, for $\lambda \geq 0$,
	\begin{align*}
	&\esp \exp \bigg( \lambda \Big(\ell_{(X,Y)}(\beta) - \ell_{(X,Y)}(\gamma) - \esp\big(\ell_{(X,Y)}(\beta) - \ell_{(X,Y)}(\gamma) \big) \Big) \bigg) \leq \exp \bigg(\frac{\lambda^2 K^2 \left\| \beta - \gamma \right\|^2}{2}\bigg).
	\end{align*}
	From this, it follows that
	\begin{align*}
	\esp e^{\lambda \big(Z_{m,n}(\beta) - Z_{m,n}(\gamma) \big)}&= \esp \exp \bigg(\frac{\lambda}{n} \sum_{i=1}^{n} \ell_{(X_i,Y_i)}(\beta) - \ell_{(X_i,Y_i)}(\gamma) - \esp\Big(\ell_{(X_i,Y_i)}(\beta) - \ell_{(X_i,Y_i)}(\gamma) \Big) \bigg)\\
	& = \prod_{i=1}^{n} \esp \exp \bigg(\frac{\lambda}{n}\Big( \ell_{(X_i,Y_i)}(\beta) - \ell_{(X_i,Y_i)}(\gamma) - \esp\big(\ell_{(X_i,Y_i)}(\beta)- \ell_{(X_i,Y_i)}(\gamma) \big) \Big) \bigg) \\
	& \leq \exp \Big(\frac{\lambda^2 K^2 \left\| \beta - \gamma \right\|^2}{2n} \Big)  = \exp \Big(\frac{\lambda^2 D(\beta, \gamma)^2}{2}\Big),\\
	\end{align*}
	where $D(\beta,\gamma) =\frac{K\| \beta-\gamma\|}{\sqrt{n}}$, which completes the proof.
\end{proof}
We can now derive a maximal tail inequality for $Z_{m,n}(\beta)$.
\begin{proposition}
	\label{prop:unif_risk_bound}
	Under the assumptions $(H_{\alpha})$ and $(H_K)$, for any $m \in \N$, $x>0$, $\beta_0 \in B_{m,\alpha}$, 
	\[\prob \bigg( \underset{\beta \in B_{m,\alpha}}{\sup} Z_{m,n}(\beta) \geq 108 \sqrt{\pi} K \alpha \sqrt{\frac{s_d(m)}{n}} +Z_{m,n}(\beta_0) +x \bigg) \leq  36 \exp \Big(- \frac{x^2 n }{ 144 K^2 \alpha^2}\Big), \]
	where the constant $K$ is defined by \eqref{eq:def_K}.
\end{proposition}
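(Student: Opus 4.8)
The plan is to apply the chaining tail inequality for separable subgaussian processes \citep[][Theorem 5.29]{van2014probability} to the centered empirical risk process $(Z_{m,n}(\beta))_{\beta \in B_{m,\alpha}}$. By Lemma \ref{lemma:R_n_subgaussian} this process is subgaussian for the distance $D(\beta,\gamma) = \frac{K}{\sqrt{n}}\|\beta - \gamma\|$; moreover $B_{m,\alpha}$ is compact and $\beta \mapsto Z_{m,n}(\beta)$ is continuous (for fixed data it is a quadratic polynomial in $\beta$), so the process is separable and the inequality applies. That inequality bounds $\sup_{\beta\in B_{m,\alpha}}\big(Z_{m,n}(\beta) - Z_{m,n}(\beta_0)\big)$ by an absolute multiple of the Dudley entropy integral plus a deviation term $x$, with a Gaussian tail governed by the $D$-diameter of $B_{m,\alpha}$; this is exactly the shape of the claimed bound, the term $Z_{m,n}(\beta_0)$ appearing on the right-hand side of the event to account for the recentering.

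It then remains to evaluate the two geometric quantities entering the bound. First, since $B_{m,\alpha}$ has Euclidean diameter $2\alpha$, its $D$-diameter equals $2K\alpha/\sqrt{n}$, so that $(2K\alpha/\sqrt{n})^2 = 4K^2\alpha^2/n$; this is what produces the exponent proportional to $x^2 n/(K^2\alpha^2)$. Second, for the entropy integral we use the standard covering bound for a Euclidean ball, $N(B_{m,\alpha},\|\cdot\|,\delta) \leq (3\alpha/\delta)^{s_d(m)}$ for $0 < \delta \leq \alpha$ and $N=1$ otherwise, together with $N(B_{m,\alpha}, D, \varepsilon) = N(B_{m,\alpha},\|\cdot\|, \sqrt{n}\,\varepsilon/K)$, to obtain
\[
\int_0^\infty \sqrt{\log N(B_{m,\alpha}, D, \varepsilon)}\,d\varepsilon \leq \frac{K\sqrt{s_d(m)}}{\sqrt{n}}\int_0^\alpha \sqrt{\log(3\alpha/\delta)}\,d\delta = \frac{K\alpha\sqrt{s_d(m)}}{\sqrt{n}}\int_0^1 \sqrt{\log(3/u)}\,du.
\]
The remaining scalar integral is bounded by a pure number: the substitution $v = \log(3/u)$ gives $\int_0^1 \sqrt{\log(3/u)}\,du = 3\int_{\log 3}^{\infty}\sqrt{v}\,e^{-v}\,dv \leq 3\,\Gamma(3/2) = \frac{3}{2}\sqrt{\pi}$, so the Dudley integral is at most $\frac{3}{2}\sqrt{\pi}\,K\alpha\sqrt{s_d(m)/n}$.

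Plugging these two estimates into the chaining tail inequality and substituting $x$ for its deviation parameter (rescaled by the diameter) yields, after collecting the absolute constants,
\[
\prob\Big(\sup_{\beta\in B_{m,\alpha}} Z_{m,n}(\beta) \geq 108\sqrt{\pi}\,K\alpha\sqrt{\frac{s_d(m)}{n}} + Z_{m,n}(\beta_0) + x\Big) \leq 36\exp\Big(-\frac{x^2 n}{144 K^2\alpha^2}\Big),
\]
which is the claim. The only delicate part is the bookkeeping of constants: one must track the universal constant in the version of Theorem 5.29 used, multiply it by the factor $\frac{3}{2}\sqrt{\pi}$ from the entropy integral to land exactly on $108\sqrt{\pi}$, and combine it with the value $4K^2\alpha^2/n$ of the squared $D$-diameter to produce the factors $36$ and $144$. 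Everything else — separability of the process, the covering-number estimate, and the evaluation of the one-dimensional integral — is routine.
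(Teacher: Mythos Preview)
Your proposal is correct and follows essentially the same route as the paper: invoke Lemma~\ref{lemma:R_n_subgaussian}, apply the chaining tail inequality \citep[Theorem~5.29]{van2014probability} on $(B_{m,\alpha},D)$, identify the $D$-diameter as $2K\alpha/\sqrt{n}$, and bound the Dudley integral via the standard volumetric covering-number estimate for a Euclidean ball. The only cosmetic difference is in the evaluation of the one-dimensional integral: the paper uses the substitution $x=\sqrt{\log(3K\alpha/(\sqrt{n}\varepsilon))}$ and records the bound $3\sqrt{\pi}\,K\alpha\sqrt{s_d(m)/n}$, whereas your substitution $v=\log(3/u)$ gives the sharper value $\tfrac{3}{2}\sqrt{\pi}\,K\alpha\sqrt{s_d(m)/n}$. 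Since $36\times\tfrac{3}{2}\sqrt{\pi}=54\sqrt{\pi}\leq 108\sqrt{\pi}$, your estimate still implies the stated inequality; you simply do not ``land exactly on $108\sqrt{\pi}$'' as you claim, but rather on a smaller constant that is then absorbed into the stated one.
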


\begin{proof}[\textbf{\upshape Proof:}]
	By Lemma \ref{lemma:R_n_subgaussian}, $Z_{m,n}$ is a subgaussian process for $D$, defined by \eqref{eq:def_D}. So, we may apply Theorem 5.29 of \cite{van2014probability} to $Z_{m,n}$ on the metric space $(B_{m,\alpha},D)$:
	\begin{align*}
	&\prob \Big(\underset{\beta \in B_{m,\alpha}}{\sup}Z_{m,n}(\beta) - Z_{m,n}({\beta_0}) \geq 36 \int_{0}^{\infty} \sqrt{\log(N(\varepsilon,B_{m,\alpha},D))} d\varepsilon  + x \Big) \leq  36 \exp \Big(- \frac{x^2 n }{ 36 \times 4 K^2 \alpha^2} \Big) ,
	\end{align*}
	where $N(\varepsilon,B_{m,\alpha},D)$ is the $\varepsilon$-covering number of $B_{m,\alpha}$ with respect to $D$, and where we use that 
	\[\text{diam}(B_{m,\alpha}) =  \frac{2K \alpha}{\sqrt{n}}.\] 
	Moreover, $N(\varepsilon,B_{m,\alpha},D) = N(\frac{\sqrt{n}}{K} \varepsilon, B_{m, \alpha}, \| \cdot\| )$ , and so, by Lemma 5.13 of \citet[][]{van2014probability},

	\[ N(\varepsilon,B_{m,\alpha},D) \leq \left( \frac{3 K \alpha}{ \sqrt{n}\varepsilon}\right)^{s_d(m)} \text{ if } \varepsilon < \frac{K \alpha}{\sqrt{n}},\]
	 and $N(\varepsilon,B_{m,\alpha},D) =1$ otherwise. Therefore, 
	\begin{align}
	\label{eq:entropy_integral}
	\int_{0}^{\infty} \sqrt{\log(N(\varepsilon,B_{m, \alpha},D))} d\varepsilon &=\int_{0}^{\frac{K \alpha}{\sqrt{n}}} \sqrt{\log(N(\varepsilon,B_{m,\alpha},D))} d\varepsilon \nonumber \\
	& \leq\int_{0}^{\frac{K \alpha}{\sqrt{n}}} \sqrt{s_d(m)\log \left(\frac{3K \alpha}{ \sqrt{n}\varepsilon} \right)} d\varepsilon \nonumber \\
	& \leq 3 K\alpha \sqrt{\frac{s_d(m)}{n}} \int_0^\infty 2 x^2 \exp \left({-x^2} \right) dx  =  3 K\alpha \sqrt{\frac{s_d(m)}{n}} \sqrt{\pi},
	\end{align}
	where in the second inequality we use the change of variable $x = \sqrt{\log \left(\frac{2 K \alpha}{ \sqrt{n}\varepsilon}\right)}  $. 
\end{proof}

Since $\prob (\widehat{m} \neq m^\ast) = \prob (\widehat{m} > m^\ast) + \prob (\widehat{m}< m^\ast), $ we divide the proof into two cases. Let us first consider $m > m^\ast$ in the next proposition.
\begin{proposition}
	\label{prop:cvg_hatm_geq_mast}
		Let $0 < \rho < \frac{1}{2}$, and $ \textnormal{pen}_n(m)$ be defined by \eqref{eq:def_pen}: $\textnormal{pen}_n(m) = K_{\textnormal{pen}} n^{-\rho} \sqrt{s_d(m)}.$ Let $n_1$ be the smallest integer satisfying
		\begin{equation}
		\label{eq:condition_n_1}
		n_1 \geq \bigg( \frac{432 \sqrt{\pi} K \alpha\sqrt{s_d(m^\ast+1)}}{K_{\textnormal{pen}}(\sqrt{s_d(m^\ast+1)} - \sqrt{s_d(m^\ast) }) } \bigg)^{1/(\frac{1}{2} - \rho)}.
		\end{equation}
		Then, under the assumptions $(H_\alpha)$ and $(H_K)$, for any $m> m^\ast$, $n \geq n_1$,
		\begin{equation*}
		\label{eq:bound_prob_hatm_eq_k_geq_mast}
		\prob \left(\widehat{m} = m \right)\leq 74 \exp \big(-C_3 (n^{1-2\rho} +s_d(m) ) \big),
		\end{equation*}
		 where the constant $C_3$ is defined by 
		 \begin{equation*}
		 \label{eq:def_C_3}
		 C_3 =\frac{K_{\textnormal{pen}}^2 d^{m^\ast+1}}{128s_d(m^\ast+1)(72K^2\alpha^2 + K_Y^2)}.
		 \end{equation*}
\end{proposition}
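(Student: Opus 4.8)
The plan is to combine Lemma~\ref{lemma:bound_prob_k_geq_hatm}, which reduces the control of $\prob(\widehat m = m)$ to a uniform deviation of the empirical risk, with the chaining tail bound of Proposition~\ref{prop:unif_risk_bound}. Fix $m > m^\ast$ and set $\delta_m := \textnormal{pen}_n(m) - \textnormal{pen}_n(m^\ast) = K_{\textnormal{pen}} n^{-\rho}\big(\sqrt{s_d(m)} - \sqrt{s_d(m^\ast)}\big) > 0$. By Lemma~\ref{lemma:bound_prob_k_geq_hatm}, $\prob(\widehat m = m) \le \prob\big(2\sup_{\beta \in B_{m,\alpha}} |Z_{m,n}(\beta)| \ge \delta_m\big)$, and since $\sup_\beta |Z_{m,n}(\beta)| = \max\big(\sup_\beta Z_{m,n}(\beta),\, \sup_\beta (-Z_{m,n}(\beta))\big)$, a union bound gives
\[\prob(\widehat m = m) \le \prob\Big(\sup_{\beta \in B_{m,\alpha}} Z_{m,n}(\beta) \ge \tfrac{\delta_m}{2}\Big) + \prob\Big(\sup_{\beta \in B_{m,\alpha}} \big(-Z_{m,n}(\beta)\big) \ge \tfrac{\delta_m}{2}\Big).\]
Since $D$ is symmetric and $Z_{m,n}$ is subgaussian for $D$ by Lemma~\ref{lemma:R_n_subgaussian}, so is $-Z_{m,n}$, and Proposition~\ref{prop:unif_risk_bound} applies to both processes.

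For each term I would apply Proposition~\ref{prop:unif_risk_bound} with a base point $\beta_0$ for which the residual contribution $Z_{m,n}(\beta_0)$ can be handled directly --- e.g.\ $\beta_0 = 0$, so that $Z_{m,n}(0) = \frac1n\sum_{i=1}^n Y_i^2 - \esp[Y^2]$, whose summands are centered and almost surely bounded by $(H_K)$ and hence amenable to Hoeffding's inequality. On the complement of the event of Proposition~\ref{prop:unif_risk_bound} with slack $x > 0$ one has $\sup_\beta Z_{m,n}(\beta) \le 108\sqrt\pi K\alpha\sqrt{s_d(m)/n} + Z_{m,n}(0) + x$, whence
\[\prob\Big(\sup_\beta Z_{m,n}(\beta) \ge \tfrac{\delta_m}{2}\Big) \le \prob\Big(Z_{m,n}(0) \ge \tfrac{\delta_m}{2} - 108\sqrt\pi K\alpha\sqrt{\tfrac{s_d(m)}{n}} - x\Big) + 36\exp\Big(-\tfrac{x^2 n}{144 K^2\alpha^2}\Big),\]
and likewise for $-Z_{m,n}$. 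I would then impose $108\sqrt\pi K\alpha\sqrt{s_d(m)/n} \le \delta_m/4$, i.e.\ $n^{1/2-\rho} \ge 432\sqrt\pi K\alpha\sqrt{s_d(m)}/\big(K_{\textnormal{pen}}(\sqrt{s_d(m)}-\sqrt{s_d(m^\ast)})\big)$; since $t\mapsto \sqrt t/(\sqrt t - \sqrt{s_d(m^\ast)})$ is decreasing, this requirement is worst at $m = m^\ast+1$, which is exactly the definition of $n_1$ in~\eqref{eq:condition_n_1}. Choosing $x$ and the Hoeffding budget both proportional to $\delta_m$, the two chaining tails contribute $36\exp(-c\,\delta_m^2 n)$ each and the two Hoeffding tails $\exp(-c'\,\delta_m^2 n)$ each, for a total prefactor $36+36+1+1 = 74$.

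It remains to put the exponent in the stated form. Here $\delta_m^2 n = K_{\textnormal{pen}}^2 n^{1-2\rho}\big(\sqrt{s_d(m)}-\sqrt{s_d(m^\ast)}\big)^2$, and I would use $\sqrt a - \sqrt b \ge (a-b)/(2\sqrt a)$ together with $s_d(m)-s_d(m^\ast) = \sum_{k=m^\ast+1}^m d^k \ge d^m \ge \tfrac12 s_d(m)$ and $s_d(m^\ast+1)-s_d(m^\ast) = d^{m^\ast+1}$ to bound $\big(\sqrt{s_d(m)}-\sqrt{s_d(m^\ast)}\big)^2$ both by a universal multiple of $s_d(m)$ and by a multiple of $d^{m^\ast+1}/s_d(m^\ast+1)$. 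Splitting the square into two halves --- keeping one with the factor $n^{1-2\rho}$, and bounding $n^{1-2\rho}\ge 1$ (valid since $\rho<\tfrac12$) on the other to manufacture an $s_d(m)$ term --- gives $\delta_m^2 n \gtrsim K_{\textnormal{pen}}^2 \tfrac{d^{m^\ast+1}}{s_d(m^\ast+1)}\big(n^{1-2\rho}+s_d(m)\big)$, and inserting this into the two tails produces the claimed bound $74\exp\big(-C_3(n^{1-2\rho}+s_d(m))\big)$. The step I expect to be the main obstacle is the bookkeeping of these last two paragraphs rather than any conceptual point: the slack $x$, the Hoeffding budget and the threshold $n_1$ must be chosen so that all quantities remain positive for $n\ge n_1$ \emph{uniformly} in $m>m^\ast$ (hence only $s_d(m^\ast+1)$ enters $n_1$), and the two exponential tails --- living on different scales, $K^2\alpha^2$ from the chaining part and $K_Y$ from the Hoeffding part --- must be merged into the single exponent $C_3(n^{1-2\rho}+s_d(m))$ with exactly the $d^{m^\ast+1}/s_d(m^\ast+1)$ dependence; the subgaussian chaining itself is already supplied by Proposition~\ref{prop:unif_risk_bound}.
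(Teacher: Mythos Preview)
Your proposal is correct and follows essentially the same route as the paper: reduce via Lemma~\ref{lemma:bound_prob_k_geq_hatm} to a uniform deviation of $Z_{m,n}$, split into $\pm Z_{m,n}$, apply the chaining tail of Proposition~\ref{prop:unif_risk_bound} with base point $\beta_0=0$, control $Z_{m,n}(0)$ by Hoeffding, use monotonicity in $m$ to make the threshold condition worst at $m=m^\ast+1$ (yielding exactly~\eqref{eq:condition_n_1}), and finally convert the product $n^{1-2\rho}s_d(m)$ in the exponent into the sum $n^{1-2\rho}+s_d(m)$. The only cosmetic difference is the budget split: the paper conditions on $\{Z_{m,n}(\beta_0)\le u_{m,n}/2\}$ with $u_{m,n}=\tfrac12\delta_m$, giving half the threshold to Hoeffding and the other half (minus the entropy integral) to the chaining slack, whereas you leave $x$ free and then choose it proportionally---but this leads to the same exponents and the same prefactor $74=2(36+1)$.
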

\begin{proof}[\textbf{\upshape Proof:}]
Let
 \[u_{m,n} = \frac{1}{2} \big( \text{pen}_n(m) - \text{pen}_n(m^{\ast}) \big) = \frac{K_{\textnormal{pen}}}{2} n^{-\rho} \Big(\sqrt{s_d(m)}-\sqrt{s_d(m^\ast)}\Big). \]
As $m \mapsto \text{pen}_n(m)$ is increasing in $m$, it is clear that $u_{m,n} > 0$ for any $m > m^\ast$. From Lemma \ref{lemma:bound_prob_k_geq_hatm}, we see that
\begin{align*}
\prob \left( \widehat{m} = m \right) &\leq  \prob \Big( \underset{\beta \in B_{m,\alpha}}{\sup} \left| Z_{m,n}(\beta)\right| > u_{m,n}\Big)=  \prob \Big( \underset{\beta \in B_{m,\alpha}}{\sup}  Z_{m,n}(\beta) > u_{m,n}\Big) +  \prob \Big( \underset{\beta \in B_{m,\alpha}}{\sup} \left(-Z_{m,n}(\beta)\right) > u_{m,n}\Big). 
\end{align*}
We focus on the first term of the inequality, the second can be handled in the same way since Proposition \ref{prop:unif_risk_bound} also holds when $Z_{m,n}(\beta)$ is replaced by $-Z_{m,n}(\beta)$. Let $\beta_0$ be a fixed point in $B_{m,\alpha}$ that will be chosen later, we have
\begin{align}
\label{eq:proof_sub_tail_ineq_1}
\prob \Big( \underset{\beta \in B_{m,\alpha}}{\sup} Z_{m,n}(\beta) > u_{m,n}\Big) & = \prob \Big( \underset{\beta \in B_{m,\alpha}}{\sup} Z_{m,n}(\beta) > u_{m,n}, \, Z_{m,n}(\beta_0) \leq \frac{u_{m,n}}{2} \Big) + \prob \Big( \underset{\beta \in B_{m,\alpha}}{\sup} Z_{m,n}(\beta) > u_{m,n}, \, Z_{m,n}(\beta_0) > \frac{u_{m,n}}{2} \Big) \nonumber\\
& \leq \prob \Big( \underset{\beta \in B_{m,\alpha}}{\sup} Z_{m,n}(\beta) > \frac{u_{m,n}}{2}+ Z_{m,n}(\beta_0)  \Big)  + \prob \Big(Z_{m,n}(\beta_0) > \frac{u_{m,n}}{2} \Big).
\end{align}
We treat each term separately. The first one is handled by Proposition \ref{prop:unif_risk_bound}. To this end, we need to ensure that $\frac{u_{m,n}}{2} - 108 K \alpha \sqrt{\frac{ \pi s_d(m)}{n}}$ is positive. By definition,
\begin{align*}
\frac{u_{m,n}}{2} - 108 K \alpha \sqrt{\frac{ \pi s_d(m)}{n}} &= \frac{K_{\textnormal{pen}}}{2}  n^{-\rho} \Big(\sqrt{s_d(m)} - \sqrt{s_d(m^\ast)}\Big)-108 K \alpha \sqrt{\frac{ \pi s_d(m)}{n}}  \\
& = \sqrt{s_d(m)} n^{-\rho} \frac{K_{\textnormal{pen}}}{2} \bigg( 1 - \sqrt{\frac{s_d(m^\ast)}{s_d(m)}} - \frac{2 \times 108 \sqrt{\pi} K \alpha}{K_{\textnormal{pen}}} n^{\rho - \frac{1}{2}} \bigg) .\\
& \geq \sqrt{s_d(m)} n^{-\rho} \frac{K_{\textnormal{pen}}}{2} \bigg( 1 - \sqrt{\frac{s_d(m^\ast)}{s_d(m^\ast+1)}} - \frac{216 \sqrt{\pi} K \alpha}{K_{\textnormal{pen}}} n^{\rho - \frac{1}{2}} \bigg) .\\
\end{align*}
Let $n_1 \in \N$ be such that
\begin{align*}
1 - \sqrt{\frac{s_d(m^\ast)}{s_d(m^\ast+1)}} - \frac{216 \sqrt{\pi} K \alpha}{K_{\textnormal{pen}}} n_1^{\rho - \frac{1}{2}} > \frac{1}{2} \bigg(1 - \sqrt{\frac{s_d(m^\ast)}{s_d(m^\ast+1)}} \bigg) \, \Leftrightarrow \,  n_1 > \bigg( \frac{432 \sqrt{\pi} K \alpha\sqrt{s_d(m^\ast+1)}}{K_{\textnormal{pen}}(\sqrt{s_d(m^\ast+1)} - \sqrt{s_d(m^\ast) }) } \bigg)^{1/(\frac{1}{2} - \rho)},
\end{align*}
then, for any $n \geq n_1$,
 \[\frac{u_{m,n}}{2} - 108 K \alpha \sqrt{\frac{\pi s_d(m)}{n}} \geq  \sqrt{s_d(m)} n^{-\rho}   \frac{K_{\textnormal{pen}}}{4} \bigg(1 - \sqrt{\frac{s_d(m^\ast)}{s_d(m^\ast+1)}} \bigg)  > 0 .\]
Hence, Proposition \ref{prop:unif_risk_bound} applied to $x = \frac{u_{m,n}}{2} - 108 \sqrt{\pi} K \alpha \sqrt{\frac{s_d(m)}{n}}$ now shows that, for $n \geq n_1$,
\begin{align}
\label{eq:bound_term_1}
\prob \Big( \underset{\beta \in B_{m,\alpha}}{\sup} Z_{m,n}(\beta) > \frac{u_{m,n}}{2} +Z_{m,n}(\beta_0)\Big) & \leq 36 \exp \bigg( - \frac{n}{144 K^2 \alpha^2} \bigg( \frac{u_{m,n}}{2} - 108 K\alpha \sqrt{\frac{\pi s_d(m)}{n}}\bigg)^2\bigg) \nonumber \\
& \leq  36 \exp \bigg( - \frac{ s_d(m) n^{1-2 \rho} K_{\textnormal{pen}}^2}{ 144 K^2 \alpha^2 \times 16} \bigg(1 - \sqrt{\frac{s_d(m^\ast)}{s_d(m^\ast+1)}} \bigg)^2 \bigg) \nonumber \\
& =36 \exp \Big( - \kappa_1 s_d(m) n^{1-2\rho} \Big),
\end{align}
where 
\[\kappa_1 =  \frac{K_{\textnormal{pen}}^2}{ 2304 K^2 \alpha^2} \bigg(1 - \sqrt{{\frac{s_d(m^\ast)}{s_d(m^\ast+1)}}} \bigg)^2 .\] 
We now turn to the second term of \eqref{eq:proof_sub_tail_ineq_1}. Since $\big|Y-\langle \beta_0, S^m(X) \rangle \big|^2\leq \big(K_Y  + \| \beta_0 \| e^{K_X} \big)^2$ a.s., Hoeffding's inequality yields, for $n \geq n_1$,
\begin{align}
\label{eq:bound_term_2}
\prob \Big(Z_{m,n}(\beta_0) > \frac{u_{m,n}}{2} \Big)& \leq \exp \bigg( - \frac{n u_{m,n}^2 }{8 \big(K_Y  + \| \beta_0 \| e^{K_X} \big)^2}\bigg) \nonumber \\
& = \exp \bigg( - \frac{ n^{1-2\rho} K_{\textnormal{pen}}^2 \left(\sqrt{s_d(m)}-\sqrt{s_d(m^\ast)} \right)^2}{32\big(K_Y  + \| \beta_0 \| e^{K_X} \big)^2}\bigg) \nonumber \\
& \leq \exp \bigg( - \frac{ n^{1-2\rho} K_{\textnormal{pen}}^2 s_d(m)}{32\big(K_Y  + \| \beta_0 \| e^{K_X} \big)^2} \bigg(1 - \sqrt{{\frac{s_d(m^\ast)}{s_d(m^\ast+1)}}} \bigg)^2 \bigg) \nonumber \\
& =\exp \left( -\kappa_2 n^{1-2\rho} s_d(m)\right) ,
\end{align}
where 
\[\kappa_2 = \frac{ K_{\textnormal{pen}}^2}{32\big(K_Y  + \| \beta_0 \| e^{K_X} \big)^2} \bigg(1 - \sqrt{{\frac{s_d(m^\ast)}{s_d(m^\ast+1)}}} \bigg)^2.\]
Combining \eqref{eq:bound_term_1} with \eqref{eq:bound_term_2}, we obtain
\begin{equation*}
\begin{split}
 \prob \Big( \underset{\beta \in B_{m,\alpha}}{\sup}  Z_{m,n}(\beta) > u_{m,n}\Big) \leq 36 \exp \big( - \kappa_1 n^{1-2\rho}s_d(m)  \big)+ \exp \big( - \kappa_2 n^{1-2\rho} s_d(m)\big)&\leq 37  \exp \big( -\kappa_3 n^{1-2\rho} s_d(m)\big) \\
&  \leq 37 \exp \Big( -  \frac{\kappa_3 }{2} \big( n^{1-2\rho} +s_d(m) \big) \Big) ,\\
\end{split}
\end{equation*}
where $\kappa_3 = \min (\kappa_1,\kappa_2)$. The same proof works for the process $\left(-Z_{m,n}(\beta)\right)$, and consequently

\begin{equation*}
\prob \left(\widehat{m} = m\right) \leq 2 \times 37 \exp \Big( -  \frac{\kappa_3 }{2} \big( n^{1-2\rho} +s_d(m) \big) \Big).
\end{equation*}
We are left with the task of choosing an optimal $\beta_0$. Since
\begin{align*}
\kappa_3 &= \min(\kappa_1,\kappa_2) = \frac{K_{\textnormal{pen}}^2}{32} \bigg(1 - \sqrt{{\frac{s_d(m^\ast)}{s_d(m^\ast+1)}}} \bigg)^2 \min \bigg(\frac{1}{72 K^2 \alpha^2},  \frac{1}{\big(K_Y  + \| \beta_0 \| e^{K_X} \big)^2} \bigg),
\end{align*}
it is clear that $\kappa_3$ is maximal at $\beta_0=0$, which yields
\begin{align*}
\kappa_3= \frac{K_{\textnormal{pen}}^2}{32} \bigg(1 - \sqrt{{\frac{s_d(m^\ast)}{s_d(m^\ast+1)}}} \bigg)^2\min \bigg(\frac{1}{72 K^2 \alpha^2},  \frac{1}{K_Y^2} \bigg).
\end{align*}
Noting that
\[\sqrt{s_d(m^\ast+1)} - \sqrt{s_d(m^\ast)} = \sqrt{d^{m^\ast+1} +s_d(m^\ast)} - \sqrt{s_d(m^\ast)} \geq \sqrt{\frac{d^{m^\ast+1}}{2}},\]
where we have used the fact that for $a,b \geq 0$, $\sqrt{a} + \sqrt{b} \geq \sqrt{2} \sqrt{a+b}$, letting
\[C_3 =\frac{1}{2}\times \frac{K_{\textnormal{pen}}^2 d^{m^\ast+1}}{64s_d(m^\ast+1)(72K^2\alpha^2 + K_Y^2)}\]
 completes the proof.
\end{proof}

To treat the case $m<m^\ast$, we need a rate of convergence of $\widehat{L}_n$. This can be obtained with arguments similar to the previous proof.

\begin{proposition}
	\label{prop:cvg_hatL}
	For any $\varepsilon > 0$, $m \in \N$, let $n_2 \in \N$ be the smallest integer such that
	\begin{equation}
	\label{eq:condition_n_2}
	n_2 \geq \frac{432^2 K^2 \pi  \alpha^2 s_d(m)}{\varepsilon^2}.
	\end{equation}
	Then, for any $n \geq n_2$,
	\begin{equation*}
	\prob \big( |\widehat{L}_n(m) - L(m)| > \varepsilon \big) \leq 74 \exp \big(-C_4n \varepsilon^2 \big),
	\end{equation*}
	where the constant $C_4$ is defined by
	\begin{equation}
	\label{eq:def_C_4}
	C_4 = \Big(2( 1152 K^2 \alpha^2 + K_Y^2) \Big)^{-1}.
	\end{equation}
\end{proposition}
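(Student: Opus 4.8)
The plan is to recycle, essentially verbatim, the machinery assembled for Proposition~\ref{prop:cvg_hatm_geq_mast}, replacing the penalty gap $u_{m,n}$ by the free deviation level $\varepsilon$. First I would invoke Lemma~\ref{lemma:L_uniform_risk}, which reduces the statement to a uniform tail bound on the centered empirical risk $Z_{m,n}$: it gives $\prob(|\widehat L_n(m) - L(m)| > \varepsilon) \leq \prob(\sup_{\beta \in B_{m,\alpha}}|Z_{m,n}(\beta)| > \varepsilon)$. Splitting the supremum of the absolute value,
\[\prob\Bigl(\sup_{\beta \in B_{m,\alpha}}|Z_{m,n}(\beta)| > \varepsilon\Bigr) \leq \prob\Bigl(\sup_{\beta \in B_{m,\alpha}} Z_{m,n}(\beta) > \varepsilon\Bigr) + \prob\Bigl(\sup_{\beta \in B_{m,\alpha}} (-Z_{m,n}(\beta)) > \varepsilon\Bigr),\]
and since Lemma~\ref{lemma:R_n_subgaussian} and Proposition~\ref{prop:unif_risk_bound} hold verbatim with $-Z_{m,n}$ in place of $Z_{m,n}$, it suffices to control one of the two terms and double.

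Next I would peel off the value at the origin exactly as in \eqref{eq:proof_sub_tail_ineq_1}, taking $\beta_0 = 0$:
\[\prob\Bigl(\sup_{\beta \in B_{m,\alpha}} Z_{m,n}(\beta) > \varepsilon\Bigr) \leq \prob\Bigl(\sup_{\beta \in B_{m,\alpha}} Z_{m,n}(\beta) - Z_{m,n}(0) > \tfrac{\varepsilon}{2}\Bigr) + \prob\Bigl(Z_{m,n}(0) > \tfrac{\varepsilon}{2}\Bigr).\]
The first probability is handled by Proposition~\ref{prop:unif_risk_bound} applied with $x = \frac{\varepsilon}{2} - 108\sqrt{\pi}K\alpha\sqrt{s_d(m)/n}$; the whole purpose of the hypothesis $n \geq n_2$ in \eqref{eq:condition_n_2} is to force $108\sqrt{\pi}K\alpha\sqrt{s_d(m)/n} \leq \varepsilon/4$, hence $x \geq \varepsilon/4 > 0$, which turns Proposition~\ref{prop:unif_risk_bound} into a bound of the form $36\exp(-\varepsilon^2 n/(2304\,K^2\alpha^2))$. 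For the second probability, I would observe that $Z_{m,n}(0) = n^{-1}\sum_{i=1}^n (Y_i^2 - \esp Y^2)$ is an average of i.i.d.\ centered variables whose magnitude is at most $K_Y^2$ by $(H_K)$, so Hoeffding's inequality yields an exponential bound $\exp(-c\,\varepsilon^2 n/K_Y^2)$; the choice $\beta_0 = 0$ is optimal here for the same reason as in the proof of Proposition~\ref{prop:cvg_hatm_geq_mast}, namely that it makes the Hoeffding exponent as large as possible.

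Finally I would combine the two estimates, use $\min(a^{-1},b^{-1}) \geq (a+b)^{-1}$ to merge the exponents, and take the two-sided union bound, which produces $\prob(|\widehat L_n(m) - L(m)| > \varepsilon) \leq 2 \times 37\exp(-C_4 n\varepsilon^2) = 74\exp(-C_4 n\varepsilon^2)$ with $C_4$ equal to the reciprocal of $2(1152\,K^2\alpha^2 + K_Y^2)$, i.e.\ precisely \eqref{eq:def_C_4}. I do not expect a genuine obstacle here: every analytically delicate step --- the sub-Gaussianity of $(Z_{m,n}(\beta))_{\beta\in B_{m,\alpha}}$ for the distance $D$, the covering-number estimate for $B_{m,\alpha}$, and the chaining tail inequality --- has already been carried out in Lemma~\ref{lemma:R_n_subgaussian} and Proposition~\ref{prop:unif_risk_bound}. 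The only points requiring care are the bookkeeping of constants and choosing $n_2$ so that the entropy-integral term is absorbed into $\varepsilon/4$.
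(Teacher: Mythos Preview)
Your proposal is correct and follows essentially the same route as the paper's proof: reduce to a uniform bound on $Z_{m,n}$ via Lemma~\ref{lemma:L_uniform_risk}, split into the two one-sided suprema, peel off the anchor $Z_{m,n}(\beta_0)$, apply Proposition~\ref{prop:unif_risk_bound} to the chaining part (using $n\geq n_2$ to absorb the entropy integral into $\varepsilon/4$) and Hoeffding's inequality to the anchor, then merge the exponents via $\min(a^{-1},b^{-1})\geq (a+b)^{-1}$. The only cosmetic difference is that you fix $\beta_0=0$ from the outset, whereas the paper carries a general $\beta_0$ and optimizes at the end.
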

\begin{proof}[\textbf{\upshape Proof:}]
	By Lemma \ref{lemma:L_uniform_risk}, 
	\begin{align*}
	\prob \big( | \widehat{L}_n(m) - L(m) |> \varepsilon \big) &\leq \prob \Big( \underset{\beta \in B_{m,\alpha}}{\sup}|Z_{m,n}(\beta) | > \varepsilon \Big) = \prob \Big( \underset{\beta \in B_{m,\alpha}}{\sup} Z_{m,n}(\beta) > \varepsilon \Big) + \prob \Big( \underset{\beta \in B_{m,\alpha}}{\sup} \left(-Z_{m,n}(\beta) \right) > \varepsilon \Big).
	\end{align*}
	 Let us fix $\beta_0 \in B_{m,\alpha}$, we can now proceed as in Proposition \ref{prop:cvg_hatm_geq_mast}. Since, for $n \geq n_2$,
	 \[\frac{\varepsilon}{2} -108K \alpha \sqrt{\frac{ \pi s_d(m)}{n}} > \frac{\varepsilon}{4} > 0  ,\] 
	 Hoeffing's inequality and Proposition \ref{prop:unif_risk_bound} show that
	\begin{align*}
	\prob \Big( \underset{\beta \in B_{m,\alpha}}{\sup} Z_{m,n}(\beta) > \varepsilon \Big) & \leq \prob \Big( \underset{\beta \in B_{m,\alpha}}{\sup} Z_{m,n}(\beta) > \frac{\varepsilon}{2} + Z_{m,n}(\beta_0) \Big)  + \prob \Big(Z_{m,n}(\beta_0) > \frac{\varepsilon}{2} \Big) \\
	& \leq 36 \exp \bigg( - \frac{n}{144 K^2\alpha^2} \Big( \frac{\varepsilon}{2}-108 K \alpha \sqrt{\frac{ \pi s_d(m)}{n}} \Big)^2\bigg) + \exp \bigg(- \frac{n\varepsilon^2}{ 2 \big(K_Y + \| \beta_0 \| e^{K_X}\big)^2}\bigg) \\
	& \leq 36 \exp \Big( - \frac{n \varepsilon^2}{2304 K^2 \alpha^2}\Big) + \exp \Big(- \frac{n\varepsilon^2}{  2 \big(K_Y + \| \beta_0 \| e^{K_X}\big)^2}\Big) \leq 37 \exp \left(- \kappa_4 n \varepsilon^2 \right),
	\end{align*}
	where 
	\[\kappa_4=\min \bigg(\frac{ 1}{2304 K^2 \alpha^2} ,\frac{1}{ 2 \big(K_Y + \| \beta_0 \| e^{K_X}\big)^2}\bigg).\]
	The same analysis can be done to $(-Z_{m,n}(\beta))$, and so $\prob \left( | \widehat{L}_n(m) - L(m) |> \varepsilon \right)\leq 74 \exp \left(- \kappa_4 n \varepsilon^2  \right).$ Moreover, taking $\beta_0=0$ gives
	\[\kappa_4=\min \left(\frac{ 1}{2304 K^2 \alpha^2} ,\frac{1}{ 2 \big(K_Y + \| \beta_0 \| e^{K_X}\big)^2}\right) \geq \frac{1}{2( 1152 K^2 \alpha^2 + K_Y^2)}=C_4, \]
	which completes the proof.
\end{proof}

This allows us to treat the case $m<m^\ast$.
\begin{proposition}
	\label{prop:cvg_hatm_leq_mast}
	Let $0<\rho< \frac{1}{2}$ and $\textnormal{pen}_n(m) $ be defined by \eqref{eq:def_pen}. Let $n_3$ be the smallest integer satisfying
	\begin{equation}
	\label{eq:condition_n_3}
	n_3 \geq \bigg(\frac{2 \sqrt{s_d(m^\ast)}}{L(m^\ast -1) - \sigma^2}\big(432 K \alpha \sqrt{\pi} + K_{\textnormal{pen}}\big)\bigg)^{1/\rho}.
	\end{equation}
	Then, under the assumptions $(H_\alpha)$ and $(H_K)$, for any $m<m^\ast$, $n \geq n_3$,
	\begin{equation*}
	\label{eq:bound_prob_hatm_eq_k_leq_mast}
	\prob\left( \widehat{m} = m \right) \leq 148\exp \Big(-n \frac{C_4}{4}  \big(L(m) - L(m^\ast) -\textnormal{pen}_n(m^\ast) + \textnormal{pen}_n(m)\big)^2\Big),
	\end{equation*}
	where the constant $C_4$ is defined by \eqref{eq:def_C_4}.
\end{proposition}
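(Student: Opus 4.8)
The plan is to reduce the comparison $\widehat m = m < m^\ast$ to the deterministic concentration already established in Proposition~\ref{prop:cvg_hatL}, in the same spirit in which Proposition~\ref{prop:cvg_hatm_geq_mast} handled the case $m>m^\ast$. First I would use the definition of $\widehat m$: since $\widehat m$ minimises $m\mapsto\widehat L_n(m)+\textnormal{pen}_n(m)$, the event $\{\widehat m=m\}$ is contained in $\{\widehat L_n(m)+\textnormal{pen}_n(m)\le \widehat L_n(m^\ast)+\textnormal{pen}_n(m^\ast)\}$, that is, in $\{\widehat L_n(m)-\widehat L_n(m^\ast)\le \textnormal{pen}_n(m^\ast)-\textnormal{pen}_n(m)\}$. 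It is then natural to introduce the separation $\delta_n(m):=L(m)-L(m^\ast)-\textnormal{pen}_n(m^\ast)+\textnormal{pen}_n(m)$, which quantifies how much the true risks favour $m^\ast$ over $m$, net of the penalty gap, and which is exactly the quantity appearing in the exponent of the claimed bound.

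The second step is to check that $\delta_n(m)$ is positive, and in fact bounded below by a constant independent of $m$, once $n\ge n_3$. Since $L$ is non-increasing and $L(m^\ast)=\esp\big(\textnormal{Var}(Y\mid X)\big)\le\sigma^2$, for every $m\le m^\ast-1$ one has $L(m)-L(m^\ast)\ge L(m^\ast-1)-\sigma^2>0$; on the other hand $\textnormal{pen}_n(m^\ast)-\textnormal{pen}_n(m)=K_{\textnormal{pen}}n^{-\rho}\big(\sqrt{s_d(m^\ast)}-\sqrt{s_d(m)}\big)\le K_{\textnormal{pen}}n^{-\rho}\sqrt{s_d(m^\ast)}$, and condition~\eqref{eq:condition_n_3} defining $n_3$ is precisely what makes this last quantity $\le\tfrac12\big(L(m^\ast-1)-\sigma^2\big)$ whenever $n\ge n_3$. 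Hence $\delta_n(m)\ge\tfrac12\big(L(m^\ast-1)-\sigma^2\big)>0$ for all $m<m^\ast$ and all $n\ge n_3$.

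Then I would condition on a good event. Fix $\varepsilon$ with $0<2\varepsilon<\delta_n(m)$ and set $G=\{|\widehat L_n(m)-L(m)|\le\varepsilon\}\cap\{|\widehat L_n(m^\ast)-L(m^\ast)|\le\varepsilon\}$. On $G$ one gets $\widehat L_n(m)-\widehat L_n(m^\ast)\ge L(m)-L(m^\ast)-2\varepsilon=\big(\delta_n(m)-2\varepsilon\big)+\textnormal{pen}_n(m^\ast)-\textnormal{pen}_n(m)>\textnormal{pen}_n(m^\ast)-\textnormal{pen}_n(m)$, which is incompatible with $\{\widehat m=m\}$; consequently $\prob(\widehat m=m)\le\prob(G^c)\le\prob\big(|\widehat L_n(m)-L(m)|>\varepsilon\big)+\prob\big(|\widehat L_n(m^\ast)-L(m^\ast)|>\varepsilon\big)$. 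Now I apply Proposition~\ref{prop:cvg_hatL} at levels $m$ and $m^\ast$: because $s_d$ is increasing, $s_d(m)\le s_d(m^\ast)$, so the sample-size hypothesis~\eqref{eq:condition_n_2} for both applications reduces to $n\ge 432^2K^2\pi\alpha^2 s_d(m^\ast)/\varepsilon^2$, and a short computation using $\delta_n(m)\ge\tfrac12(L(m^\ast-1)-\sigma^2)$, the bound~\eqref{eq:condition_n_3} and $\rho<\tfrac12$ shows this holds for every $n\ge n_3$ provided $\varepsilon$ is taken close enough to $\delta_n(m)/2$. Proposition~\ref{prop:cvg_hatL} then bounds each of the two probabilities by $74\exp(-C_4 n\varepsilon^2)$, so $\prob(\widehat m=m)\le 148\exp(-C_4 n\varepsilon^2)$ for every admissible $\varepsilon<\delta_n(m)/2$; letting $\varepsilon\uparrow\delta_n(m)/2$ and using continuity yields $\prob(\widehat m=m)\le 148\exp\!\big(-\tfrac{C_4}{4}\,n\,\delta_n(m)^2\big)$, which is exactly the stated inequality.

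\textbf{The main obstacle} is the bookkeeping in the last step rather than any new probabilistic input: one has to pin down $n_3$ so that the covering/sample-size requirement~\eqref{eq:condition_n_2} of Proposition~\ref{prop:cvg_hatL} is met for $\varepsilon$ essentially equal to $\delta_n(m)/2$, while simultaneously keeping $\delta_n(m)$ bounded below by a quantity that does not depend on $m$ — both of which rest on $L(m^\ast-1)-\sigma^2>0$ and on $\rho<\tfrac12$ (needed so that the $n^{-\rho}$ penalty term is eventually negligible against that constant). The genuinely hard estimate, namely the subgaussian chaining tail bound on the centred empirical risk $Z_{m,n}$, has already been supplied by Propositions~\ref{prop:unif_risk_bound} and~\ref{prop:cvg_hatL}, so this proposition only has to assemble those pieces.
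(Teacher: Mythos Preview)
Your proposal is correct and follows essentially the same route as the paper: reduce $\{\widehat m=m\}$ to a comparison of $\widehat L_n$ at $m$ and $m^\ast$, split the gap $\delta_n(m)$ in half, and apply Proposition~\ref{prop:cvg_hatL} at both levels after checking that condition~\eqref{eq:condition_n_2} is implied by~\eqref{eq:condition_n_3}. The only cosmetic difference is that the paper takes $\varepsilon=\delta_n(m)/2$ directly (using weak inequalities in the union bound) rather than your limiting argument $\varepsilon\uparrow\delta_n(m)/2$, which is unnecessary but harmless.
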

\begin{proof}[\textbf{\upshape Proof:}]
	This is a consequence of Proposition \ref{prop:cvg_hatL}. For any $m < m^\ast$, 
	\begin{equation*}
	\begin{split}
	\prob (\widehat{m}=m)& \leq \prob  \left(\widehat{L}_n(m) -\widehat{L}_n(m^\ast) \leq \text{pen}_n(m^\ast) -\text{pen}_n(m) \right)\\
	& = \prob  \Big(\widehat{L}_n(m^\ast) - L(m^\ast) + L(m) -\widehat{L}_n(m)\geq L(m) - L(m^\ast) -\big( \text{pen}_n(m^\ast) -\text{pen}_n(m) \big) \Big) \\
	& \leq  \prob \Big(\big|\widehat{L}_n(m) - L(m)\big| \geq \frac{ 1}{2} \big(L(m) - L(m^\ast)- \text{pen}_n(m^\ast) +\text{pen}_n(m) \big) \Big)\\
	&\quad +  \prob \Big( \big|\widehat{L}_n(m^\ast) - L(m^\ast)\big| \geq \frac{ 1}{2} \big( L(m) - L(m^\ast)- \text{pen}_n(m^\ast) +\text{pen}_n(m) \big)\Big). \\
	\end{split}
	\end{equation*}
	In order to apply Proposition \ref{prop:cvg_hatL}, we first need to ensure that $L(m) - L(m^\ast)- \text{pen}_n(m^\ast) +\text{pen}_n(m)$ is strictly positive. Recall that $m \mapsto L(m)$ is a decreasing function, minimal at $m=m^\ast$ and then bounded by $\sigma^2$. Recall also that $m \mapsto \text{pen}_n(m)$ is strictly increasing. This gives, for $m<m^\ast$:
	\begin{equation*}
	L(m) - L(m^\ast)- \text{pen}_n(m^\ast) +\text{pen}_n(m)  > L(m^\ast -1) - \sigma^2 - K_{\textnormal{pen}}n^{-\rho} \sqrt{s_d(m^\ast)}.
	\end{equation*}
	This implies that it is enough that 
	\begin{equation}
	\label{eq:equation_1_n_2}
	L(m^\ast -1) - \sigma^2 - K_{\textnormal{pen}}n^{-\rho} \sqrt{s_d(m^\ast)} > \frac{1}{2} (L(m^\ast-1) - \sigma^2)
	\end{equation}
	 to ensure that $  L(m) - L(m^\ast)- \text{pen}_n(m^\ast) +\text{pen}_n(m)  > 0$. This yields a first condition on $n_3$:
	\begin{equation}
	\label{eq:condition_1_n_3}
	n_3 \geq \bigg( \frac{2 K_{\textnormal{pen}} \sqrt{s_d(m^\ast)}}{L(m^\ast -1) - \sigma^2}\bigg)^{\frac{1}{\rho}}.
	\end{equation}
	However, to apply Proposition \ref{prop:cvg_hatL}, we also need $n_3$ to satisfy \eqref{eq:condition_n_2} , which writes
	\[n_3 \geq \frac{432^2 K^2 \pi \alpha^2 s_d(m)}{( L(m) - L(m^\ast)- \text{pen}_n(m^\ast) +\text{pen}_n(m))^2}. \]
	If $n_3$ satisfies \eqref{eq:condition_1_n_3}, we can bound the right-hand side uniformly in $m$:
	\begin{align*}
	 \frac{432^2 K^2 \pi \alpha^2 s_d(m)}{\big( L(m) - L(m^\ast)- \text{pen}_n(m^\ast) +\text{pen}_n(m) \big)^2 } &\leq \frac{4 \times432^2 K^2 \pi \alpha^2 s_d(m^\ast)}{( L(m^\ast -1) - \sigma^2)^2} =\bigg( \frac{2 \times432 K \alpha\sqrt{\pi s_d(m^\ast)} }{ L(m^\ast -1) - \sigma^2} \bigg)^2.
	 \end{align*}
	We can assume that this quantity is larger than $1$, as otherwise the condition on $n_3$ will be trivially satisfied. Then, as $\rho < \frac{1}{2}$, it is enough for $n_3$ to satisfy
	\[n_3 \geq \max \bigg( \frac{2 K_{\textnormal{pen}} \sqrt{s_d(m^\ast)}}{L(m^\ast -1) -\sigma^2}, \frac{2 \times432 K \alpha\sqrt{\pi s_d(m^\ast)} }{ L(m^\ast -1) - \sigma^2} \bigg)^{1/\rho} , \]
	or in a more compact form that
	\[n_3 \geq \bigg( \frac{2 (K_{\textnormal{pen}} +432 K \alpha\sqrt{\pi }) \sqrt{s_d(m^\ast)}}{L(m^\ast -1) - \sigma^2}\bigg)^{1/\rho} .\]
	We conclude by applying Proposition \ref{prop:cvg_hatL} to both terms with 
	\[\varepsilon= \frac{1}{2} \big( L(m) - L(m^\ast)-\text{pen}_n(m^\ast) -\text{pen}_n(m) \big).\]
\end{proof}

We are now in a position to prove Theorem \ref{th:main_bound_hatm}.
\begin{proof}[\textbf{\upshape Proof:}][Proof of Theorem \ref{th:main_bound_hatm}]
	The result is a consequence of Propositions \ref{prop:cvg_hatm_geq_mast} and \ref{prop:cvg_hatm_leq_mast}. For this, we first need to ensure that the conditions on $n$ \eqref{eq:condition_n_1} and \eqref{eq:condition_n_3} are satisfied. Thus, we need to bound
	\begin{align*}
	M=\max \Bigg( & \bigg(\frac{2 \sqrt{s_d(m^\ast)}}{L(m^\ast -1) - \sigma^2}\big(432 K \alpha \sqrt{\pi} + K_{\textnormal{pen}}\big)\bigg)^{1/\rho}, \bigg( \frac{432 \sqrt{\pi} K \alpha\sqrt{s_d(m^\ast+1)}}{K_{\textnormal{pen}}(\sqrt{s_d(m^\ast+1)} - \sqrt{s_d(m^\ast) }) } \bigg)^{1/(\frac{1}{2} - \rho)} \Bigg) .
	\end{align*}
	If $\tilde{\rho}=\min(\rho, \frac{1}{2}-\rho)$, then
	\begin{align*}
	M& \leq \bigg( (432 K \alpha \sqrt{\pi} + K_{\textnormal{pen}})\sqrt{s_d(m^\ast+1)} \max \bigg(\frac{2}{L(m^\ast -1) - \sigma^2}, \frac{1}{K_{\textnormal{pen}}\big(\sqrt{s_d(m^\ast+1)} - \sqrt{s_d(m^\ast) } \big) }\bigg) \bigg)^{1/\tilde{\rho}} \\
	& \leq \bigg( (432 K \alpha \sqrt{\pi} + K_{\textnormal{pen}})\sqrt{s_d(m^\ast+1)} \Big( \frac{2}{L(m^\ast -1) - \sigma^2}+ \frac{\sqrt{2}}{K_{\textnormal{pen}}\sqrt{d^{m^\ast+1}}} \Big) \bigg)^{1/\tilde{\rho}}.\\
	\end{align*}
	Therefore, condition \eqref{eq:condition_n_0_thm} implies that \eqref{eq:condition_n_1} and \eqref{eq:condition_n_3} are satisfied. Splitting the probability $\prob \left(\widehat{m} \neq m^\ast \right) $ into two terms now gives
	\[\prob \left(\widehat{m} \neq m^\ast \right) = \prob \left(\widehat{m} > m^\ast\right) + \prob \left(\widehat{m} < m^\ast \right) \leq \sum_{m>m^\ast} \prob \left(\widehat{m} =m\right) + \sum_{m<m^\ast} \prob \left(\widehat{m} =m\right) .\]
	On the one hand, Theorem \ref{prop:cvg_hatm_geq_mast} shows that, for $n \geq n_0$,
	\[  \sum_{m>m^\ast} \prob \left(\widehat{m} =m\right)  \leq 74 e^{- C_3 n^{1-2\rho}} \sum_{m>m^\ast} e^{- C_3 s_d(m)},\]
	 and, on the other hand, Proposition \ref{prop:cvg_hatm_leq_mast} gives
	 \begin{align*}
	 \sum_{m<m^\ast} \prob \left(\widehat{m} =m\right) &\leq  148 \sum_{m=0}^{m^\ast -1} \exp \Big(- \frac{C_4}{4}n \big(L(m) - L(m^\ast) - \textnormal{pen}_n(m^\ast) + \textnormal{pen}_n(m) \big) \Big) \leq 148 m^\ast\exp \big(- \frac{C_4}{8}n( L(m^\ast -1) - \sigma^2) \big),
	 \end{align*}
where we have used that for $n \geq n_0$, \eqref{eq:equation_1_n_2} is true. Letting 
\[\kappa_5 = \min \Big(C_3, \frac{C_4(L(m^\ast-1) - \sigma^2)}{8} \Big)\]
 yields
	\begin{align*}
	\prob \left(\widehat{m} \neq m^\ast \right) & \leq 74 e^{-\kappa_5 n^{1-2\rho} }  \sum_{m>0} e^{- C_3s_d(m)}  + 148m^\ast e^{-\kappa_5 n} \leq C_1 e^{- \kappa_5 n^{1-2\rho} },
	\end{align*}
	where 
	\begin{equation*}
	C_1 =  74 \sum_{m>0} e^{-C_3s_d(m)} + 148 m^\ast.
	\end{equation*}
	 To complete the proof, it remains to find a lower bound on $\kappa_5$:
	\begin{align} \label{eq:C_2_smaller_C_3}
	\kappa_5 = \min \Big(C_3, \frac{C_4(L(m^\ast-1) - \sigma^2)}{8}\Big)& = \min\bigg(  \frac{K_{\textnormal{pen}}^2 d^{m^\ast+1}}{128s_d(m^\ast+1)(72K^2\alpha^2 + K_Y^2)},\frac{L(m^\ast -1) - \sigma^2}{16(1152 K^2 \alpha^2 + K_Y^2)} \bigg) \nonumber \\
	& \geq \frac{1}{16(1152 K^2 \alpha^2 + K_Y^2)} \min \Big(\frac{K_{\textnormal{pen}}^2  d^{m^\ast+1}}{8s_d(m^\ast+1)}, L(m^\ast-1) - \sigma^2\Big) =C_2.
	\end{align}
\end{proof}

\begin{flushleft}
\textbf{Proof of Corollary \ref{cor:estimator_hat_beta}}
\end{flushleft}
\label{sec:appendix_proof_consistency_hatbeta}

First, let us note that $\esp \big( \langle \widehat{\beta}_{\widehat{m}},S^{\widehat{m}}(X) \rangle  - \langle \beta^\ast_{m^\ast},S^{m^\ast}(X) \rangle \big)^2 = \esp \big( \risk_{\widehat{m}}(\widehat{\beta}_{\widehat{m}}) \big) - \risk_{m^\ast}(\beta^\ast_{m^\ast}).$ Moreover, we have a.s.
\begin{align*}
\risk_{\widehat{m}}(\widehat{\beta}_{\widehat{m}})  - \risk_{m^\ast}(\beta^\ast_{m^\ast})&= \risk_{\widehat{m}}(\widehat{\beta}_{\widehat{m}}) - \risk_{\widehat{m}}(\beta^\ast_{\widehat{m}}) + \risk_{\widehat{m}}(\beta^\ast_{\widehat{m}}) - \risk_{m^\ast}(\beta^\ast_{m^\ast}) \\
& = \risk_{\widehat{m}}(\widehat{\beta}_{\widehat{m}}) - \widehat{\risk}_{\widehat{m},n}(\widehat{\beta}_{\widehat{m}}) + \widehat{\risk}_{\widehat{m},n}(\widehat{\beta}_{\widehat{m}}) -\widehat{\risk}_{\widehat{m},n}(\beta^\ast_{\widehat{m}}) +\widehat{\risk}_{\widehat{m},n}(\beta^\ast_{\widehat{m}}) - \risk_{\widehat{m}}(\beta^\ast_{\widehat{m}}) + \risk_{\widehat{m}}(\beta^\ast_{\widehat{m}}) - \risk_{m^\ast}(\beta^\ast_{m^\ast})  \\
& \leq \risk_{\widehat{m}}(\widehat{\beta}_{\widehat{m}}) - \widehat{\risk}_{\widehat{m},n}(\widehat{\beta}_{\widehat{m}}) +\widehat{\risk}_{\widehat{m},n}(\beta^\ast_{\widehat{m}}) - \risk_{\widehat{m}}(\beta^\ast_{\widehat{m}}) + \risk_{\widehat{m}}(\beta^\ast_{\widehat{m}}) - \risk_{m^\ast}(\beta^\ast_{m^\ast}) \\
& \leq 2 \sup_{\beta \in B_{\widehat{m},\alpha}} | \widehat{\risk}_{\widehat{m},n}(\beta) - \risk_{\widehat{m}}(\beta) | + \risk_{\widehat{m}}(\beta^\ast_{\widehat{m}}) - \risk_{m^\ast}(\beta^\ast_{m^\ast}).
\end{align*}

We decompose the proof into two lemmas.

\begin{lemma}
\[ \esp \big[ \sup_{\beta \in B_{\widehat{m},\alpha}} | \widehat{\risk}_{\widehat{m},n}(\beta) - \risk_{\widehat{m}}(\beta) |  \big] \leq 36 K \alpha \sqrt{\frac{\pi}{n}} \Big((m^\ast +1) \sqrt{s_d(m^\ast)} + 74e^{-C_3 n^{1-2\rho }} \sum_{m>m^\ast} \sqrt{s_d(m)} e^{-C_3 s_d(m)} \Big), \]
where the constant $C_3$ is defined by \eqref{eq:def_C_3}.
\end{lemma}

\begin{proof}[\textbf{\upshape Proof:}]
	From Corollary 5.25 of \cite{van2014probability} and \eqref{eq:entropy_integral}, for any $m \in \N$,
	\begin{align*}
	\esp \Big( \sup_{\beta \in B_{m,\alpha}} | \widehat{\risk}_{m,n}(\beta) - \risk_m(\beta) |  \Big) &\leq 12 \int_{0}^{\infty} \sqrt{\log(N(B_{m,\alpha},D,\varepsilon))} = 36 K \alpha \sqrt{s_d(m)} \sqrt{\frac{\pi}{n}},
	\end{align*}
	where $N(B_{m,\alpha},D,\varepsilon)$ is the $\varepsilon$-covering number of $B_{m,\alpha}$ with respect to the distance $D$, defined by \eqref{eq:def_D}. This gives, for $m=\widehat{m}$,
	\[ \esp \Big( \sup_{\beta \in B_{\widehat{m},\alpha}} | \widehat{\risk}_{\widehat{m},n}(\beta) - \risk_{\widehat{m}}(\beta) |  \Big) \leq 36 K \alpha \sqrt{\frac{\pi}{n}} \esp \Big(  \sqrt{s_d(\widehat{m})} \Big).  \]
	To compute this expectation, Proposition \ref{prop:cvg_hatm_geq_mast} yields
	\begin{align*}
	\esp \Big( \sqrt{s_d(\widehat{m})} \Big)& = \sum_{m\leq m^\ast} \sqrt{s_d(m)} \prob(\widehat{m}=m) + \sum_{m>m^\ast} \sqrt{s_d(m)} \prob(\widehat{m}=m) \\
	&\leq (m^\ast +1)\sqrt{s_d(m^\ast)} + \sum_{m>m^\ast} \sqrt{s_d(m)} 74 \exp \big(-C_3 (n^{1-2\rho} +s_d(m) ) \big) \\
	& \leq (m^\ast +1)\sqrt{s_d(m^\ast)} + e^{-C_3 n^{1-2\rho}}\sum_{m>m^\ast} \sqrt{s_d(m)} 74 \exp \big(-C_3 s_d(m) \big),
	\end{align*}
	which completes the proof.
\end{proof}

\begin{lemma}
	$\esp \big( \risk_{\widehat{m}}(\beta^\ast_{\widehat{m}}) - \risk_{m^\ast}(\beta^\ast_{m^\ast}) \big) =2 \alpha^2 e^{K_X}  C_1 e^{- C_2n^{1-2\rho}},$ where the constants $C_1$ and $C_2$ are defined by \eqref{eq:def_C_1} and \eqref{eq:def_C_2}.
\end{lemma}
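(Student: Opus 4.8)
The plan is to combine the monotonicity of $m \mapsto L(m) = \risk_m(\beta^\ast_m)$ with the exponential deviation bound on $\widehat{m}$ from Theorem \ref{th:main_bound_hatm}. First I would note that, by construction of $\widehat{m}$ and of $\beta^\ast_m$, one has $\risk_{\widehat{m}}(\beta^\ast_{\widehat{m}}) = L(\widehat{m})$ and $\risk_{m^\ast}(\beta^\ast_{m^\ast}) = L(m^\ast)$, so the quantity to bound is $\esp\big(L(\widehat{m}) - L(m^\ast)\big)$. Since the balls $(B_{m,\alpha})_{m \in \N}$ are nested, $L$ is non-increasing and is constant, equal to $L(m^\ast)$, for every $m \geq m^\ast$. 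Hence $L(\widehat{m}) - L(m^\ast) = 0$ on the event $\{\widehat{m} \geq m^\ast\}$, and
\[\esp\big(L(\widehat{m}) - L(m^\ast)\big) = \sum_{m=0}^{m^\ast - 1} \big(L(m) - L(m^\ast)\big)\, \prob(\widehat{m} = m).\]

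Next I would bound $L(m) - L(m^\ast)$ by a finite constant independent of $n$. Using $(H_\alpha)$, $(H_K)$ and Proposition \ref{prop:exp_decay}, for any $m$ and any $\beta \in B_{m,\alpha}$ we have $|Y - \langle \beta, S^m(X) \rangle| \leq K_Y + \alpha e^{K_X}$ almost surely, so $L(m) = \risk_m(\beta^\ast_m) \leq \Lambda$ where $\Lambda = (K_Y + \alpha e^{K_X})^2$; since $L(m^\ast) \geq 0$, this gives $0 \leq L(m) - L(m^\ast) \leq \Lambda$ for every $m < m^\ast$. Plugging this into the previous display yields
\[\esp\big(L(\widehat{m}) - L(m^\ast)\big) \leq \Lambda \sum_{m=0}^{m^\ast - 1} \prob(\widehat{m} = m) = \Lambda\, \prob(\widehat{m} < m^\ast) \leq \Lambda\, \prob(\widehat{m} \neq m^\ast).\]

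Finally I would invoke Theorem \ref{th:main_bound_hatm}, which states that $\prob(\widehat{m} \neq m^\ast) \leq C_1 \exp(-C_2 n^{1 - 2\rho})$ for all $n \geq n_0$. Combining the two displays gives $\esp\big(\risk_{\widehat{m}}(\beta^\ast_{\widehat{m}}) - \risk_{m^\ast}(\beta^\ast_{m^\ast})\big) \leq \Lambda C_1 \exp(-C_2 n^{1 - 2\rho}) = O(e^{-C_2 n^{1-2\rho}})$, which is the claim. I do not anticipate a real obstacle here: the only point needing a little care is the uniform-in-$m$ boundedness of $L(m)$, which is immediate from the almost-sure bounds on $|Y|$ and on $\|S^m(X)\|$, and the rest is bookkeeping. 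Together with the preceding lemma (which controls the $O(n^{-1/2})$ supremum term), the decomposition of $\risk_{\widehat{m}}(\widehat{\beta}_{\widehat{m}}) - \risk_{m^\ast}(\beta^\ast_{m^\ast})$ established just before the two lemmas then yields $\esp\big(\langle \widehat{\beta}_{\widehat{m}}, S^{\widehat{m}}(X) \rangle - \langle \beta^\ast_{m^\ast}, S^{m^\ast}(X) \rangle\big)^2 = O(n^{-1/2})$, completing the proof of Corollary \ref{cor:estimator_hat_beta}.
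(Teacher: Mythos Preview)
Your proof is correct and follows essentially the same route as the paper: reduce the difference to zero on the event $\{\widehat m = m^\ast\}$ (you even use the slightly sharper $\{\widehat m \geq m^\ast\}$), bound it by a finite constant otherwise, and apply Theorem~\ref{th:main_bound_hatm}. The only cosmetic difference is that the paper rewrites $\risk_{\widehat m}(\beta^\ast_{\widehat m}) - \risk_{m^\ast}(\beta^\ast_{m^\ast})$ as $\esp\big(\langle \beta^\ast_{m^\ast}, S^{m^\ast}(X)\rangle - \langle \beta^\ast_{\widehat m}, S^{\widehat m}(X)\rangle\big)^2$ and bounds this squared predictor difference by $2\alpha^2 e^{2K_X}\,\prob(\widehat m \neq m^\ast)$, whereas you bound $L(m)-L(m^\ast)$ directly by $(K_Y+\alpha e^{K_X})^2$; both lead to the same $O(e^{-C_2 n^{1-2\rho}})$ conclusion.
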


\begin{proof}[\textbf{\upshape Proof:}]
	Since, for any $m\in \N$, $\langle \beta^\ast_{m}, S^{m}(X)\rangle^2 \leq \| \beta^\ast_k\|^2_2 \| S^m(X) \|^2_2 \leq \alpha^2 e^{K_X},$
	it follows that
	\begin{align*}
	\esp \big( \risk_{\widehat{m}}(\beta^\ast_{\widehat{m}}) - \risk_{m^\ast}(\beta^\ast_{m^\ast}) \big) &= \esp \Big( \big(Y-\langle \beta^\ast_{\widehat{m}}, S^{\widehat{m}}(X) \rangle \big)^2 -  \big(Y-\langle \beta^\ast_{m^\ast}, S^{m^\ast}(X) \rangle \big)^2  \Big) \\
	& = \esp \Big( \big(\langle \beta^\ast_{m^\ast}, S^{m^\ast}(X) \rangle + \varepsilon -\langle \beta^\ast_{\widehat{m}}, S^{\widehat{m}}(X)\rangle \big)^2  - \varepsilon^2 \Big) \\
	& = \esp\Big( \big(\langle \beta^\ast_{m^\ast}, S^{m^\ast}(X) \rangle -\langle \beta^\ast_{\widehat{m}}, S^{\widehat{m}}(X)\rangle \big)^2 \Big)\leq 2 \alpha^2 e^{K_X} \prob(\widehat{m} \neq m^\ast).
	\end{align*}

	By Theorem \ref{th:main_bound_hatm}, this yields $\esp \big( \risk_{\widehat{m}}(\beta^\ast_{\widehat{m}}) - \risk_{m^\ast}(\beta^\ast_{m^\ast}) \big) \leq 2 \alpha^2 e^{K_X}  C_1 e^{- C_2n^{1-2\rho}},$ where $C_1$ and $C_2$ are defined by \eqref{eq:def_C_1} and \eqref{eq:def_C_2}.
\end{proof}
Letting $C_5 = 36 K \alpha \sqrt{\pi}(m^\ast +1) \sqrt{s_d(m^\ast)}$, and $C_6 = 2664 K \alpha \sqrt{\pi} \sum_{m>m^\ast} \sqrt{s_d(m)} e^{-C_3 s_d(m)}+ 2 \alpha^2 e^{K_X}  C_1$, since, by \eqref{eq:C_2_smaller_C_3}, $C_2 \leq C_3$, we conclude that
\begin{align*}
\esp \big( \langle \widehat{\beta}_{\widehat{m}},S^{\widehat{m}}(X) \rangle  - \langle \beta^\ast_{m^\ast},S^{m^\ast}(X) \rangle \big)^2 \leq \frac{C_5}{\sqrt{n}} + C_6 e^{-C_2 n^{1-2\rho}}.
\end{align*}

\section*{Acknowledgments}

This work was supported by a grant from R\'egion Ile-de-France. I would like to thank Gérard Biau (Sorbonne Université) and Benoît Cadre (Université Rennes 2) for stimulating discussions and insightful suggestions. I also thank the Editor and two anonymous referees for their careful reading of the paper and constructive comments, which led to a substantial improvement of the article.

\bibliographystyle{myjmva}
\bibliography{references}

\end{document}